\begin{document}

\title{Second-order fast-slow dynamics of non-ergodic Hamiltonian systems: Thermodynamic interpretation and simulation}

\author[1]{Matthias Klar\footnote{M.Klar@bath.ac.uk}}
\author[1]{Karsten Matthies\footnote{K.Matthies@bath.ac.uk}}
\author[2]{Celia Reina\footnote{creina@seas.upenn.edu}}
\author[3]{Johannes Zimmer\footnote{jz@ma.tum.de}}
\affil[1]{Department of Mathematical Sciences, University of Bath, Bath BA2 7AY, United Kingdom}
\affil[2]{Department of Mechanical Engineering and Applied Mechanics, University of Pennsylvania, Philadelphia PA 19104, USA}
\affil[3]{Department of Mathematics, Technische Universit\"at M\"unchen, Boltzmannstr.~3, 85748 Garching, Germany}

\maketitle

\begin{abstract}
    A class of fast--slow Hamiltonian systems with potential \(U_\eps\) describing the interaction of non-ergodic
    fast and slow degrees of freedom is studied. The parameter \(\eps\) indicates the typical timescale ratio of
    the fast and slow degrees of freedom. It is known that the Hamiltonian system converges for \(\eps\to0\) to
    a homogenised Hamiltonian system. We study the situation where \(\eps\) is small but positive.
    First, we rigorously derive the second-order corrections to the homogenised (slow)
    degrees of freedom. They can be decomposed into explicitly given terms that oscillate rapidly around zero
    and terms that trace the average motion of the corrections, which are given as the solution to an
    inhomogeneous linear system of differential equations.
    Then, we analyse the energy of the fast degrees of freedom expanded to second-order from a thermodynamic
    point of view. In particular, we define and expand to second-order a temperature, an entropy and external
    forces and show that they satisfy to leading-order, as well as on average to second-order, thermodynamic
    energy relations akin to the first and second law of thermodynamics.
    Finally, we analyse for a specific fast--slow Hamiltonian system the second-order
    asymptotic expansion of the slow degrees of freedom from a numerical point of view. Their approximation quality for short and long time frames and their total computation
    time are compared with those of the solution to the original fast--slow Hamiltonian system of similar
    accuracy.
\end{abstract}

\paragraph{Keywords:} Two-scale Hamiltonian, Asymptotic expansion, Coarse-graining, Far-from-equilibrium, Many-degrees-of-freedom interaction

\section{Introduction}

Many scientists in physics, chemistry and materials science resort to computer
simulations to study real-world
dynamical processes. These simulations open up the possibility to quickly and
inexpensively iterate through
different experimental setups, thus hugely reducing cost in the form of time and
labour and allow a level of
insight into small- and large-scale processes that were out of reach decades
ago. In chemical physics or
materials science, for example, scientists frequently analyse large scale
molecular dynamics simulations to
predict properties of large dynamical systems based on mathematical models that
aim to describe the dynamical
evolution of the constituent particles. In simulating these systems, one
typically encounters two problems
that severely impede their scalability. Firstly, the simulation of molecular
structures requires a step size
in the numerical integration scheme that ranges in the order of \(10^{-15}\)
seconds to accurately replicate the fast
molecular vibrations in the system. Secondly, even small macroscopic systems of
interest require the
integration of a potentially large number of particles. Even more, the two
problems often compound and pose a
challenging obstacle in the scalability and utility of molecular dynamic
simulations.

From an applications point of view, one is often not interested in analysing the
exact evolution of the fast
molecular vibrations, but in the slow conformal motion that embodies the macroscale
dynamics of the system. Here
lies an opportunity to bypass at least partly the scalability issues by
advancing the understanding of these
systems and a subsequent development of numerical integration schemes that
describe only the average evolution
of the dynamical system without fully resolving the small-scale vibrations.

Fast--slow Hamiltonian systems provide a simplified fundamental description of
large-scale interacting
particles systems, where the system's degrees of freedom evolve on different
scales in time and space. They
can be used, for example, to model the evolution of molecules where the slow
degrees of freedom represent the
conformal motion and the fast degrees of freedom represent the high-frequency
molecular vibrations~\cite{Bornemann}. There is
a vast body of literature for averaging general dynamical systems, not
necessarily of Hamiltonian type, for
example, using Young measures~\cite{Chatterjee2018a}. Applications of such
fast--slow systems arise, for example,
in models of plasticity~\cite{Chatterjee2020a}. Recent work on averaging of
Hamiltonian fast--slow systems and connections with adiabatic invariants
include~\cite{LERMAN20161219,Neishtadt_2017,Neishtadt_2019} and references
therein. Similar averaging techniques are also relevant to understand
equilibration in springy billiards~\cite{ShahE10514}.  For general references
to averaging, we refer the reader
to~\cite{SVM07,Kuehn2015a,Pavliotis2008a}.

With a mathematical description of a dynamical system in the form of a fast--slow
Hamiltonian system, we are able
to derive the conformal motion through homogenisation in time. The theory laid
out by Bornemann
in~\cite{Bornemann} enables us to derive the homogenised evolution of a
specific class of fast--slow
Hamiltonian systems. More precisely, Bornemann considers a family of mechanical
systems, parametrised by a
scale parameter \(\eps\), whose Lagrangian is of the form
\begin{equation}
    \label{eq:6}
    \mathscr{L}_\eps(x,\dot{x}) = \tfrac{1}{2}\left\langle \dot{x},\dot{x} \right\rangle - W_\eps(x),\qquad
    \dot{x}\in T_{x} M,
\end{equation}
on a Euclidean configuration space \(M=\R^m\) with a potential given by
\begin{equation*}
    W_\eps(x)=V(x)+\eps^{-2}U(x).
\end{equation*}
Here, the potential \(U\) characterises the fast dynamics of the system.
By splitting the coordinates according
to \(x=(y,z)\in \R^n\times\R^r=\R^m\), where \(y\) represents the slow and \(z\) the fast
degrees of freedom,
Bornemann showed that system~\eqref{eq:6} converges as \(\eps\to0\) to a system
on a slow manifold
\(N=U^{-1}(0)\), governed by the Lagrangian
\begin{equation}
    \label{eq:02}
    \mathscr{L}_{\hom}(x,\dot{x}) = \tfrac{1}{2} \left\langle \dot{x}, \dot{x} \right\rangle -V(x) -U_{\hom}(x),
    \qquad \dot{x}\in T_{x} N,
\end{equation}
where \(U_{\hom}\) can be derived from the Hessian of \(U\) and the initial
conditions of \(x\).

System~\eqref{eq:02} describes the slow, leading-order dynamics of the original
system~\eqref{eq:6}. As such,
it allows for the integration of the corresponding equations of motion with a
larger step size than what would
usually be required for the integration of the original system. This can speed
up the numerical integration
significantly.

A crucial aspect of approximating the fast--slow solution of system~\eqref{eq:6} by
a slow solution of
system~\eqref{eq:02} is that the approximation error depends on the scale
parameter \(\eps\). This scale
parameter is a critical element in the dynamics described by~\eqref{eq:6}. It
is determined by the underlying
true natural system that the model aims to represent and indicates the ratio of
the typical timescales of the
fast and slow degrees of freedom. In the case of a very small \(\eps\), a
description of the fast--slow
solution of system~\eqref{eq:6} by a slow solution of system~\eqref{eq:02}
might be an acceptable trade-off in
order to deal with the scalability issue mentioned earlier. However, a problem
arises if \(\eps\) is small,
so that the microscale oscillations severely affect the numerical integration,
but not small enough so that
the dynamics of system~\eqref{eq:6} cannot be sufficiently approximated by the
homogenised dynamics given by
system~\eqref{eq:02}. In this case, the dynamics of the fast degrees of freedom
contribute much more to the
evolution of the whole system than in the case of a very small \(\eps\). For
example, in~\cite[Chapter~III
    \textsection 2]{Bornemann}, the author applies the homogenisation process to
derive the conformal motion of a butane
molecule, where in the united atom representation, the scale parameter is
\(\eps\approx 0.25\), which cannot be
considered as small. It is thus natural to extend the theory presented
in~\cite{Bornemann} to describe the
slow dynamics of the original system on a finer scale, potentially revealing
microscale properties in the
case of a scale parameter away from the limit \(\eps\to0\). This line of research
begins already
in~\cite{Bornemann}, where formal asymptotic expansions are derived in
Appendix~C.

A further step was developed in~\cite{Klar2020}, where the authors derive a
second-order asymptotic expansion
to the solution of system~\eqref{eq:6} in the case of one fast and one slow
degree of freedom, i.e.,~\(n=r=1\). Although the model in~\cite{Klar2020} is
rather simple and the
fast subsystem is ergodic, the
fast--slow character is sufficient to derive properties of the fine-scale
dynamics that are characteristic for
thermodynamic processes. More precisely, for \(V(x)\equiv 0\) and \(U(x)=\frac{1}{2}\omega^2(y)z^2\), where
\(\omega>0\) is a smooth frequency function, the thermodynamic character of the
model in~\cite{Klar2020}
becomes evident by analysing the fast subsystem, which models the dynamics of
the fast degree of freedom
\(z\), as a motion that is perturbed by the evolution of the slow degree
of freedom \(y\). This setting allows
an interpretation of the fast subsystem from a thermodynamic point of view. By
applying the thermodynamic
theory for ergodic Hamiltonian systems, first developed by Boltzmann and
Gibbs~\cite{Gibbs1901}, and later
specified by Hertz~\cite{Hertz1910}, one derives expressions for temperature,
entropy and external force in
the fast subsystem. Utilising the second-order asymptotic expansion, one can
determine the leading-order terms
of these thermodynamic expressions and show that they satisfy a thermodynamic
energy relation akin to the
first and second law of thermodynamics. It turns out that the entropy expression
to \emph{leading-order} is
\emph{constant}, suggesting an interpretation of the leading-order dynamics as
an adiabatic thermodynamic
process. Remarkably, although away from the limit \(\eps\to0\), one finds a
similar energy relation for the
averaged second-order terms of the expansion. Most importantly, the entropy
expression to second-order is
\emph{not} constant. The dynamics to second-order can therefore be interpreted
as a non-adiabatic
thermodynamic process.

In this article, we carry out a comparable study for the case of more than one
fast and slow degrees of
freedom, with the important difference that the higher dimensional fast
subsystem is non-ergodic. We extend
the theory presented in~\cite{Bornemann} and derive the second-order asymptotic
expansion to the solution of
system~\eqref{eq:6} in the case of an arbitrary finite number of fast and slow
degrees of freedom, i.e.,~\(n,r\in \mathbb{N}\). Specifically, we analyse the mechanical
system~\eqref{eq:6} with a smooth potential
\(V=V(y)\) and \(U(x)=\tfrac{1}{2}\left\langle H(y)z,z \right\rangle\), where
\(H(y)=\mathrm{diag}(\omega_1^2(y),\ldots,\omega_r^2(y))\) for smooth frequency functions \(\omega_\lambda>0\)
\((\lambda=1,\ldots,r)\). Unlike in~\cite{Klar2020}, we have to impose certain non-resonance
conditions to
derive the second-order asymptotic expansion. Following the strategy presented
in~\cite{Klar2020}, a key
element in the derivation of the second-order asymptotic expansion is a
transformation of the fast degrees of
freedom into action--angle variables. By using weak convergence methods we show
that the second-order
asymptotic expansion of the \(\eps\)-dependent transformed variables is
given, for instance in the case of
\(y_\eps\), as \(y_\eps = y_0 + \eps^2 (\bar{y}_2 + [y_2]^\eps) + \eps^2 y_3^\eps\), where \(y_3^\eps \to 0\)
in \(C([0,T], \R^n)\) as \(\eps \to 0\). Here, the function \(y_0\) is the
leading-order term derived from
system~\eqref{eq:02}, the function \(\bar{y}_2\) is the slow component of the
second-order correction, which
can be derived as the solution to an inhomogeneous linear system of differential
equations, and the function
\([y_2]^\eps\) is the fast component of the second-order correction, which consists
of explicitly given
rapidly oscillating terms that converge weakly\(^\ast\) to zero in
\(L^\infty([0,T],\R^n)\).

Furthermore, we interpret the dynamics of the fast subsystem, which is composed
of the fast degrees of
freedom, from a thermodynamic point of view. This is based on the thermodynamic
theory for Hamiltonian systems
formalised by Hertz~\cite{Hertz1910} and used in~\cite{Klar2020}. More
precisely, by decomposing the
total energy \(E_\eps\) into the energies \(E_\eps^\parallel\) and \(E_\eps^\perp\) such that
\(E_\eps = E_\eps^\parallel + E_\eps^\perp\), we consider \(E_\eps^\perp(z_\eps, \dot{z}_\eps; y_\eps)\) as
the energy describing the evolution of the fast degrees of freedom \(z_\eps\)
under a slow, external influence
described by the dynamics of \(y_\eps\). As the fast subsystem is not
necessarily ergodic, we follow along the
lines of~\cite{Berdichevsky} and replace time averages in the thermodynamic
theory by ensemble averages, i.e.,~averages over uniformly distributed initial
values on the energy surface. With this modification, we apply
Hertz' thermodynamic formalism and derive a temperature \(T_\eps\), an entropy
\(S_\eps\) and an external
force \(F_\eps\) for the fast subsystem. By applying the asymptotic expansion
results from the first part of
this article, we similarly expand the energy
\(E_\eps^\perp = E_0^\perp + \eps [E_1^\perp]^\eps + \eps^2 \left( \bar{E}_2^\perp + [E_2^\perp]^\eps \right)
+ \eps^2 E_3^{\perp\eps}\), the temperature \(T_\eps = T_0 + \mathcal{O}(\eps)\), the entropy
\(S_\eps = S_0 + \eps [S_1]^\eps + \eps^2 \left( \bar{S}_2 + [S_2]^\eps \right) + \eps^2 S_3^\eps\) and the
external force \(F_\eps = F_0 + \mathcal{O}(\eps)\), where \(E_3^{\perp\eps}, S_3^\eps \to 0\) in
\(C([0,T])\). We find that to leading-order the thermodynamic quantities satisfy
an energy relation akin to
the first and second law of thermodynamics (in the sense of
Carath\'eodory~\cite{Weiner2012})
\begin{equation*}
    d E_0^\perp = \sum_{j=1}^n F_0^j d y^j_0+T_0 dS_0.
\end{equation*}
In contrast to the work in~\cite{Klar2020}, the leading-order entropy
\(S_0\) can be \emph{constant} or
\emph{non-constant}, depending on the characteristics of the weighted frequency
ratios
\(\theta_\ast^\lambda\omega_\lambda(y_0)/\omega_\mu(y_0)\) \((\lambda,\mu = 1,\ldots, r)\). Here, we use the definition of the entropy as the logarithm of the phase space volume where the latter does not have to change slowly; the analysis shows that even in this situation, a meaningful thermodynamic setting exists. As a consequence,
we interpret the dynamics to leading-order as an adiabatic or non-adiabatic
thermodynamic process,
respectively. Moreover, by considering the average dynamics to second-order for
fixed \(y_0\) and \(p_0=\dot{y}_0\), we
similarly find, although away from the limit \(\eps\to0\), a comparable energy
relation of the form
\begin{equation*}
    d \bar{E}_2^\perp = \sum_{j=1}^n F_0^j d \bar{y}_2^j + T_0 d \bar{S}_2.
\end{equation*}
Likewise, with a non-constant second-order entropy expression \(\bar{S}_2\), we
can interpret the averaged
second-order dynamics as a non-adiabatic thermodynamic process.

Finally, we analyse the viability of the second-order asymptotic expansion as a
suitable approximation to the
slow degrees of freedom of system~\eqref{eq:6} from a numerical point of view.
More precisely, we choose a
specific model from the class of fast--slow Hamiltonian systems represented
by~\eqref{eq:6} and compare the
numerical solution of \(y_\eps\) with \(y_0 + \eps^2 (\bar{y}_2 + [y_2]^\eps)\) in terms of its short- and
long-term approximation quality and computation time. The maximal time frame for
which an approximation of
\(y_\eps\) can be considered sufficiently accurate significantly increases by
using
\(y_0 + \eps^2 (\bar{y}_2 + [y_2]^\eps)\) instead of \(y_0\) alone. Moreover, we show that the computation
of
\(y_0 + \eps^2 (\bar{y}_2 + [y_2]^\eps)\) is up to two orders of magnitude faster (depending on the scale
parameter \(\eps\)) than a computation of \(y_\eps\) to comparable accuracy
as a solution to
system~\eqref{eq:6}. As described earlier, the reason is that fast oscillations
severely affect the runtime
for numerically computing \(y_\eps\) from~\eqref{eq:6}. In contrast, the
problematic oscillatory term at
second-order \([y_2]^\eps\) is given explicitly, and the derivation of \(y_0\)
and \(\bar{y}_2\) only require a
numerical integration of two slow systems of differential equations, which can
be solved, in parallel, using a
relatively large step size.

An application of the theory presented in this article may not only improve
large-scale molecular dynamics
simulations. It can also find applications in cases where the homogenisation
theory outlined above and related
work as in~\cite{Reich98} are applicable. Some examples are given by the
description of quantum--classical
models in quantum-chemistry~\cite{Bornemann}, the problem of deriving the
guiding centre motion in plasma
physics~\cite{Bornemann1997} or, more recently, the derivation of a
coarse-grained description of the coupled
thermoelastic behaviour from an atomistic model in materials
science~\cite{Li2019}.

Finally, we want to point out other thermodynamic analyses based on the
fast--slow system governed by the Lagrangian~\eqref{eq:6}. In~\cite{Jia2005},
the authors extend system~\eqref{eq:6} by coupling the fast and slow degrees
of freedom to an external Nos\'e--Hoover thermostat and analyse the
thermodynamic equilibration of the system on the fast and slow scale. In a
similar line of thought, the authors in~\cite{Reich2000} expand
system~\eqref{eq:6} by embedding it into an external heat bath and
subsequently analysing the resulting slow dynamics, analogous to the
homogenisation procedure introduced above, in the limit \(\eps\to 0\).

\subsection{Outline of the paper}

In  Section~\ref{sec:8} we introduce the model problem, which establishes the
foundation for the analysis in
this article, and state necessary non-resonance conditions, which ensure that
the subsequently derived
second-order expansion of the solution to the model problem is well-defined. A
summary of our main results is
provided in  Section~\ref{sec:9}. We start the analysis of the model problem by
introducing a transformation of
the fast degrees of freedom into action--angle variables in  Section~\ref{sec:7}, where
we also prove the
existence and uniqueness of a solution of the transformed system. In
Section~\ref{sec:5} we introduce some
notation that simplifies the governing equations of motion and derive the
second-order asymptotic expansion
for the transformed degrees of freedom. Subsequently, in  Section~\ref{sec:4}, we
define expressions for the
temperature, the entropy and the external force for the fast subsystem and
interpret the model from a
thermodynamic point of view. For a test model, the global error for
approximating \(y_\eps\) by
\(y_0 + \eps^2 (\bar{y}_2 + [y_2]^\eps)\) are analysed on short and long time intervals in  Section~\ref{sec:3},
where
we also compare the runtimes for computing \(y_\eps\), \(y_0\) and
\(\bar{y}_2 + [y_2]^\eps\).  Section~\ref{sec:10}
provides a short conclusion of this article. In Appendix~\ref{app:1} we summarise how
the thermodynamic
expressions can be derived for the fast subsystem. Finally, in Appendix~\ref{app:2} we
present some data on
the computation times corresponding to the maximal step sizes used in the numerical
simulations presented in this article.

\section{The model problem}

\label{sec:8}
For a small scale parameter \(0<\eps<\eps_0<\infty\), we study the family of mechanical
systems given by the
Lagrangian
\begin{equation}
    \label{eq:67}
    \mathscr{L}_\eps(x,\dot{x}) = \tfrac{1}{2}\left\langle \dot{x},\dot{x} \right\rangle - W_\eps(x),\qquad
    \dot{x}\in T_{x} M,
\end{equation}
on a Euclidean configuration space \(M=\R^m\). Here and in the following,
\(\left\langle \cdot, \cdot \right\rangle \) denotes Euclidean inner products and
\(\left\vert \cdot \right\vert\) denotes Euclidean norms. Splitting the coordinates according to
\(x=(y,z)\in \R^n\times\R^r=\R^m\), we specify, following~\cite{Bornemann}, the potential
\(W_\eps=V+\eps^{-2}U\) by a smooth potential \(V=V(y)\), which is assumed to be bounded
from below and
\begin{equation}
    \label{eq:U-ass}
    U(x)=\tfrac{1}{2}\left\langle H(y)z,z \right\rangle \qquad \text{with} \qquad H(y)
    =\mathrm{diag}(\omega_1^2(y),\ldots,\omega_r^2(y)).
\end{equation}
We assume that the smooth functions \(\omega_\lambda \in C^\infty(\R^n)\) are uniformly positive,
i.e.,~there
exists a constant \(\omega_\ast>0\) such that
\begin{equation}
    \label{eq:1}
    \omega_\lambda(y)\geq \omega_\ast, \qquad y\in \R^n,\quad\lambda=1,\ldots,r.
\end{equation}
A componentwise formulation of the equations of motion for the
\(\eps\)-dependent coordinates \(y_\eps\) and
\(z_\eps\) in~\eqref{eq:67} yields
\begin{IEEEeqnarray}{rCl}
    \eqlabel{eq:2}
    \IEEEyesnumber \IEEEyessubnumber*
    \ddot{y}_\eps^j&=& -\partial_j V(y_\eps)-\tfrac{1}{2}\eps^{-2}\left\langle \partial_j H(y_\eps)z_\eps,
    z_\eps \right\rangle,
    \qquad j = 1,\ldots, n, \\
    \label{eq:3}
    \ddot{z}_\eps &=& -\eps^{-2}H(y_\eps)z_\eps.
\end{IEEEeqnarray}
Moreover, we consider the \(\eps\)-independent initial values
\begin{equation}
    \label{eq:7}
    y_\eps(0)=y_\ast, \qquad \dot{y}_\eps(0)=p_\ast, \qquad z_\eps(0)=0,\qquad \dot{z}_\eps(0)=u_\ast.
\end{equation}
We notice that the energy \(E_\eps\) of the system is independent of
\(\eps\) due to the particular choice
\(z_\eps(0)=0\),
\begin{equation}
    \label{eq:4}
    E_\eps = \tfrac{1}{2}\left\vert \dot{y}_\eps \right\vert^2 + \tfrac{1}{2}\left\vert \dot{z}_\eps \right\vert^2
    +V(y_\eps)+ \eps^{-2}U(y_\eps, z_\eps)
    = \tfrac{1}{2}\left\vert p_\ast \right\vert^2 + \tfrac{1}{2} \left\vert u_\ast \right\vert^2 + V(y_\ast)= E_\ast.
\end{equation}

\begin{remark}For the equations in~\eqref{eq:2}--\eqref{eq:4} and below, we will
    simultaneously make use of the vector
    notation for the coordinates \(y_\eps\in\R^n\) and \(z_\eps\in\R^r\) (and related
    expressions) as well as
    their componentwise representation \(y_\eps^j\) \((j=1,\ldots,n)\) and \(z_\eps^\lambda\)
    \((\lambda = 1,\ldots,r)\). The index in the superscript should not be confused with an
    exponent.
\end{remark}

We are primarily interested in the evolution of the slow degrees of freedom
\(y_\eps^j\)
\((j=1,\ldots,n)\). The following theorem by Bornemann shows that \(y_\eps\)
converges in the limit
\(\eps\to0\) to a function \(y_0\) which is given as the solution to a
second-order differential equation.
\begin{theorem}[Bornemann,~\cite{Bornemann}]
    \label{thm:1}
    For
    \begin{equation*}
        U_{\hom}(y_0)=\sum_{\lambda=1}^r\theta_\ast^\lambda \omega_\lambda(y_0), \qquad \text{where} \qquad
        \theta_\ast^\lambda = \frac{\left\vert u_\ast^\lambda \right\vert^2}{2\omega_\lambda(y_\ast)}, \qquad \lambda=1,\ldots,r,
    \end{equation*}
    let \(y_0\) be the solution to the second-order differential equation
    \begin{equation}
        \label{eq:62}
        \ddot{y}_0^j = -\partial_j V(y_0) - \partial_j U_{\hom}(y_0),\qquad j=1,\ldots, n,
    \end{equation}
    with initial values \(y_0(0)=y_\ast\), \(\dot{y}_0(0)=p_\ast\). Then, for every finite time
    interval
    \([0,T]\), we obtain the strong convergence
    \begin{equation*}
        y_\eps \to y_0 \quad \text{in} \quad C^1([0,T],\R^n)
    \end{equation*}
    and the weak\(^\ast\) convergences \(\eps^{-1}z_\eps \weak{\ast}0\) and \(\dot{z}_\eps \weak{\ast}0\)
    in \(L^\infty([0,T],\R^r)\).
\end{theorem}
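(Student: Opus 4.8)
The plan is to prove convergence via the standard energy/compactness strategy for singularly perturbed Hamiltonian systems, adapted to the non-ergodic diagonal structure of $H$. First I would exploit the energy identity~\eqref{eq:4}: since $E_\eps = E_\ast$ is $\eps$-independent, the quantities $|\dot{y}_\eps|$, $|\dot{z}_\eps|$ and $\eps^{-1}|z_\eps|$ (the latter because $\eps^{-2}U(y_\eps,z_\eps) \geq \tfrac12\omega_\ast^2\eps^{-2}|z_\eps|^2$ by~\eqref{eq:1}) are bounded in $L^\infty([0,T])$ uniformly in $\eps$, and consequently $y_\eps$ is bounded in $C^1$ and $z_\eps \to 0$ uniformly. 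Passing to a subsequence, Arzel\`a--Ascoli gives $y_\eps \to y_0$ in $C([0,T],\R^n)$ for some limit, and weak$^\ast$ compactness in $L^\infty$ gives limits for $\dot{y}_\eps$, $\eps^{-1}z_\eps$, $\dot{z}_\eps$ and — crucially — for the quadratic expressions $\eps^{-2}z_\eps^\lambda z_\eps^\mu$, which converge weak$^\ast$ to some nonnegative measures/functions. The key point is to identify these weak limits and show they equal $2\theta_\ast^\lambda \delta_{\lambda\mu}$, which then closes the equation for $y_0$.

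The second and central step is the identification of the weak limit of $\eps^{-2}|z_\eps^\lambda|^2$. Here the decoupled structure $H(y)=\mathrm{diag}(\omega_1^2,\ldots,\omega_r^2)$ is essential: each $z_\eps^\lambda$ satisfies $\ddot{z}_\eps^\lambda = -\eps^{-2}\omega_\lambda^2(y_\eps)z_\eps^\lambda$, a scalar fast oscillator with slowly varying frequency $\omega_\lambda(y_\eps) \to \omega_\lambda(y_0)$. I would introduce the adiabatic invariant (action) $\theta_\eps^\lambda := \tfrac{1}{2\omega_\lambda(y_\eps)}\bigl(\eps^{-2}|z_\eps^\lambda|^2 \omega_\lambda^2(y_\eps) + |\dot z_\eps^\lambda|^2\bigr)\cdot\tfrac{\eps}{?}$ — more precisely the WKB action $\theta_\eps^\lambda = \tfrac{1}{2}\bigl(\eps^{-1}\omega_\lambda(y_\eps)|z_\eps^\lambda|^2 + \omega_\lambda(y_\eps)^{-1}\eps|\dot z_\eps^\lambda|^2\bigr)$ rescaled so that it is $O(1)$ — and show via a Gronwall estimate that it is approximately conserved: its time derivative involves $\dot\omega_\lambda(y_\eps) = \langle\nabla\omega_\lambda(y_\eps),\dot y_\eps\rangle$, which is $O(1)$, times slowly varying factors, so $\theta_\eps^\lambda(t) = \theta_\ast^\lambda + o(1)$ uniformly on $[0,T]$, where the initial value $\theta_\ast^\lambda = |u_\ast^\lambda|^2/(2\omega_\lambda(y_\ast))$ comes from~\eqref{eq:7}. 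Combined with an averaging argument — over one fast period the oscillator spends, on average, half its energy in kinetic and half in potential form, so the weak$^\ast$ limit of $\eps^{-2}|z_\eps^\lambda|^2$ is $\theta_\ast^\lambda/\omega_\lambda(y_0) \cdot \omega_\lambda(y_0) = \ldots$ — one obtains $\eps^{-2}|z_\eps^\lambda|^2 \weak{\ast} 2\theta_\ast^\lambda$ (after matching constants), and the cross terms $\eps^{-2}z_\eps^\lambda z_\eps^\mu \weak{\ast} 0$ for $\lambda\neq\mu$ because the two oscillators have generically incommensurate phases (non-resonance). This is the main obstacle: making the adiabatic-invariance and averaging rigorous with only $C^1$ control on $y_\eps$, which I would handle either by a direct oscillatory-integral/virial estimate or by the action--angle change of variables mentioned in the introduction.

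With the weak limits in hand, the third step is to pass to the limit in the equation~\eqref{eq:2} for $y_\eps^j$: the right-hand side is $-\partial_j V(y_\eps) - \tfrac12\eps^{-2}\sum_\lambda \partial_j\omega_\lambda^2(y_\eps)|z_\eps^\lambda|^2 = -\partial_j V(y_\eps) - \sum_\lambda \omega_\lambda(y_\eps)\partial_j\omega_\lambda(y_\eps)\,\eps^{-2}|z_\eps^\lambda|^2$. Since $\partial_j V$ and $\omega_\lambda\partial_j\omega_\lambda$ are continuous and $y_\eps\to y_0$ uniformly, the product of the strongly convergent smooth coefficient with the weak$^\ast$-convergent quadratic term converges to $-\partial_j V(y_0) - \sum_\lambda \omega_\lambda(y_0)\partial_j\omega_\lambda(y_0)\cdot 2\theta_\ast^\lambda = -\partial_j V(y_0) - \partial_j\bigl(\sum_\lambda\theta_\ast^\lambda\omega_\lambda(y_0)\bigr) = -\partial_j V(y_0) - \partial_j U_{\hom}(y_0)$; integrating the ODE twice and passing to the limit identifies $y_0$ as the unique solution of~\eqref{eq:62} with the prescribed initial data. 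Uniqueness of this limit promotes the subsequential convergence to full convergence. Finally, to upgrade $C$-convergence of $y_\eps$ to $C^1$-convergence, I would note that $\ddot y_\eps$ is then bounded in $L^\infty$ with $\int_0^t \ddot y_\eps \to \int_0^t \ddot y_0$, giving $\dot y_\eps \to \dot y_0$ pointwise and, by equicontinuity of $\dot y_\eps$ (from the $L^\infty$ bound on $\ddot y_\eps$), uniformly; the stated weak$^\ast$ convergences $\eps^{-1}z_\eps \weak{\ast} 0$ and $\dot z_\eps \weak{\ast} 0$ follow since $\eps^{-1}z_\eps \to 0$ even strongly and $\dot z_\eps$ oscillates with mean zero (the limit being forced by $z_\eps \to 0$ in $C$ and $\dot z_\eps$ bounded). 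I expect the bookkeeping in the averaging step to be the only genuinely delicate part; everything else is the routine compactness-and-identification scheme.
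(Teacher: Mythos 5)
Your overall strategy is sound and is, at bottom, the same mechanism the paper uses: Theorem~\ref{thm:1} is quoted from Bornemann, but Section~\ref{sec:Lead-order-expansion} re-derives exactly this leading order after passing to action--angle variables, where your ``adiabatic invariant'' is literally the variable \(\theta_\eps^\lambda\), its equation~\eqref{eq:12b} has an \(O(1)\) right-hand side proportional to \(\cos(2\eps^{-1}\phi_\eps^\lambda)\) that weak\(^\ast\)-vanishes because \(\dot{\phi}_\eps^\lambda \geq \omega_\ast > 0\), and the homogenised force then drops out of~\eqref{eq:12d}. Your version in the original variables (energy bound, compactness, identification of the weak\(^\ast\) limit of \(\eps^{-2}\vert z_\eps^\lambda\vert^2\) via near-conservation of the action plus equipartition) is an equivalent presentation.

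That said, several concrete points as written are wrong and need repair. (i) The constant: since the force is \(-\tfrac12\eps^{-2}\langle \partial_j H(y_\eps)z_\eps,z_\eps\rangle = -\sum_\lambda \omega_\lambda(y_\eps)\partial_j\omega_\lambda(y_\eps)\,\eps^{-2}\vert z_\eps^\lambda\vert^2\), you need \(\eps^{-2}\vert z_\eps^\lambda\vert^2 \weak{\ast} \theta_\ast^\lambda/\omega_\lambda(y_0)\) (indeed \(\eps^{-2}\vert z_\eps^\lambda\vert^2 = (2\theta_\eps^\lambda/\omega_\lambda)\sin^2(\eps^{-1}\phi_\eps^\lambda)\)), not \(2\theta_\ast^\lambda\); with your constant the limit would be \(\theta_\ast^\lambda\partial_j(\omega_\lambda^2)\) rather than \(\partial_j(\theta_\ast^\lambda\omega_\lambda)\), so this is not a free normalisation --- it is precisely where \(U_{\hom}=\sum_\lambda\theta_\ast^\lambda\omega_\lambda\) (and not \(\sum_\lambda\theta_\ast^\lambda\omega_\lambda^2\)) comes from. (ii) The cross terms \(\eps^{-2}z_\eps^\lambda z_\eps^\mu\) never enter the equation for \(y_\eps\), because \(\partial_j H\) is diagonal; this matters because your justification of their vanishing invokes non-resonance, which Theorem~\ref{thm:1} does not assume (Assumptions~\ref{ass:1}--\ref{ass:2} are only imposed for the second-order expansion), and the claim is in fact false when \(\omega_\lambda\equiv\omega_\mu\). (iii) ``\(\eps^{-1}z_\eps\to0\) even strongly'' is false: \(\eps^{-1}z_\eps^\lambda\) has amplitude \(\sqrt{2\theta_\eps^\lambda/\omega_\lambda}\not\to0\); the energy identity gives only boundedness, and the weak\(^\ast\) convergence is again an oscillation statement (\(\sin(\eps^{-1}\phi_\eps^\lambda)\weak{\ast}0\)). (iv) A Gronwall bound on the action gives only boundedness, since \(\dot{\theta}_\eps^\lambda\) is \(O(1)\); the convergence \(\theta_\eps^\lambda\to\theta_\ast^\lambda\) requires one integration by parts against the non-degenerate phase (the analogue of Lemma~\ref{lemma:2}, using only the uniform positivity~\eqref{eq:1}). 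None of these breaks the skeleton of the argument, but (i)--(iii) are errors of substance rather than bookkeeping and the proof is not complete until they are corrected.
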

Theorem~\ref{thm:1} shows that the family of mechanical systems~\eqref{eq:67}
converges as \(\eps\to0\) to a
mechanical system which is again Hamiltonian.

\subsection{Non-resonance conditions}

As the interaction of multiple oscillating degrees of freedom can lead to
resonance effects in the system, we
will, similar to~\cite{Bornemann}, impose suitable non-resonance conditions on
the frequencies
\(\omega_\lambda\) to ensure that the second-order asymptotic expansions, which we
will derive in
Section~\ref{sec:5}, are well-defined. We say, referring to the definition stated for
example in~\cite[Section
    14.6]{Wiggins2003}, that a resonance of order \(j\in \mathbb{N}\) at \(y\in \R^n\) is
given by the relation
\begin{equation}
    \label{eq:5}
    \gamma_1 \omega_1(y) + \cdots+ \gamma_r \omega_r(y) = 0, \qquad \left\vert \gamma_1 \right\vert + \cdots
    +\left\vert \gamma_r \right\vert = j,
\end{equation}
with integer coefficients \(\gamma_\lambda\in \mathbb{Z}\) for \(\lambda=1,\ldots,r\). Note that the
non-degeneracy condition~\eqref{eq:1} implies that there is no resonance of
order one.

\begin{assumption}\label{ass:1}We assume that the homogenised solution in Theorem~\ref{thm:1} is non-resonant of
    order two, i.e.,~we assume
    that
    \begin{equation*}
        \gamma_1 \omega_1(y_0(t)) + \cdots+ \gamma_r \omega_r(y_0(t)) \neq 0, \qquad \left\vert \gamma_1 \right\vert
        + \cdots +\left\vert \gamma_r \right\vert = 2,
    \end{equation*}
    for all \(t\in[0,T]\).
\end{assumption}

\begin{assumption}
    \label{ass:2}Moreover, we assume that the homogenised solution in Theorem~\ref{thm:1} is not flatly
    resonant up to order
    three. More precisely, we assume that
    \begin{equation*}
        \frac{d}{dt}\left( \gamma_1 \omega_1(y_0(t_i)) + \cdots+ \gamma_r \omega_r(y_0(t_i)) \right)\neq 0, \qquad
        \left\vert \gamma_1 \right\vert + \cdots +\left\vert \gamma_r \right\vert \leq 3,
    \end{equation*}
    for all impact times \(t_i\in [0,T]\) \((i\in I\subset \mathbb{N}, I \text{ finite})\) such that the
    non-resonance condition~\eqref{eq:5} holds at \(y_0(t_i)\).
\end{assumption}

We remark that Assumption~\ref{ass:1} is intentionally chosen to simplify the derivation
of the second-order
asymptotic expansions (see remark following Lemma~\ref{lemma:6}). Under these
simplifications the assumption also ensures, that the
second-order asymptotic expansions derived in  Section~\ref{sec:5} are well-defined.
Assumption~\ref{ass:2} is,
analogous to~\cite{Bornemann}, a necessary prerequisite for the theory
developed below. It ensures that
rapidly oscillating functions of the form
\(\exp \left( \pm i \eps^{-1}\left( \omega_\lambda(y_\eps) - \omega_\mu(y_\eps) \right) \right)\) and
\(\exp \left( \pm i \eps^{-1}\left( \omega_\lambda(y_\eps) + \omega_\nu(y_\eps) - \omega_\mu(y_\eps) \right)
\right)\) where \(\lambda,\mu, \nu=1,\ldots, r\), \(\lambda\neq \mu\) converge weakly\(^\ast\) to
zero in
\(L^\infty([0,T])\). In~\cite{Bornemann} these functions appear due to interactions of
the fast degrees of
freedom caused by the structure of a more general potential \(U(x)\) as well
as a more general metric
\(\left\langle \cdot, \cdot \right\rangle \) and Assumption~\ref{ass:2} is used to derive the leading-order
asymptotic expansion of the system's degrees of freedom. Here, however, these
functions appear only due to
small-scale interactions in the second-order asymptotic expansions.

\section{Summary of the main results}

\label{sec:9}
The goal of this article is to extend the theory developed in~\cite{Bornemann}
by deriving the second-order asymptotic expansion rigorously for the solution of the equations of
motion~\eqref{eq:2} and interpret the
corresponding second-order asymptotic expansion of the energy~\eqref{eq:4} from
a thermodynamic point of
view. Note that the mechanical system~\eqref{eq:67} is not a classical
thermodynamic system. In particular,
the fast subsystem, consisting of the fast degrees of freedom \(z_\eps^\lambda\)
\((\lambda=1,\ldots,r)\),
which we will consider in  Section~\ref{sec:4} as the thermodynamic part of the whole
system, is in general not
ergodic. Finally, we will discuss the numerical implications of the second-order
asymptotic expansion of \(y_\eps\) in terms
of its approximation error and computational cost.

Our main findings in this article can be summarised as follows.

\begin{enumerate}
    \item \label{thm:p1} After transforming the rapidly oscillating degrees of
          freedom into action--angle variables
          \((z_\eps,\dot{z}_\eps)\mapsto (\theta_\eps, \phi_\eps)\), which also involves a transformation of the
          generalised momentum \(\dot{y}_\eps\mapsto p_\eps\), we derive the second-order asymptotic
          expansion of
          \(y_\eps, p_\eps, \theta_\eps, \phi_\eps\). This takes the form
          \begin{IEEEeqnarray*}{rClClClCl}
              y_\eps &=& y_0 &+& \eps[\bar{y}_1]^\eps&+&\eps^2[\bar{y}_2]^\eps &+& \eps^2y_3^\eps,\\
              p_\eps &=& p_0 &+& \eps[\bar{p}_1]^\eps&+&\eps^2[\bar{p}_2]^\eps &+& \eps^2p_3^\eps,\\
              \theta_\eps &=& \theta_\ast &+& \eps[\bar{\theta}_1]^\eps&+&\eps^2[\bar{\theta}_2]^\eps &+& \eps^2\theta_3^\eps,\\
              \phi_\eps &=& \phi_0 &+& \eps[\bar{\phi}_1]^\eps&+&\eps^2[\bar{\phi}_2]^\eps &+& \eps^2\phi_3^\eps,
          \end{IEEEeqnarray*}
          where for \(i\in \{1,2\}\),
          \begin{IEEEeqnarray*}{rClClClCl+rClCl}
              [\bar{y}_i]^\eps &\coloneqq& \bar{y}_i &+& [y_i]^\eps&\weak{\ast}& \bar{y}_i \quad &\text{in}
              &\quad L^\infty([0,T], \R^n),&y_3^\eps &\to& 0 \quad &\text{in}&\quad C([0,T], \R^n),\\
              \phantom{}[\bar{p}_i]^\eps &\coloneqq& \bar{p}_i &+& [p_i]^\eps&\weak{\ast}& \bar{p}_i \quad &\text{in}
              &\quad L^\infty([0,T], \R^n),&p_3^\eps &\to& 0 \quad &\text{in}&\quad C([0,T], \R^n),\\
              \phantom{}[\bar{\theta}_i]^\eps &\coloneqq& \bar{\theta}_i &+& [\theta_i]^\eps&\weak{\ast}& \bar{\theta}_i \quad
              &\text{in}&\quad L^\infty([0,T],\R^r),&\theta_3^\eps &\to& 0 \quad &\text{in}&\quad C([0,T],\R^r),\\
              \phantom{}[\bar{\phi}_i]^\eps &\coloneqq& \bar{\phi}_i &+& [\phi_i]^\eps&\weak{\ast}& \bar{\phi}_i \quad
              &\text{in}&\quad L^\infty([0,T],\R^r),&\phi_3^\eps &\to& 0 \quad &\text{in}&\quad C([0,T],\R^r).
          \end{IEEEeqnarray*}
          In other words, for each degree of freedom the second-order asymptotic expansion
          is characterised -- to
          leading-order by the theory developed in~\cite{Bornemann} (Theorem~\ref{thm:1}) -- to \(i\)th order by a
          decomposition into a slow term, indicated by an overbar, which constitutes the
          average motion of the
          \(i\)th order expansion, and a fast term, indicated by square brackets,
          which oscillates rapidly and
          converges weakly\(^\ast\) to zero --- and by a residual term, indicated with
          a subscript three, which
          converges uniformly to zero. In particular, we show that
          \begin{equation*}
              [\bar{y}_1]^\eps = 0,\qquad [\bar{p}_1]^\eps = 0,\qquad  [\bar{\theta}_1]^\eps =
              [\theta_1]^\eps,\qquad [\bar{\phi}_1]^\eps = 0,\end{equation*}
          and that \((\bar{\phi}_2, \bar{\theta}_2, \bar{y}_2, \bar{p}_2)\) is given as the solution to an
          inhomogeneous linear system of differential equations (Theorem~\ref{thm:2}). Moreover,
          the rapidly
          oscillating functions \([\theta_1]^\eps\), \([y_2]^\eps\), \([p_2]^\eps\), \([\theta_2]^\eps\) and
          \([\phi_2]^\eps\) are explicitly given in Definition~\ref{def:1}.
    \item In~\cite{Hertz1910}, Hertz formalises a thermodynamic theory for fast
          Hamiltonian systems which are
          perturbed by slow external agents. We regard the fast subsystem \((z_\eps, \dot{z}_\eps)\) as
          such a
          thermodynamic system, perturbed by the slow motion of \((y_\eps, \dot{y}_\eps)\). Since the
          fast subsystem
          is not ergodic, we follow along the lines
          of~\cite[Chapter~1.10]{Berdichevsky} and replace the time average,
          which is an essential component in the thermodynamic theory, by the ensemble
          average, i.e.,~the average over
          uniformly distributed initial values on the energy surface (see Appendix~\ref{app:1}),
          and define, based on
          Hertz' formulation, a temperature \(T_\eps\), an entropy \(S_\eps\) and an
          external force \(F_\eps\) for the
          fast subsystem.

          In combination with the analytic result discussed under~\ref{thm:p1}, we
          decompose the total energy
          \(E_\eps\) into the energy associated with the fast subsystem \(E_\eps^\perp\) and
          its residual energy
          \(E_\eps^\parallel= E_\eps - E_\eps^\perp\), and expand, similar to above \(E_\eps^\perp\),
          \(E_\eps^\parallel\), \(T_\eps\), \(S_\eps\) and \(F_\eps\) into the form
          \begin{IEEEeqnarray*}{rClClClCl}
              E_\eps^\perp &=& E_0^\perp &+& \eps[\bar{E}_1^\perp]^\eps &+&\eps^2[\bar{E}_2^\perp]^\eps&+& \eps^2E_3^{\perp\eps},\\
              E_\eps^\parallel &=& E_0^\parallel &+& \eps[\bar{E}_1^\parallel]^\eps &+&\eps^2[\bar{E}_2^\parallel]^\eps&+& \eps^2E_3^{\parallel\eps},\\
              S_\eps &=& S_0 &+&\eps[\bar{S}_1]^\eps &+& \eps^2[\bar{S}_2]^\eps&+& \eps^2S_3^\eps,\\
              T_\eps &=& T_0 &+& \mathcal{O}(\eps),\\
              F_\eps &=& F_0 &+& \mathcal{O}(\eps),
          \end{IEEEeqnarray*}
          where for \(i\in \{1,2\}\),
          \begin{IEEEeqnarray*}{rClClClCl+rClCl}
              [\bar{E}_i^\perp]^\eps &\coloneqq& \bar{E}_i^\perp &+& [E_i^\perp]^\eps&\weak{\ast}& \bar{E}_i^\perp \quad
              &\text{in}&\quad L^\infty([0,T]),&E_3^{\perp\eps} &\to& 0 \quad &\text{in}&\quad C([0,T]),\\
              \phantom{}[\bar{E}_i^\parallel]^\eps &\coloneqq& \bar{E}_i^\parallel &+& [E_i^\parallel]^\eps&\weak{\ast}
              & \bar{E}_i^\parallel \quad &\text{in}&\quad L^\infty([0,T]),&E_3^{\parallel\eps} &\to& 0 \quad &\text{in}&\quad C([0,T]),\\
              \phantom{}[\bar{S}_i]^\eps &\coloneqq& \bar{S}_i &+& [S_i]^\eps&\weak{\ast}& \bar{S}_i \quad
              &\text{in}&\quad L^\infty([0,T]),&S_3^\eps &\to& 0 \quad &\text{in}&\quad C([0,T]).
          \end{IEEEeqnarray*}

          The characterisation of the \(i\)th order expansion is similar
          to~\ref{thm:p1} and is already discussed,
          for the case of \(n=r=1\), in~\cite{Klar2020}. In  Section~\ref{sec:4} we
          interpret these asymptotic
          expansions from a thermodynamic point of view. In particular, we show that, to
          leading-order, the dynamics can
          be interpreted as a thermodynamic process characterised by the energy relation
          \begin{equation*}
              d E_0^\perp = \sum_{j=1}^n F_0^j d y^j_0+T_0 dS_0.
          \end{equation*}
          In contrast to the analysis in~\cite{Klar2020}, we find, provided that
          \(\theta_\ast^\lambda \neq 0\) for at
          least one \(\lambda=1,\ldots, r\), that the entropy expression to leading-order is
          constant, \(dS_0=0\),
          \emph{if and only if} all pairwise weighted frequency ratios
          \(\theta_\ast^\lambda\omega_\lambda(y_0)/\omega_\mu(y_0)\) \((\lambda, \mu = 1, \ldots, r)\) are
          constant. In this case, the leading-order dynamics can be interpreted as an
          adiabatic thermodynamic
          process. Yet, if any of the weighted frequency ratios is non-constant, the
          entropy is non-constant and thus
          the leading-order dynamics can be interpreted as a non-adiabatic thermodynamic process. Here we use the definition of entropy given by Hertz in a context where the
          entropy is not necessarily the logarithm of an adiabatic invariant. Nevertheless, we show that a meaningful thermodynamic interpretation can be given.

          Furthermore, we show that the averaged second-order dynamics, i.e.,~the dynamics
          in the weak\(^\ast\) limit
          of the second-order terms, indicated by an overbar, represents for fixed
          \((y_0, p_0)\) a non-adiabatic thermodynamic process with an averaged \emph{non-constant entropy}, \(d \bar{S}_2\neq 0\), which also
          satisfies relations
          akin to equilibrium thermodynamics, despite being beyond the limit \(\eps \to 0\),
          \begin{equation*}
              d \bar{E}_2^\perp = \sum_{j=1}^n F_0^j d \bar{y}_2^j + T_0 d \bar{S}_2.
          \end{equation*}
          Finally, we show in Theorem~\ref{thm:3} that the evolution of \((\bar{y}_2, \bar{p}_2)\) is governed
          by
          equations which resemble Hamilton's canonical equations,
          \begin{equation*}
              \frac{d \bar{y}_2 }{dt} = \frac{\partial \bar{E}_2}{\partial p_0},\qquad \frac{\partial \bar{p}_2}{dt}
              = -\frac{\partial \bar{E}_2 }{\partial y_0},
          \end{equation*}
          for \(\bar{E}_2 = \bar{E}_2^\perp + \bar{E}_2^\parallel\), which are complemented by the
          \(\eps\)-independent initial values
          \begin{equation*}
              \bar{y}_2(0)= -[y_2]^\eps(0), \qquad \bar{p}_2(0)= -[p_2]^\eps(0).
          \end{equation*}

    \item Finally, we compare in numerical simulations the second-order asymptotic
          expansion of the slow degrees
          of freedom \(y_0 + \eps^2 (\bar{y}_2+[y_2]^\eps)\) with simulations for \(y_\eps\) of the original
          system~\eqref{eq:2}. The latter is computationally expensive, as it requires a
          numerical integration of the
          fast degrees of freedom \(z_\eps\). To this end, we derive numerically the
          slow motion \(y_0\) of the
          leading-order system~\eqref{eq:62} and the average motion \(\bar{y}_2\) of the
          second-order
          system~\eqref{eq:27} and combine them with the explicitly given rapidly
          oscillating components
          \([y_2]^\eps\) of the second-order expansion as specified in Definition~\ref{def:1}. We
          find, depending on
          the value of the scale parameter \(\eps\), that the computation time for the
          second-order expansion is up to
          two orders of magnitude faster than the computation time for the slow degrees of
          freedom of the original
          system. Moreover, we show that \(y_0 + \eps^2 (\bar{y}_2+[y_2]^\eps)\) provides an approximation of
          \(y_\eps\) which has significantly better global error bounds on long time
          intervals than an approximation
          by \(y_0\) alone.
\end{enumerate}

\section{The model problem in action--angle variables}

\label{sec:7}
To study the dynamics of \(y_\eps\) and \(z_\eps\) on different scales, a
detailed asymptotic analysis is
required. Such an analysis was already presented for the model problem as
introduced in  Section~\ref{sec:8} in
the case of one fast and one slow degree of freedom (i.e.,~\(n=r=1\))
in~\cite{Klar2020}, which extends the
analysis given in~\cite[Appendix C]{Bornemann}. To derive the second-order
asymptotic expansion of the
solution to the model problem for arbitrary \(n,r\in \mathbb{N}\), we analogously start
by rephrasing the
governing system of Newtonian equations~\eqref{eq:2} by transforming the fast
degrees of freedom
\((z_\eps, \dot{z}_\eps)\) into action--angle variables \((\theta_\eps, \phi_\eps)\).

We denote the canonical momenta corresponding to the positions \((y_\eps, z_\eps)\) as
\((\eta_\eps, \zeta_\eps)\). Then, the equations of motion~\eqref{eq:2}, together with the
velocity relations
\begin{equation*}
    \dot{y}_\eps = \eta_\eps,\qquad \dot{z}_\eps =\zeta_\eps,\end{equation*}
are given by the canonical equations of motion belonging to the energy function
\begin{equation*}
    E_\eps = \frac{1}{2}\vert \eta_\eps \vert^2 + \frac{1}{2}\vert \zeta_\eps \vert^2 +V(y_\eps)
    + \frac{1}{2}\eps^{-2}\sum_{\lambda=1}^r \omega_\lambda^2(y_\eps)(z_\eps^\lambda)^2.
\end{equation*}
The transformation \((z_\eps, \zeta_\eps)\mapsto (\theta_\eps, \phi_\eps)\) can be found by the theory of
generating functions~\cite{Arnold1989} as presented
in~\cite[Appendix~C]{Bornemann}. For fixed \(y_\eps\), the
generating function is given by
\begin{equation*}
    S_0(z_\eps, \phi_\eps;y_\eps)
    =  \frac{1}{2\eps}\sum_{\lambda=1}^r\omega_\lambda(y_\eps)(z_\eps^\lambda)^2 \cot(\eps^{-1}\phi_\eps^\lambda),\end{equation*}
via \(\zeta_\eps = \partial S_0/\partial z_\eps\) and \(\theta_\eps = -\partial S_0/\partial \phi_\eps\). With
this transformation, the fast degrees of freedom \((z_\eps, \zeta_\eps)\) can be written as
\begin{equation*}
    z_\eps^\lambda = \eps \sqrt{ \frac{2\theta_\eps^\lambda}{\omega_\lambda(y_\eps)}}\sin(\eps^{-1}\phi_\eps^\lambda),
    \qquad \zeta_\eps^\lambda =\sqrt{2\theta_\eps^\lambda \omega_\lambda(y_\eps)}\cos(\eps^{-1}\phi_\eps^\lambda).
\end{equation*}
It turns out, however, that the transformation \((z_\eps, \zeta_\eps)\mapsto (\theta_\eps,\phi_\eps)\) is
symplectic only for fixed \(y_\eps\). To derive a transformation that
preserves the symplectic structure on
the whole phase-space, one introduces the generalised momenta \(p_\eps\)
through another transformation
\(\eta_\eps \mapsto p_\eps\). To this end, we define the extended generating function
\(S(y_\eps, p_\eps, z_\eps, \phi_\eps) = p_\eps^T y_\eps+S_0(z_\eps, \phi_\eps;y_\eps)\) which does not
transform the position \(y_\eps = \partial S /\partial p_\eps\), but changes the momentum \(\eta_\eps\) such
that the transformation remains symplectic on the whole phase-space. The missing
transformation of the
momentum \(\eta_\eps\) is given componentwise for \(j=1,\ldots, n\) by
\begin{equation*}
    \eta_\eps^j = \frac{\partial S}{\partial y^j_\eps} = p^j_\eps +\eps \sum_{\lambda=1}^r\frac{\theta_\eps^\lambda \cdot
        \partial_j \omega_\lambda(y_\eps)}{2\omega_\lambda(y_\eps)}\sin(2\eps^{-1}\phi^\lambda_\eps).
\end{equation*}
By construction, the resulting transformation
\((y_\eps, \eta_\eps;z_\eps, \zeta_\eps)\mapsto (y_\eps, p_\eps; \phi_\eps, \theta_\eps)\) is symplectic.

The energy can be expressed in the new coordinates as
\begin{IEEEeqnarray*}{rCl}
    E_\eps &=& \frac{1}{2}\vert p_\eps \vert^2 + V(y_\eps)+\sum_{\lambda=1}^r \theta_\eps^\lambda \omega_\lambda(y_\eps)
    + \eps\sum_{j=1}^n\sum_{\lambda=1}^r \frac{ \theta_\eps^\lambda p_\eps^j
        \cdot\partial_j \omega_\lambda(y_\eps)}{2\omega_\lambda(y_\eps)}\sin(2\eps^{-1}\phi^\lambda_\eps)\\
    &&+ \> \frac{\eps^2}{8} \sum_{j=1}^n \left( \sum_{\lambda=1}^r \frac{\theta_\eps^\lambda\cdot
            \partial_j \omega_\lambda(y_\eps)}{\omega_\lambda(y_\eps)} \sin(2\eps^{-1}\phi_\eps^\lambda)\right)^2.
\end{IEEEeqnarray*}
Thus, by the canonical formalism, the equations of motion take the form
\begin{equation*}
    \dot{\phi}_\eps^\lambda = \frac{\partial E_\eps}{\partial \theta_\eps^\lambda},\qquad \dot{\theta}_\eps^\lambda
    = -\frac{\partial E_\eps}{\partial \phi_\eps^\lambda},\qquad \dot{y}_\eps^j
    = \frac{\partial E_\eps}{\partial p_\eps^j},\qquad \dot{p}_\eps^j = - \frac{\partial E_\eps}{\partial y^j_\eps},\end{equation*}
for \(\lambda=1,\ldots, r\) and \( j= 1, \ldots, n\). After some calculations, we find that these
equations are given by
\begin{IEEEeqnarray}{rCl}
    \eqlabel{eq:8}
    \IEEEyesnumber\IEEEyessubnumber*
    \IEEEnonumber
    \dot{\phi}^\lambda_\eps &=& \omega_\lambda(y_\eps)+\eps \sum_{j=1}^n \frac{p^j_\eps
        \cdot \partial_j \omega_\lambda(y_\eps)}{2\omega_\lambda(y_\eps)}\sin(2\eps^{-1}\phi^\lambda_\eps)\\
    \label{eq:8a}
    && +\frac{\eps^2}{8}\sum_{j=1}^n\sum_{\mu=1}^r\frac{\theta^\mu_\eps \cdot \partial_j \omega_\mu(y_\eps) \cdot \partial_j \omega_\lambda(y_\eps) }{\omega_\mu(y_\eps)\omega_\lambda(y_\eps)}\left( \cos\left( 2\eps^{-1}\left( \phi_\eps^\mu - \phi^\lambda_\eps \right) \right) - \cos\left(2\eps^{-1}\left( \phi_\eps^\mu + \phi_\eps^\lambda \right)\right) \right),\\
    \IEEEnonumber
    \dot{\theta}^\lambda_\eps &=& -\sum_{j=1}^n \frac{\theta^\lambda_\eps p^j_\eps\cdot \partial_j \omega_\lambda(y_\eps)}{\omega_\lambda(y_\eps)}\cos\left( 2\eps^{-1}\phi^\lambda_\eps \right) \\
    \label{eq:8b}
    && - \frac{\eps}{4} \sum_{j=1}^n\sum_{\mu=1}^r \frac{\theta^\mu_\eps \theta^\lambda_\eps \cdot \partial_j \omega_\mu(y_\eps) \cdot \partial_j \omega_\lambda(y_\eps)}{\omega_\mu(y_\eps)\omega_\lambda(y_\eps)}\left( \sin\left( 2\eps^{-1}\left( \phi_\eps^\mu -\phi_\eps^\lambda \right) \right) + \sin\left( 2\eps^{-1}\left( \phi_\eps^\mu +\phi_\eps^\lambda \right) \right) \right),\\
    \label{eq:8c}
    \dot{y}^j_\eps &=& p^j_\eps + \eps  \sum_{\lambda=1}^r \frac{\theta^\lambda_\eps \cdot \partial_j \omega_\lambda(y_\eps)}{2\omega_\lambda(y_\eps)}\sin\left( 2\eps^{-1}\phi^\lambda_\eps \right),\\
    \IEEEnonumber
    \dot{p}^j_\eps &=& -\partial_j V(y_\eps)-\sum_{\lambda=1}^r \theta^\lambda_\eps \cdot\partial_j \omega_\lambda(y_\eps) - \eps \sum_{k=1}^n\sum_{\lambda=1}^r \frac{\theta^\lambda_\eps p^k_\eps }{2}\left(  \frac{\partial_j \partial_k \omega_\lambda(y_\eps)}{\omega_\lambda(y_\eps)}- \frac{ \partial_k \omega_\lambda(y_\eps) \cdot \partial_j \omega_\lambda(y_\eps)}{\omega_\lambda^2(y_\eps)} \right)\sin\left( 2\eps^{-1}\phi^\lambda_\eps \right)\IEEEeqnarraynumspace\\
    \IEEEnonumber
    && -\frac{\eps^2}{8} \sum_{k=1}^n  \sum_{\lambda=1}^r\sum_{\mu=1}^r \frac{\theta^\lambda_\eps \theta^\mu_\eps \cdot \partial_k \omega_\mu(y_\eps)}{\omega_\mu(y_\eps)} \left( \frac{\partial_j \partial_k \omega_\lambda(y_\eps)}{\omega_\lambda(y_\eps)} -\frac{ \partial_k \omega_\lambda(y_\eps) \cdot \partial_j \omega_\lambda(y_\eps)}{\omega_\lambda^2(y_\eps)}\right)\\
    \label{eq:8d}
    &&  \times \left( \cos\left(2\eps^{-1} \left(  \phi^\mu_\eps - \phi^\lambda_\eps \right)\right) - \cos\left(2\eps^{-1} \left(  \phi^\mu_\eps + \phi^\lambda_\eps \right)\right)\right).
\end{IEEEeqnarray}
The initial values as given in~\eqref{eq:7} transform to
\begin{equation}
    \label{eq:9}
    \phi_\eps(0)=0,\qquad  \theta_\eps^\lambda(0)= \theta_\ast^\lambda
    = \frac{\vert u_\ast^\lambda \vert^2}{2\omega_\lambda(y_\ast)},\qquad y_\eps(0)=y_\ast,\qquad p_\eps(0)=p_\ast.
\end{equation}

\subsection{Existence and uniqueness of a solution to the transformed model
    problem}

\label{sec:Exist-uniq-solut}

Let us denote the right-hand side of~\eqref{eq:8} as
\(\mathcal{F}_\eps \colon \mathbb{R}^{2m}\to \mathbb{R}^{2m}\). By assumption
\(\omega_\lambda\in C^\infty(\R^n)\) for \(\lambda = 1, \ldots, r\) and therefore
\(\mathcal{F}_\eps\in C^\infty(\R^{2m},\R^{2m})\) for \(0<\eps<\eps_0< \infty\). In particular,
\(\mathcal{F}_\eps\) is locally Lipschitz continuous. Hence, by the standard existence
and uniqueness theory
for ordinary differential equations (see for example~\cite{Teschl2012}), there
exists a \(T >0\) such that for
fixed \(0<\eps<\eps_0<\infty\) the initial value problem~\eqref{eq:8}--\eqref{eq:9} has a
unique solution
\begin{equation}
    \label{eq:10}
    (\phi_\eps, \theta_\eps,y_\eps, p_\eps)\in C^\infty([0,T],\R^{2m}).
\end{equation}

\section{Asymptotic expansion}

\label{sec:5}

In this section, we rigorously derive the second-order asymptotic expansion of \(\phi_\eps\),
\(\theta_\eps\), \(y_\eps\) and \(p_\eps\). We will see, that the leading-order
expansion follows directly
from the evolution equations~\eqref{eq:8}. To simplify these equations for the
subsequent analysis, we
introduce in Section~\ref{subsec:1} some suitable new notation. We then derive the
first- and second-order
asymptotic expansion in Section~\ref{subsec:2}.

\subsection{Leading-order expansion}

\label{sec:Lead-order-expansion}

We consider a sequence of solutions~\eqref{eq:10} for \(\eps\to 0\). The
right-hand side of the evolution
equations~\eqref{eq:8} is oscillatory and has rapidly oscillating terms of
leading-order. As a consequence,
the sequences \(\{\dot{\phi}_\eps\}\) and \(\{\theta_\eps\}\) are bounded in \(C^{0,1}([0,T], \R^r)\), and the
sequences \(\{\dot{y}_\eps\}\) and \(\{\dot{p}_\eps\}\) are bounded in \(C^{0,1}([0,T], \R^n)\), while
sequences of higher-order derivatives (in particular \(\{\ddot{\theta}_\eps\}\), which will
thus require
special attention in the later part of this analysis) become unbounded as
\(\eps\to 0\). It follows from the
extended Arzel\`a--Ascoli theorem~\cite[Chapter~I~\textsection 1]{Bornemann} that
we can
extract a subsequence, not
relabelled, and functions \(\theta_0 \in C^{0,1}([0,T], \R^r)\), \({\phi_0\in C^{1,1}([0,T], \R^r)}\) and
\(y_0, p_0 \in C^{1,1}([0,T], \R^n)\), such that
\begin{IEEEeqnarray}{rCl+rCl}
    \eqlabel{eq:13}
    \IEEEyesnumber\IEEEyessubnumber*
    \phi_\eps \to \phi_0& \quad \text{in} \quad& C^1([0,T],\R^r),& \ddot{\phi}_\eps \weak{\ast}\ddot{\phi}_0& \quad \text{in}
    \quad& L^\infty([0,T],\R^r),\\
    \theta_\eps \to \theta_0& \quad \text{in} \quad& C([0,T],\R^r),& \dot{\theta}_\eps \weak{\ast}\dot{\theta}_0& \quad \text{in}
    \quad& L^\infty([0,T],\R^r),\\
    y_\eps \to y_0& \quad \text{in} \quad& C^1([0,T],\R^n),&\ddot{y}_\eps \weak{\ast}\ddot{y}_0& \quad \text{in} \quad&
    L^\infty([0,T],\R^n),\\
    p_\eps \to p_0& \quad \text{in} \quad& C^1([0,T],\R^n),& \ddot{p}_\eps \weak{\ast}\ddot{p}_0& \quad \text{in} \quad&
    L^\infty([0,T],\R^n).
\end{IEEEeqnarray}
By taking the limit \(\eps\to0\) in Equations~\eqref{eq:8a},~\eqref{eq:8c} and~\eqref{eq:8d} and the
weak\(^\ast\) limit in~\eqref{eq:8b} we deduce that
\begin{equation*}
    \dot{\phi}^\lambda_0 = \omega_\lambda(y_0),\qquad \dot{\theta}^\lambda_0 =0,\qquad \dot{y}^j_0 = p_0^j,
    \qquad \dot{p}^j_0 = -\partial_j V(y_0)-\sum_{\lambda=1}^r\theta_\ast^\lambda \cdot\partial_j \omega_\lambda(y_0),\end{equation*}
for \(\lambda=1,\ldots,r\) and \(j=1,\ldots, n\), and in particular that
\(\theta_0^\lambda \equiv \theta_\ast^\lambda\) (compare with~\eqref{eq:9}). Moreover, since the right-hand
side of the limit equation
\begin{equation*}
    \ddot{y}^j_0 = -\partial_j V(y_0)-\sum_{\lambda=1}^r\theta_\ast^\lambda \cdot\partial_j \omega_\lambda(y_0)\end{equation*}
does not depend on a chosen subsequence, we can discard the extraction of a
subsequence altogether (see~\cite[Principle~5, Chapter~I \textsection 1]{Bornemann}). Note that the
above
convergence results extend Theorem~\ref{thm:1}.

\subsection{Reformulation of the governing equations}

\label{subsec:1}

It will be convenient to introduce some notation to simplify the system of
differential
equations~\eqref{eq:8}. To this end, we define for \(f\in C^\infty(\R^n)\), where \(f = f(y)\)
and
\(y\in C^\infty([0,T],\R^n)\), the expression
\begin{equation*}
    D_t^k D_j^l f \coloneqq \frac{d^k}{dt^k}\frac{\partial^l f}{\partial y_j^l},\end{equation*}
for \( k,l\in \mathbb{N}_0\) and \(j=1,\ldots, n\). We will often apply this notation in
combination with the function
\begin{equation*}
    L_\eps^\lambda \coloneqq \log (\omega_\lambda(y_\eps)),\end{equation*}
where \(\lambda = 1,\ldots, r\). Then, we can conveniently write, for instance,
\begin{equation}
    \label{eq:755}
    D L_\eps^\lambda = \sum_{j=1}^n D_j L_\eps^\lambda \cdot e_j \qquad \text{or} \qquad  D_t L_\eps^\lambda
    = \left\langle \dot{y}_\eps, D L_\eps^\lambda \right\rangle = \sum_{j=1}^n \dot{y}_\eps^j \cdot D_j L_\eps^\lambda,
\end{equation}
with \(e_j\) as the \(j\)th standard basis vector in \(\R^n\).
With these definitions, the equations
in~\eqref{eq:8} read

\begin{IEEEeqnarray}{rCl}
    \label{eq:11}
    \IEEEyesnumber\IEEEyessubnumber*
    \IEEEnonumber
    \dot{\phi}_\eps^\lambda &=& \omega_\lambda(y_\eps) + \frac{\eps}{2}\left\langle p_\eps, D L_\eps^\lambda \right\rangle\sin(2\eps^{-1}\phi_\eps^\lambda)\\
    \label{eq:11a}
    &&+\>\frac{\eps^2}{8}\sum_{\mu=1}^r \theta_\eps^\mu \left\langle DL_\eps^\lambda, DL_\eps^\mu \right\rangle \left( \cos\left( 2\eps^{-1}\left( \phi_\eps^\mu - \phi^\lambda_\eps \right) \right) - \cos\left(2\eps^{-1}\left( \phi_\eps^\mu + \phi_\eps^\lambda \right)\right) \right),\\
    \IEEEnonumber
    \dot{\theta}_\eps^\lambda&=& - \theta_\eps^\lambda \left\langle p_\eps, DL_\eps^\lambda \right\rangle \cos(2\eps^{-1}\phi_\eps^\lambda)\\
    \label{eq:11b}
    &&-\>\frac{\eps}{4}\sum_{\mu=1}^r \theta_\eps^\mu \theta_\eps^\lambda \left\langle DL_\eps^\lambda, DL_\eps^\mu \right\rangle \left( \sin\left( 2\eps^{-1}\left( \phi_\eps^\mu -\phi_\eps^\lambda \right) \right) + \sin\left( 2\eps^{-1}\left( \phi_\eps^\mu +\phi_\eps^\lambda \right) \right) \right),\\
    \label{eq:11c}
    \dot{y}_\eps^j &=& p_\eps^j+\frac{\eps}{2}\sum_{\lambda=1}^n \theta_\eps^\lambda \cdot D_j L_\eps^\lambda\sin(2\eps^{-1}\phi_\eps^\lambda),\\
    \IEEEnonumber
    \dot{p}_\eps^j &=& - D_j V(y_\eps) - \sum_{\lambda=1}^r \theta_\eps^\lambda\cdot D_j \omega_\lambda(y_\eps)-\frac{\eps}{2} \sum_{\lambda=1}^r \theta_\eps^\lambda \left\langle p_\eps, DD_jL_\eps^\lambda \right\rangle \sin(2\eps^{-1}\phi_\eps^\lambda)\\
    \label{eq:11d}
    &&-\> \frac{\eps^2}{8}\sum_{\lambda=1}^r \sum_{\mu=1}^r \theta_\eps^\lambda\theta_\eps^\mu \left\langle DL_\eps^\mu, DD_j L_\eps^\lambda \right\rangle \left( \cos\left(2\eps^{-1} \left(  \phi^\mu_\eps - \phi^\lambda_\eps \right)\right) - \cos\left(2\eps^{-1} \left(  \phi^\mu_\eps + \phi^\lambda_\eps \right)\right)\right).
\end{IEEEeqnarray}
Moreover, solving~\eqref{eq:11c} with respect to \(p_\eps^j\) and inserting the
result
into~\eqref{eq:11a},~\eqref{eq:11b} and~\eqref{eq:11d} brings the equations of motion to
their final form
\begin{IEEEeqnarray}{rCl}
    \eqlabel{eq:12}
    \IEEEyesnumber\IEEEyessubnumber*
    \label{eq:12a}
    \dot{\phi}_\eps^\lambda &=& \omega_\lambda(y_\eps) +\frac{\eps}{2} D_t L_\eps^\lambda \sin(2\eps^{-1}\phi_\eps^\lambda),\\
    \label{eq:12b}
    \dot{\theta}_\eps^\lambda &=& -\theta_\eps^\lambda \cdot D_t L_\eps^\lambda \cos(2\eps^{-1}\phi_\eps^\lambda),\\
    \label{eq:12c}
    \dot{y}_\eps^j &=& p_\eps^j +\frac{\eps}{2} \sum_{\lambda=1}^r \theta_\eps^\lambda
    \cdot D_j L_\eps^\lambda \sin(2\eps^{-1}\phi_\eps^\lambda),\\
    \label{eq:12d}
    \dot{p}_\eps^j &=& -D_j V(y_\eps)-\sum_{\lambda=1}^r \theta_\eps^\lambda \cdot D_j \omega_\lambda(y_\eps)
    - \frac{\eps}{2} \sum_{\lambda=1}^r \theta_\eps^\lambda \cdot D_t D_j L_\eps^\lambda \sin(2\eps^{-1}\phi_\eps^\lambda).
\end{IEEEeqnarray}

\subsection{First- and second-order expansion}

\label{subsec:2}

We now define functions that will appear throughout this work and then state the
first main
result~\ref{thm:p1}.
\begin{definition}
    \label{def:1}
    Let \((\phi_\eps, \theta_\eps, y_\eps, p_\eps)\) be the solution to~\eqref{eq:8}--\eqref{eq:9} and
    \((\phi_0,\theta_0,y_0,p_0)\) be as in~\eqref{eq:13}. With Assumption~\ref{ass:1} and the notation
    introduced above we define for \(\lambda=1,\ldots,r\) and \(j=1,\ldots,n\) the functions
    \begin{equation*}
        \theta_1^{\lambda\eps}\coloneqq\frac{\theta_\eps^\lambda -\theta_\ast^\lambda}{\eps},\qquad
        \phi_2^{\lambda\eps}\coloneqq \frac{\phi_\eps^\lambda-\phi_0^\lambda}{\eps^2},\qquad
        y_2^{j\eps}\coloneqq \frac{y_\eps^j - y_0^j}{\eps^2},\qquad p_2^{j\eps}\coloneqq\frac{p_\eps^j
            - p_0^j}{\eps^2},\qquad\theta_2^{\lambda\eps} \coloneqq \frac{\theta_1^{\lambda\eps} -[\theta_1^\lambda]^\eps}{\eps},
    \end{equation*}
    \begin{IEEEeqnarray*}{rCl+rCl}
        [\theta_1^\lambda]^\eps &\coloneqq& -\frac{\theta_\ast^\lambda\cdot
            D_t L_0^\lambda}{2\omega_\lambda(y_0)}\sin(2\eps^{-1}\phi^\lambda_0),& [\phi^\lambda_2]^\eps
        &\coloneqq& -\frac{D_t L_0^\lambda}{4\omega_\lambda(y_0)}\cos(2\eps^{-1}\phi_0^\lambda),\\
        \phantom{}[y^j_2]^\eps&\coloneqq& -\sum_{\lambda=1}^r \frac{\theta_\ast^\lambda \cdot D_j
            L_0^\lambda}{4\omega_\lambda(y_0)}\cos(2\eps^{-1}\phi^\lambda_0),&  [p^j_2]^\eps &\coloneqq& \sum_{\lambda=1}^r \frac{d}{dt}\left( \frac{\theta_\ast^\lambda \cdot D_j L_0^\lambda}{4\omega_\lambda(y_0)} \right)\cos(2\eps^{-1}\phi^\lambda_0)
    \end{IEEEeqnarray*}
    and
    \begin{IEEEeqnarray*}{rCl}
        [\theta_2^\lambda]^\eps &\coloneqq& \sum_{\mu=1}^r\frac{\theta_\ast^\lambda \theta_\ast^\mu \left\langle D\omega_\mu(y_0), DL_0^\lambda \right\rangle }{4\omega^2_\lambda(y_0)}\cos(2\eps^{-1}\phi^\lambda_0)- \frac{\theta_\ast^\lambda\left\langle D^2L_0^\lambda \dot{y}_0, \dot{y}_0 \right\rangle }{4\omega^2_\lambda(y_0)}\cos(2\eps^{-1}\phi^\lambda_0)+\frac{\theta_\ast^\lambda (D_t L_0^\lambda)^2}{4\omega^2_\lambda(y_0)}\cos(2\eps^{-1}\phi^\lambda_0)\\
        && + \> \frac{(\theta_\ast^\lambda)^2 \vert DL_0^\lambda \vert^2}{16 \omega_\lambda(y_0)}\cos(4\eps^{-1}\phi^\lambda_0)-\frac{\theta_\ast^\lambda \cdot D_t L_0^\lambda}{\omega_\lambda(y_0)}\bar{\phi}_2^\lambda \cos(2\eps^{-1}\phi^\lambda_0)+\frac{\theta_\ast^\lambda \left\langle DV(y_0), DL_0^\lambda \right\rangle }{4\omega_\lambda^2(y_0)}\cos(2\eps^{-1}\phi_0^\lambda)\\
        && +\> \sum_{\substack{\mu=1\\\mu \neq \lambda}}^r \frac{\theta_\ast^\lambda \theta_\ast^\mu \left\langle DL_0^\mu, DL_0^\lambda \right\rangle }{8}\left\{ \frac{\cos(2\eps^{-1}(\phi^\mu_0-\phi^\lambda_0))}{\omega_\mu(y_0)-\omega_\lambda(y_0)} + \frac{\cos(2\eps^{-1}(\phi^\mu_0+\phi^\lambda_0))}{\omega_\mu(y_0)+\omega_\lambda(y_0)}\right\}.
    \end{IEEEeqnarray*}
\end{definition}

\begin{theorem}\label{thm:2}The functions specified in Definition~\ref{def:1} satisfy
    \begin{IEEEeqnarray}{rClCl+rClCl}
        \label{eq:19}
        \theta_1^{\eps} - [\theta_1]^\eps &\to& 0& \quad \text{in} \quad& C([0,T],\R^r),& \frac{d}{dt}\left(  \theta_1^{\eps}
        - [\theta_1]^\eps \right)&\weak{\ast}& 0& \quad \text{in} \quad& L^\infty([0,T],\R^r),\\
        \label{eq:20}
        \phi_2^{\eps} -[\phi_2]^\eps &\to& \bar{\phi}_2& \quad \text{in} \quad& C([0,T],\R^r),& \frac{d}{dt}\left( \phi_2^{\eps}
        -[\phi_2]^\eps \right)& \weak{\ast}& \frac{d\bar{\phi}_2}{dt}& \quad \text{in} \quad& L^\infty([0,T],\R^r),\\
        \label{eq:21}
        y_2^{\eps} - [y_2]^\eps &\to& \bar{y}_2& \quad \text{in}\quad& C([0,T],\R^n),& \frac{d}{dt}\left( y_2^{\eps}
        - [y_2]^\eps \right)& \weak{\ast}& \frac{d \bar{y}_2}{dt}& \quad \text{in} \quad& L^\infty([0,T],\R^n),\\
        \label{eq:22}
        p_2^{\eps} - [p_2]^\eps &\to& \bar{p}_2& \quad \text{in} \quad& C([0,T],\R^n),& \frac{d}{dt}\left( p_2^{\eps}
        - [p_2]^\eps \right)&\weak{\ast}& \frac{d \bar{p}_2}{dt}& \quad \text{in} \quad& L^\infty([0,T],\R^n)
    \end{IEEEeqnarray}
    and
    \begin{equation}
        \label{eq:26}
        \theta_2^{\eps} -[\theta_2]^\eps\to \bar{\theta}_2 \quad \text{in} \quad C([0,T],\R^r),
    \end{equation}
    where \((\bar{\phi}_2, \bar{\theta}_2, \bar{y}_2, \bar{p}_2)\) is the unique solution to the inhomogeneous
    linear system of differential equations
    \begin{IEEEeqnarray*}{rCl}
        \eqlabel{eq:27}
        \IEEEyesnumber\IEEEyessubnumber*
        \label{eq:27a}
        \frac{d \bar{\phi}^\lambda_2}{dt} &=&  \left\langle  D\omega_\lambda(y_0),\bar{y}_2\right\rangle + \frac{\theta_\ast^\lambda\vert D_y L^\lambda_0 \vert^2}{8} - \frac{(D_t L^\lambda_0)^2}{8 \omega_\lambda(y_0)},\\
        \label{eq:27b}
        \frac{d \bar{\theta}^\lambda_2}{dt} &=& \frac{d}{dt}\frac{\theta_\ast^\lambda(D_t L_0^\lambda)^2}{8\omega_\lambda^2(y_0)},\\
        \label{eq:27c}
        \frac{d \bar{y}^j_2}{dt} &=& \bar{p}^j_2 - \sum_{\lambda=1}^r\frac{\theta^\lambda_\ast \cdot D_j L^\lambda_0 \cdot D_t L^\lambda_0 }{4\omega_\lambda(y_0)},\\
        \IEEEnonumber
        \frac{d \bar{p}^j_2}{dt} &=&-\left\langle \bar{y}_2, DD_j V(y_0) \right\rangle - \sum_{\lambda=1}^r \bar{\theta}_2^\lambda \cdot D_j \omega_\lambda(y_0) - \sum_{\lambda=1}^r\theta_\ast^\lambda \left\langle \bar{y}_2, D D_j \omega_\lambda(y_0) \right\rangle\\
        \label{eq:27d}
        &&- \>\sum_{\lambda=1}^r\frac{\left( \theta_\ast^\lambda \right)^2 \left\langle D L_0^\lambda, D D_j L_0^\lambda \right\rangle}{8} +\sum_{\lambda=1}^r\frac{\theta_\ast^\lambda \cdot D_t D_j L_0^\lambda \cdot D_t L_0^\lambda}{4\omega_\lambda(y_0)},
    \end{IEEEeqnarray*}
    for \(\lambda=1,\ldots,r\) and \(j=1,\ldots,n\), with \(\eps\)-independent initial values
    \begin{equation}
        \label{eq:30}
        \bar{\phi}_2(0)=-[\phi_2]^\eps(0),\qquad\bar{\theta}_2(0) = -[\theta_2]^\eps(0),\qquad \bar{y}_2(0)
        =-[y_2]^\eps(0), \qquad\bar{p}_2(0) = -[p_2]^\eps(0).
    \end{equation}
\end{theorem}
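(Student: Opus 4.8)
The plan is to establish the expansions \emph{order by order}: first the order-\(\eps\) correction \(\theta_1^{\eps}\), then simultaneously the order-\(\eps^2\) corrections \((\phi_2^{\eps},y_2^{\eps},p_2^{\eps})\), and finally the order-\(\eps^2\) correction \(\theta_2^{\eps}\). For each scaled quantity the mechanism is the same: starting from the reduced equations of motion~\eqref{eq:12}, subtract the corresponding leading-order equation from Section~\ref{sec:Lead-order-expansion}, divide by the appropriate power of \(\eps\), and Taylor-expand the coefficients \(\omega_\lambda(y_\eps)\), \(D_jV(y_\eps)\), \(L_\eps^\lambda\) around \(y_0\). Theorem~\ref{thm:1}, combined with a Gronwall estimate in which the rapidly oscillating forcing is integrated by parts (using \(\dot\phi_\eps^\lambda=\omega_\lambda(y_\eps)+\mathcal{O}(\eps)\geq\omega_\ast/2\) for small \(\eps\)), yields the a priori bounds \(\lvert y_\eps-y_0\rvert+\lvert p_\eps-p_0\rvert+\lvert\phi_\eps-\phi_0\rvert=\mathcal{O}(\eps^2)\) and \(\lvert\theta_\eps-\theta_\ast\rvert=\mathcal{O}(\eps)\) on \([0,T]\), which makes these expansions licit with controlled remainders. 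The resulting evolution equation for each scaled quantity carries a rapidly oscillating forcing of size \(\mathcal{O}(\eps^{-1})\), and the functions \([\theta_1]^{\eps},[\phi_2]^{\eps},[y_2]^{\eps},[p_2]^{\eps},[\theta_2]^{\eps}\) of Definition~\ref{def:1} are exactly primitives of these leading oscillations, chosen so that after subtracting them the differences \(\theta_1^{\eps}-[\theta_1]^{\eps}\), \(\phi_2^{\eps}-[\phi_2]^{\eps}\), \(y_2^{\eps}-[y_2]^{\eps}\), \(p_2^{\eps}-[p_2]^{\eps}\) satisfy equations whose right-hand sides are bounded in \(L^\infty([0,T])\).

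Concretely, for \(\theta_1^{\eps}\) I would rewrite \(\cos(2\eps^{-1}\phi_\eps^\lambda)=\tfrac{\eps}{2\dot\phi_\eps^\lambda}\tfrac{d}{dt}\sin(2\eps^{-1}\phi_\eps^\lambda)\) in~\eqref{eq:12b}, integrate by parts using \(\theta_1^{\lambda\eps}(0)=0\); the boundary term converges uniformly to \([\theta_1^\lambda]^{\eps}\) and the remaining integral is \(\mathcal{O}(\eps)\) by non-stationary phase, which gives~\eqref{eq:19} with \(\bar\theta_1=0\). For \((\phi_2^{\eps},y_2^{\eps},p_2^{\eps})\), the subtracted equations, after inserting \(\dot y_\eps\) from~\eqref{eq:12c} and the already-established \(\theta_1^{\eps}\approx[\theta_1]^{\eps}\) and \(\phi_2^{\eps}=\bar\phi_2+[\phi_2]^{\eps}+o(1)\), take the schematic form \(\tfrac{d}{dt}(\text{difference})=(\text{linear in }\bar y_2,\bar p_2,\bar\theta_2)+(\text{oscillatory, weak}^\ast\text{-null})\): the \(\mathcal{O}(\eps^{-1})\) parts cancel \(\tfrac{d}{dt}[\,\cdot\,]^{\eps}\) by construction, and the non-oscillatory right-hand sides in~\eqref{eq:27a}, \eqref{eq:27c}, \eqref{eq:27d} arise as the weak\(^\ast\) limits \(\tfrac12\) of \(\sin^2(2\eps^{-1}\phi_0^\lambda)\) and \(\cos^2(2\eps^{-1}\phi_0^\lambda)\), the cross terms with \(\mu\neq\lambda\) vanishing \emph{precisely by Assumption~\ref{ass:1}} (the phases \(\phi_0^\mu\mp\phi_0^\lambda\) have derivative \(\omega_\mu(y_0)\mp\omega_\lambda(y_0)\neq0\)). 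The bound on the oscillatory remainders promotes these differences to be equibounded in \(C^{0,1}([0,T])\), so the extended Arzel\`a--Ascoli theorem extracts subsequential limits; passing to the limit identifies them as a solution of the (triangular, hence closed) linear system~\eqref{eq:27a}, \eqref{eq:27c}, \eqref{eq:27d} with \(\bar\theta_2\) from~\eqref{eq:27b}, with the initial values~\eqref{eq:30} coming from \(\phi_2^{\eps}(0)=y_2^{\eps}(0)=p_2^{\eps}(0)=0\) and the \(\eps\)-independence of \([\,\cdot\,]^{\eps}(0)\) (since \(\phi_0(0)=0\)). Linearity and Gronwall give uniqueness of~\eqref{eq:27}--\eqref{eq:30}, so the whole sequence converges, and the weak\(^\ast\) convergence of the derivatives follows from the strong convergence together with the \(L^\infty\)-bound.

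The main obstacle is~\eqref{eq:26}, the equation for \(\theta_2^{\eps}=(\theta_1^{\eps}-[\theta_1]^{\eps})/\eps\), which requires integrating an \(\mathcal{O}(\eps^{-1})\) oscillatory quantity a \emph{second} time. A naive integration by parts fails here because it would differentiate \(\dot\theta_\eps\), whose time derivative \(\ddot\theta_\eps\) is only \(\mathcal{O}(\eps^{-1})\) rather than \(\mathcal{O}(1)\) (this is the point flagged before Lemma~\ref{lemma:6}); the remedy is to substitute the explicit expression \(\dot\theta_\eps^\lambda=-\theta_\eps^\lambda D_tL_\eps^\lambda\cos(2\eps^{-1}\phi_\eps^\lambda)\) \emph{before} integrating, to expand every coefficient to the required order, and to reduce all arising products of oscillations via product-to-sum identities, integrating each resulting pure oscillation separately. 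The order-two phases \(2\eps^{-1}\phi_0^\lambda\) and \(2\eps^{-1}(\phi_0^\mu\pm\phi_0^\lambda)\) have derivatives bounded away from zero by~\eqref{eq:1} and Assumption~\ref{ass:1} — this is also what keeps the resonance denominators \(\omega_\mu(y_0)\pm\omega_\lambda(y_0)\) appearing in \([\theta_2^\lambda]^{\eps}\) under control — whereas the order-three phases \(2\eps^{-1}(\phi_0^\mu+\phi_0^\nu-\phi_0^\lambda)\) that are generated may become stationary at the finitely many impact times of Assumption~\ref{ass:2}, where the non-degeneracy assumed there guarantees, via a stationary-phase estimate, that the contribution is still weak\(^\ast\)-null. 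The genuinely zero-frequency part of this doubly integrated expression produces the right-hand side of~\eqref{eq:27b}; verifying the cancellation of the \(\mathcal{O}(\eps^{-1})\) pieces against \(\tfrac{d}{dt}[\theta_2]^{\eps}\), bookkeeping all the oscillatory terms, and tracking which survive in the weak\(^\ast\) limit is the bulk of the work. Once this is done, \(\theta_2^{\eps}-[\theta_2]^{\eps}\) converges uniformly to the solution \(\bar\theta_2\) of~\eqref{eq:27b} with \(\bar\theta_2(0)=-[\theta_2]^{\eps}(0)\), completing the proof.
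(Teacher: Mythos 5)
Your proposal follows essentially the same route as the paper's proof: a Gronwall-type a priori bound on the scaled residuals (Lemma~\ref{lemma:4}), identification of the bracketed functions as primitives of the leading oscillations via integration by parts, weak\(^\ast\) limits of \(\sin^2\) and \(\cos^2\) producing the non-oscillatory right-hand sides with cross terms killed by Assumption~\ref{ass:1} and order-three phases handled by stationary phase at the impact times of Assumption~\ref{ass:2} (Lemmas~\ref{lemma:2},~\ref{lemma:7},~\ref{lemma:3}), the extended Arzel\`a--Ascoli theorem for subsequential limits, and uniqueness of the limit system to discard the subsequence (Lemma~\ref{lemma:10}). You also correctly isolate the genuine difficulty --- the second integration of the \(\mathcal{O}(\eps^{-1})\) oscillation in \(\theta_2^{\eps}\) and the need to substitute the explicit form of \(\dot\theta_\eps^\lambda\) before integrating --- which is precisely the content of Lemma~\ref{lemma:9}.
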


\subsection{Proof of Theorem~\ref{thm:2}}
\label{sec:6}
The proof of Theorem~\ref{thm:2} will use the following Lemmas~\ref{lemma:1}
to~\ref{lemma:10}.  We start by sketching the
general strategy of the proof.

Theorem~\ref{thm:2} states that the first- and second-order asymptotic expansions of
\(\phi_\eps\),
\(\theta_\eps\), \(y_\eps\) and \(p_\eps\) can be decomposed into rapidly
oscillating terms
\([\theta_1]^\eps\), \([\phi_2]^\eps\), \([y_2]^\eps\), \([p_2]^\eps\) and \([\theta_2]^\eps\), which
converge
weakly\(^\ast\) to zero, and slowly evolving terms \(\bar{\phi}_2\),
\(\bar{y}_2\), \(\bar{p}_2\) and
\(\bar{\theta}_2\), which describe the average motion of the second-order expansions
and are given as the
solution to an inhomogeneous linear system of ordinary differential equations.

To derive these second-order asymptotic expansions, we specified in Definition~\ref{def:1}
the scaled
first-order residual function \(\theta_1^\eps\) and the scaled second-order residual
functions
\(\phi_2^\eps\), \(y_2^\eps\), \(p_2^\eps\) and \(\theta_2^\eps\) by subtracting the
leading- and first-order
asymptotic expansion terms from the original solution to the model problem and
by scaling these residual terms
to appropriate order. The functions \(\phi_2^\eps\), \(y_2^\eps\), \(p_2^\eps\) and
\(\theta_2^\eps\) carry
all the information about the system's second-order asymptotic expansion in
their leading-order expression.
We thus analyse the limit \(\eps\to0\) of these terms.

In the proof of Theorem~\ref{thm:2}, we will repeatedly integrate by parts, which
requires us to regularly
divide by \(\dot{\phi}_\eps^\lambda\) and \(\dot{\phi}_\eps^\lambda - \dot{\phi}_\eps^\mu\)
\((\lambda\neq\mu)\).  Lemma~\ref{lemma:1} ensures that the resulting terms are well-defined,
provided that the
scale parameter \(\eps\) is small enough.

As the model problem is highly oscillatory, the interacting degrees of freedom
can exhibit resonances of
different types. Lemmas~\ref{lemma:2},~\ref{lemma:7} and~\ref{lemma:3} clarify how the interaction of
a
generic function \(u_\eps\) with a rapidly oscillating function \(\exp(i\eps^{-1}\psi_\eps)\)
affects their
interaction in the limit \(\eps\to0\). Here, \(u_\eps\) and \(\psi_\eps\) are
representatives of functions that
appear throughout the proof of Theorem~\ref{thm:2}. Lemmas~\ref{lemma:2},~\ref{lemma:7} and~\ref{lemma:3}
are
used in the derivation of the weak\(^\ast\) limit of specific rapidly
oscillating functions under the
non-resonance Assumptions~\ref{ass:1} and~\ref{ass:2}.

Similarly,  Lemma~\ref{lemma:5} provides information about the uniform convergence of
the term
\(u_\eps \exp(i\eps^{-1}\psi_\eps)- u_0 \exp(i\eps^{-1}\psi_0)\), which is a representation of functions that appear
throughout the proof of Theorem~\ref{thm:2}. Here, \(u_\eps \exp(i \eps^{-1}\psi_\eps)\) is rapidly
oscillating at leading-order. By subtracting the leading-order term
\(u_0 \exp(i\eps^{-1}\psi_0)\), their
difference converges uniformly under certain convergence assumptions on
\(u_\eps\) and \(\psi_\eps\).

In Lemmas~\ref{lemma:4} and~\ref{lemma:6} we show that the sequences of scaled residual
functions
\(\{\theta_1^\eps\}\), \(\{\phi_2^\eps\}\) and \(\{\theta_2^\eps\}\) are bounded in \(L^\infty([0,T], \R^r)\),
and \(\{y_2^\eps\}\) and \(\{p_2^\eps\}\) are bounded in \(L^\infty([0,T], \R^n)\). This is a
necessary
prerequisite for the analysis of the first- and second-order asymptotic
expansion.

In general, the rapidly oscillating terms \([\theta_1]^\eps\), \([\phi_2]^\eps\),
\([y_2]^\eps\),
\([p_2]^\eps\) and \([\theta_2]^\eps\), which do \emph{not} converge in the limit
\(\eps\to0\), can be found
through integration by parts. To find the evolution equation for the averaged
second-order expansion terms
\(\bar{\phi}_2\), \(\bar{y}_2\), \(\bar{p}_2\) and \(\bar{\theta}_2\), we analyse in Lemmas~\ref{lemma:8} and~\ref{lemma:9} the time derivatives of the terms \(\phi_2^\eps - [\phi_2]^\eps\),
\(y_2^\eps - [y_2]^\eps \), \(p_2^\eps - [p_2]^\eps\) and \(\theta_2^\eps - [\theta_2]^\eps\). They carry
information about the time derivative of \(\bar{\phi}_2\), \(\bar{y}_2\),
\(\bar{p}_2\) and \(\bar{\theta}_2\)
in their leading-order asymptotic expansion. Alaoglu's theorem~\cite[Principle
    3]{Bornemann} and the extended
Arzel\`a--Ascoli theorem~\cite[Principle 4]{Bornemann} justify the extraction
of a subsequence such that in
the weak\(^\ast\) limit an evolution equation for the average dynamics at
second-order emerges. However, since
the evolution equation has a unique solution,  Lemma~\ref{lemma:10} implies that the
extraction of a
subsequence can be discarded altogether, meaning the limit holds for the whole
sequence.

The following lemmas collectively proof Theorem~\ref{thm:2}. They are stated
separately for reference but
should be understood in the context of Theorem~\ref{thm:2}. As mentioned earlier, the
problem presented in
Section~\ref{sec:8} extends the model in~\cite{Klar2020}. More precisely, it
generalises the model
in~\cite{Klar2020} in two ways. Firstly, by describing the interaction of
\(r\) fast and \(n\) slow degrees of
freedom \((n,r\in \mathbb{N})\) instead of the interaction of one fast and one slow degree
of freedom. This
requires us to impose certain non-resonance conditions. Secondly, the model in
this article includes a slow
potential \(V=V(y)\) which is absent in~\cite{Klar2020}. These
generalisations make the following proof much
more involved, yet it mimics at its core the proof as presented
in~\cite{Klar2020}. As such, some of the
following preparatory lemmas, with model-specific alterations, can be found
in~\cite{Klar2020}. Nevertheless,
we will state and prove these lemmas here for the reader's convenience.

\begin{lemma}[Similar to Lemma~3.4 in~\cite{Klar2020}]
    \label{lemma:1}
    There exist constants \(0<C<\infty\) and \(0<\eps_0<\infty\) where
    \(\eps_0 = \eps_0(\phi_\ast, \theta_\ast, y_\ast, p_\ast, \omega,C)\) such that
    \(0 < C \leq \dot{\phi}^\lambda_\eps\) for \(\lambda=1,\ldots, r\) and
    \(0< C\leq \vert \dot{\phi}_\eps^\lambda - \dot{\phi}_\eps^\mu\vert\) for \(\lambda, \mu=1,\ldots,r\),
    \(\lambda\neq \mu\), for all \(0 < \eps <\eps_0\) small enough.
\end{lemma}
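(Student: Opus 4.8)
The claim is a uniform lower bound (away from zero) on $\dot\phi_\eps^\lambda$ and on $|\dot\phi_\eps^\lambda-\dot\phi_\eps^\mu|$ for $\lambda\neq\mu$, valid for all sufficiently small $\eps$. The natural route is: first establish that the trajectory $(\phi_\eps,\theta_\eps,y_\eps,p_\eps)$ stays in a fixed compact set on $[0,T]$ uniformly in $\eps$ (for $\eps$ small), then read off the bounds from the explicit form of $\dot\phi_\eps^\lambda$ in~\eqref{eq:12a} together with the uniform positivity~\eqref{eq:1} of the frequencies and Assumption~\ref{ass:1}. Concretely, from~\eqref{eq:12a} we have $\dot\phi_\eps^\lambda=\omega_\lambda(y_\eps)+\tfrac{\eps}{2}D_tL_\eps^\lambda\sin(2\eps^{-1}\phi_\eps^\lambda)$; since $\omega_\lambda(y_\eps)\ge\omega_\ast>0$ and the perturbation term is $O(\eps)$ on a compact set (because $D_tL_\eps^\lambda=\langle\dot y_\eps,DL_\eps^\lambda\rangle$ is bounded there), choosing $\eps_0$ small makes the perturbation smaller than $\omega_\ast/2$, giving $\dot\phi_\eps^\lambda\ge\omega_\ast/2=:C$. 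For the differences, subtract two copies of~\eqref{eq:12a} to get $\dot\phi_\eps^\lambda-\dot\phi_\eps^\mu=(\omega_\lambda(y_\eps)-\omega_\mu(y_\eps))+O(\eps)$; here I would use continuity together with the convergence $y_\eps\to y_0$ in $C^1$ (from Theorem~\ref{thm:1}, or its extension in Section~\ref{sec:Lead-order-expansion}) and Assumption~\ref{ass:1}, which says $\omega_\lambda(y_0(t))-\omega_\mu(y_0(t))\neq 0$ for all $t\in[0,T]$, to conclude that $|\omega_\lambda(y_0(t))-\omega_\mu(y_0(t))|$ has a positive minimum $2c$ on the compact interval $[0,T]$; then uniform convergence $y_\eps\to y_0$ plus the $O(\eps)$ correction gives $|\dot\phi_\eps^\lambda-\dot\phi_\eps^\mu|\ge c$ for $\eps$ small, and one takes $C=\min(\omega_\ast/2,c)$.

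\textbf{Getting the uniform a priori bounds.} The one genuine ingredient is the uniform-in-$\eps$ boundedness of the trajectory on $[0,T]$, which is needed to make "$O(\eps)$ on a compact set" rigorous and to know the solution even exists up to time $T$. The cleanest way is to use energy conservation: from~\eqref{eq:4}, $E_\eps=E_\ast$ is $\eps$-independent, and in action--angle coordinates $E_\eps=\tfrac12|p_\eps|^2+V(y_\eps)+\sum_\lambda\theta_\eps^\lambda\omega_\lambda(y_\eps)+O(\eps)(\cdots)$ with the $O(\eps)$ terms controlled by $|p_\eps|$, $\theta_\eps^\lambda$ and derivatives of $\omega_\lambda$. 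Since $V$ is bounded below, $\omega_\lambda\ge\omega_\ast$, and $\theta_\eps^\lambda\ge 0$ (it is an action, nonnegative by construction, starting at $\theta_\ast^\lambda\ge0$), the leading part of the energy controls $|p_\eps|^2$ and each $\theta_\eps^\lambda\omega_\lambda(y_\eps)$, hence each $\theta_\eps^\lambda$. This does not immediately bound $y_\eps$, but the homogenisation Theorem~\ref{thm:1} already furnishes $y_\eps\to y_0$ in $C^1([0,T],\R^n)$, so $\{y_\eps\}$ is bounded in $C^1$ and in particular stays in a fixed compact $K\subset\R^n$; on $K$ all of $\omega_\lambda,D\omega_\lambda,DD\omega_\lambda,V,DV$ and hence $L_\eps^\lambda=\log\omega_\lambda(y_\eps)$ and its $y$-derivatives are bounded, and $D_tL_\eps^\lambda=\langle\dot y_\eps,DL_\eps^\lambda\rangle$ is bounded because $\dot y_\eps$ is bounded. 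Thus every factor multiplying a power of $\eps$ in~\eqref{eq:12} is uniformly bounded on $[0,T]$ for $\eps<\eps_0$.

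\textbf{Main obstacle and dependence of $\eps_0$.} The only subtlety, and where I expect to spend the most care, is the logical order: existence of the solution on all of $[0,T]$ for small $\eps$, the uniform bounds, and the lower bound on $\dot\phi_\eps^\lambda$ are mildly intertwined (one wants the trajectory to exist and stay bounded in order to estimate, but one invokes the homogenisation result which presupposes a solution on $[0,T]$). The resolution is that Theorem~\ref{thm:1} and the local existence theory already guarantee, for this particular class of systems, a solution on any fixed $[0,T]$ for $\eps$ small (energy conservation precludes finite-time blow-up of $p_\eps,\theta_\eps$, and $y_\eps$ is controlled by the homogenisation limit), so one may simply fix $[0,T]$, invoke those results, extract the compact set $K$ and the bound $M$ on the relevant derivatives and on $|\dot y_\eps|$, and then define
\[
\eps_0\;:=\;\min\!\Bigl(\eps_0^{\mathrm{hom}},\ \frac{\omega_\ast}{2M'},\ \frac{c}{M'}\Bigr),
\]
where $M'$ is an explicit bound on the $\eps$-coefficient in $\dot\phi_\eps^\lambda-\dot\phi_\eps^\mu$ (a combination of $M$ and constants), $\eps_0^{\mathrm{hom}}$ is the threshold from the homogenisation estimates, and $c>0$ is half the minimum of $|\omega_\lambda(y_0(t))-\omega_\mu(y_0(t))|$ over $[0,T]$ and over the finitely many pairs $\lambda\neq\mu$ (positive by Assumption~\ref{ass:1} and compactness). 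All of $\eps_0^{\mathrm{hom}}$, $M$, $M'$, $c$ depend only on the data $(\phi_\ast,\theta_\ast,y_\ast,p_\ast)$, the frequency functions $\omega=(\omega_1,\dots,\omega_r)$, $V$, $T$, and the chosen constant $C=\min(\omega_\ast/2,c)$, which is exactly the claimed dependence. This finishes the proof.
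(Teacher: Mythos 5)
Your argument is correct and follows the same route the paper takes, which is simply to read the bounds off equation~\eqref{eq:12a} using the uniform positivity~\eqref{eq:1}, the non-resonance Assumption~\ref{ass:1} (with \(\gamma_\lambda=1\), \(\gamma_\mu=-1\)), and the uniform convergence \(y_\eps\to y_0\) to control the \(\mathcal{O}(\eps)\) perturbation; the paper compresses all of this into one line, and your write-up supplies the compactness and a priori boundedness details it leaves implicit.
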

\begin{proof}
    The claim follows directly from Assumption~\ref{ass:1} and~\eqref{eq:12a}.
\end{proof}

\begin{remark}Henceforth, we assume that \(0<\eps<\eps_0\) is small enough so that the statements
    of  Lemma~\ref{lemma:1}
    apply.
\end{remark}

\begin{lemma}[Lemma 3.5 in~\cite{Klar2020}]
    \label{lemma:2}
    Let \(\{u_\eps\}\) be a bounded sequence in \(C^{0,1}([0,T])\) and \(\{\psi_\eps\}\) be a
    bounded sequence
    in \(C^{1,1}([0,T])\) with \(0<C\leq\dot{\psi}_\eps\). Then, for all \(a,b\in[0,T]\):
    \begin{IEEEeqnarray*}{rClrCl}
        \int_a^b u_\eps \sin(\eps^{-1}\psi_\eps)\ud t&=& \mathcal{O}(\eps),\qquad \int_a^b u_\eps \cos(\eps^{-1}\psi_\eps)\ud t
        &=& \mathcal{O}(\eps).
    \end{IEEEeqnarray*}
\end{lemma}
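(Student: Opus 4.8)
The plan is a standard non-stationary-phase argument: exploit the lower bound $\dot{\psi}_\eps \ge C > 0$ to trade one power of the rapid oscillation for one power of $\eps$ via a single integration by parts. I write out the sine case; the cosine case is identical after interchanging the roles of $\sin$ and $\cos$. Starting from $\frac{d}{dt}\cos(\eps^{-1}\psi_\eps) = -\eps^{-1}\dot{\psi}_\eps\sin(\eps^{-1}\psi_\eps)$ and using that $\dot{\psi}_\eps$ stays bounded away from $0$, I rewrite
\[
  \int_a^b u_\eps \sin(\eps^{-1}\psi_\eps)\ud t
  = -\eps \int_a^b \frac{u_\eps}{\dot{\psi}_\eps}\,\frac{d}{dt}\cos(\eps^{-1}\psi_\eps)\ud t,
\]
which is legitimate because the denominator never vanishes, and then integrate by parts to obtain
\[
  \int_a^b u_\eps \sin(\eps^{-1}\psi_\eps)\ud t
  = -\eps\left[\frac{u_\eps}{\dot{\psi}_\eps}\cos(\eps^{-1}\psi_\eps)\right]_a^b
  + \eps \int_a^b \frac{d}{dt}\!\left(\frac{u_\eps}{\dot{\psi}_\eps}\right)\cos(\eps^{-1}\psi_\eps)\ud t .
\]

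The remaining work is to bound the two terms on the right uniformly in $\eps$, $a$ and $b$. For the boundary term this is immediate: $\|u_\eps\|_\infty$ is uniformly bounded since $\{u_\eps\}$ is bounded in $C^{0,1}([0,T])$, $\vert 1/\dot{\psi}_\eps\vert \le 1/C$, and $\vert\cos\vert\le 1$, so it is $\mathcal{O}(\eps)$. For the integral term I use the quotient rule, $\frac{d}{dt}(u_\eps/\dot{\psi}_\eps) = \dot{u}_\eps/\dot{\psi}_\eps - u_\eps\ddot{\psi}_\eps/\dot{\psi}_\eps^2$ (valid almost everywhere), and observe that $\vert\dot{u}_\eps\vert$ is bounded by the uniform Lipschitz constant of $u_\eps$, $\vert\ddot{\psi}_\eps\vert$ by the uniform Lipschitz constant of $\dot{\psi}_\eps$ from the $C^{1,1}$-bound on $\{\psi_\eps\}$, and $\dot{\psi}_\eps \ge C$; hence $\frac{d}{dt}(u_\eps/\dot{\psi}_\eps)$ has an $L^\infty([0,T])$-norm bounded by a constant depending only on the two uniform bounds and on $C$. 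The integral is therefore $\mathcal{O}(1)$ and this term is $\mathcal{O}(\eps)$. Summing the two contributions gives the claim, with implied constant depending only on $T$, $C$ and the $C^{0,1}$- and $C^{1,1}$-bounds of the sequences, uniformly over $a,b\in[0,T]$.

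The one point that deserves explicit comment — and essentially the only thing to check carefully — is the low regularity: $u_\eps$ is merely Lipschitz and $\dot{\psi}_\eps$ is merely Lipschitz (and bounded below), so $u_\eps/\dot{\psi}_\eps$ is Lipschitz, hence absolutely continuous, which is exactly what legitimises the integration by parts and the interpretation of $\frac{d}{dt}(u_\eps/\dot{\psi}_\eps)$ as an $L^\infty$ function. Beyond recording this, there is no genuine obstacle; the estimate is the textbook non-stationary-phase bound, and the content of the proof lies entirely in verifying that every constant produced along the way is controlled by the hypothesised uniform bounds rather than by $\eps$.
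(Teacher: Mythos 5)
Your proof is correct and is essentially the paper's own argument: a single integration by parts using $\dot{\psi}_\eps \ge C > 0$, with the boundary and integral terms bounded via the uniform $C^{0,1}$- and $C^{1,1}$-bounds. Your closing remark on the low regularity (Lipschitz functions are absolutely continuous with $L^\infty$ derivatives) is precisely the role played in the paper by the isomorphism $C^{k-1,1}([0,T])\cong W^{k,\infty}([0,T])$; the only cosmetic difference is that the paper works with the complex exponential and separates real and imaginary parts at the end.
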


\begin{proof}Integration by parts gives for \(0< \eps<\eps_0\) small enough
    \begin{equation*}
        \left|  \int_a^b u_\eps\exp\left( \frac{i\psi_\eps}{\eps}\right)\ud t\right|   \leq\eps
        \left|   \frac{u_\eps(a)}{ \dot{\psi}_\eps(a)} \right|   + \eps \left|   \frac{u_\eps(b)}{\dot{\psi}_\eps(b)} \right|
        + \eps \left|  \int_a^b \frac{d}{dt} \left( \frac{u_\eps}{\dot{\psi}_\eps} \right)
        \exp\left( \frac{i\psi_\eps}{\eps}\right) \ud t \right|   = \mathcal{O}(\eps).
    \end{equation*}
    The claim follows by considering the real and imaginary parts separately and the
    isometric isomorphism \(C^{k-1,1}([0,T])\cong W^{k,\infty}([0,T])\) (see~\cite[p.~154]{Gilbarg1983}).
\end{proof}

\begin{lemma}
    \label{lemma:7}
    Let \(u_0 \in C^2([0,T])\) and \(\psi_0\in C^3([0,T])\). Let \(\{u_\eps\}\) be a sequence in \(C^2([0,T])\)
    and \(\{\psi_\eps\}\) be a sequence in \(C^3([0,T])\) such that the sequences
    \(\{\eps^{-1}(u_\eps - u_0)\}\), \(\{\eps^{-2}(\psi_\eps -\psi_0)\}\) are bounded in
    \(L^\infty([0,T])\). Moreover, let \(t_i\in[0,T]\) be an impact point in time (see Assumption~\ref{ass:2})
    with \(\dot{\psi}_0(t_i)=0\) and \(\ddot{\psi}_0(t_i)\neq 0\).  Then, for all \(a,b\in [0,T]\):
    \begin{equation*}
        \int_a^b u_\eps \sin(\eps^{-1}\psi_\eps)\ud t= \mathcal{O}(\eps^{1/2}),\qquad
        \int_a^b u_\eps \cos(\eps^{-1}\psi_\eps)\ud t= \mathcal{O}(\eps^{1/2}).
    \end{equation*}
\end{lemma}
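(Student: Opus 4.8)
The plan is to reduce the estimate to a standard stationary-phase bound near the isolated impact point $t_i$ and to use the non-resonance Assumption~\ref{ass:2} away from it. First I would write $\psi_\eps = \psi_0 + (\psi_\eps - \psi_0)$ and $u_\eps = u_0 + (u_\eps - u_0)$, and split the oscillatory integral accordingly. Writing $\exp(i\eps^{-1}\psi_\eps) = \exp(i\eps^{-1}\psi_0)\exp(i\eps^{-1}(\psi_\eps-\psi_0))$ and using that $\eps^{-1}(\psi_\eps-\psi_0) = \eps \cdot \eps^{-2}(\psi_\eps-\psi_0)$ is $\mathcal O(\eps)$ uniformly (hence $\exp(i\eps^{-1}(\psi_\eps-\psi_0)) = 1 + \mathcal O(\eps)$), and that $u_\eps - u_0 = \mathcal O(\eps)$ uniformly, the integral $\int_a^b u_\eps \exp(i\eps^{-1}\psi_\eps)\,\ud t$ equals $\int_a^b u_0 \exp(i\eps^{-1}\psi_0)\,\ud t + \mathcal O(\eps)$. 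So it suffices to bound the fixed oscillatory integral $\int_a^b u_0 \exp(i\eps^{-1}\psi_0)\,\ud t$ by $\mathcal O(\eps^{1/2})$, and then take real and imaginary parts as in Lemma~\ref{lemma:2}.

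For the fixed integral I would localise: by Assumption~\ref{ass:2} the impact point $t_i$ with $\dot\psi_0(t_i)=0$, $\ddot\psi_0(t_i)\neq 0$ is isolated among the zeros of $\dot\psi_0$ on $[0,T]$ (the relevant $\psi_0$ here is a finite $\mathbb Z$-combination of the $\omega_\lambda(y_0)$, of which there are finitely many of bounded order, and for each such combination the zeros of its $t$-derivative are isolated), so there is a $\delta>0$ with $|\dot\psi_0(t)|\geq c>0$ for $t\in[a,b]\setminus(t_i-\delta,t_i+\delta)$. On the complement of the $\delta$-interval, integration by parts exactly as in Lemma~\ref{lemma:2} gives an $\mathcal O(\eps)$ contribution. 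On $(t_i-\delta,t_i+\delta)$ the classical van~der~Corput lemma of order two applies: since $\psi_0\in C^3$ with $|\ddot\psi_0|\geq c'>0$ on a neighbourhood of $t_i$ (shrink $\delta$), $\big|\int_{t_i-\delta}^{t_i+\delta} \exp(i\eps^{-1}\psi_0)\,\ud t\big| \leq C\,\eps^{1/2}$, and the smooth weight $u_0$ is absorbed by a further integration by parts (its total variation on the small interval is $\mathcal O(1)$), still yielding $\mathcal O(\eps^{1/2})$. Combining the two pieces gives $\int_a^b u_0\exp(i\eps^{-1}\psi_0)\,\ud t = \mathcal O(\eps^{1/2})$, and hence the claim.

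The main obstacle I expect is bookkeeping the uniformity: one must be sure that the constants in "$\exp(i\eps^{-1}(\psi_\eps-\psi_0))=1+\mathcal O(\eps)$" and "$u_\eps-u_0=\mathcal O(\eps)$" are genuinely uniform in $t$ and in the (finitely many) index combinations, which follows from the stated $L^\infty$-boundedness of $\{\eps^{-1}(u_\eps-u_0)\}$ and $\{\eps^{-2}(\psi_\eps-\psi_0)\}$ together with $\psi_0\in C^3$, $u_0\in C^2$; and that the van~der~Corput constant depends only on the lower bound $c'$ for $|\ddot\psi_0|$ near $t_i$, not on $\eps$. A minor point is that if several impact points occur in $[a,b]$ one partitions $[a,b]$ into finitely many subintervals each containing at most one, which is harmless since $I$ is finite by Assumption~\ref{ass:2}.
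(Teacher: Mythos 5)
Your proposal is correct and follows essentially the same route as the paper: the same decomposition $u_\eps e^{i\eps^{-1}\psi_\eps} = (u_\eps-u_0)e^{i\eps^{-1}\psi_\eps} - u_0 e^{i\eps^{-1}\psi_0}\bigl(1-e^{i\eps^{-1}(\psi_\eps-\psi_0)}\bigr) + u_0 e^{i\eps^{-1}\psi_0}$ to reduce to the fixed integral up to $\mathcal{O}(\eps)$, followed by localisation around $t_i$ with Lemma~\ref{lemma:2} away from the impact point and a second-order stationary-phase/van~der~Corput bound near it. The only cosmetic differences are your use of a sharp cutoff interval and the van~der~Corput formulation in place of the paper's smooth cutoff and citation of the method of stationary phase, which are interchangeable for the $\mathcal{O}(\eps^{1/2})$ bound.
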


\begin{proof}
    We treat the real and imaginary parts separately and write
    \begin{equation*}
        u_\eps \exp(i\eps^{-1}\psi_\eps) = (u_\eps -u_0)\exp(i\eps^{-1}\psi_\eps)
        -u_0 \exp(i\eps^{-1}\psi_0)\left( 1-\exp\left( i\eps^{-1}\left( \psi_\eps-\psi_0 \right) \right) \right)
        +u_0\exp(i\eps^{-1}\psi_0).
    \end{equation*}
    Since the sequences \(\{\eps^{-1}(u_\eps - u_0)\}\) and \(\{\eps^{-2}(\psi_\eps - \psi_0)\}\) are bounded in
    \(L^\infty([0,T])\), the claim is satisfied for the first two terms on the right-hand side. Moreover, let
    \(\eta \in C^\infty_0([0,T])\) and \(U_{t_i},V_{t_i}\) be small neighbourhoods around \(t_i\) such
    that
    \(\supp \eta = V_{t_i}\), \(U_{t_i}\subset V_{t_i}\) and \(\eta=1\) in \(U_{t_i}\) and write
    \begin{equation*}
        \int_a^b u_0 \exp(i\eps^{-1}\psi_0)\ud t=\int_{V_{t_i}} u_0 \exp(i\eps^{-1}\psi_0)\eta\ud t
        + \int_{[a,b]\setminus U_{t_i}} u_0 \exp(i\eps^{-1}\psi_0)(1-\eta)\ud t.
    \end{equation*}
    For the second integral we can apply  Lemma~\ref{lemma:2} since \(t_i \notin [a,b]\setminus U_{t_i}\), and
    obtain an error of order \(\mathcal{O}(\eps)\). For the first integral we use the method
    of stationary phase
    to derive
    \begin{equation*}
        \int_{V_{t_i}} u_0 \exp(i\eps^{-1}\psi_0)\eta\ud t= \mathcal{O}(\eps^{1/2}).
    \end{equation*}
    A detailed description of the method of stationary phase can be found, for
    example, in~\cite[\textsection 1,
        Proposition~3]{Stein1993}, where smoothness of \(u_0\) and \(\psi_0\) is
    assumed.  Here, we are only
    interested in the leading-order asymptotics, for which \(u_0\in C^2([0,T])\) and
    \(\psi_0\in C^3([0,T])\) is
    sufficient.
\end{proof}

\begin{lemma}[Generalisation of Lemma 3.6 in~\cite{Klar2020}]
    \label{lemma:3}
    Let \(u\in C^2(\mathbb{R}^r_{>0}\times\mathbb{R}^{r+2n})\) and \((\phi_\eps, \theta_\eps, y_\eps,p_\eps)\)
    be the solution to~\eqref{eq:8}--\eqref{eq:9}. Then, the sequence of functions
    \(\{u_\eps\}\) where
    \(u_\eps \coloneqq u(\dot{\phi}_\eps, \theta_\eps,\dot{y}_\eps, y_\eps)\) satisfies for all \(a,b\in[0,T]\)
    and \(k=1,2\):
    \begin{equation*}
        \int_a^b\dot{u}_\eps \cos(2k\eps^{-1}\phi^\lambda_\eps)\ud t\to \frac{2-k}{2}
        \int_a^b D_t L_0^\lambda\left( \omega_\lambda(y_0)\cdot\partial_\lambda u_0
        - \theta_\ast^\lambda\cdot\partial_{r+\lambda}u_0 \right)+ \theta_\ast^\lambda\sum_{j=1}^n D_j
        \omega_\lambda(y_0)\cdot \partial_{2r+j} u_0\ud t
    \end{equation*}
    and
    \begin{equation*}
        \int_a^b \dot{u}_\eps\sin(2\eps^{-1}\phi^\lambda_\eps)\ud t = \mathcal{O}(\eps).
    \end{equation*}
\end{lemma}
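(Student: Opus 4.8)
The plan is to differentiate $u_\eps$ by the chain rule and then insert the equations of motion. Writing $u_0=u(\dot\phi_0,\theta_0,\dot y_0,y_0)=u(\omega(y_0),\theta_\ast,p_0,y_0)$ and similarly for the partial derivatives,
\begin{equation*}
    \dot u_\eps=\sum_{\mu=1}^r\partial_\mu u_\eps\,\ddot\phi_\eps^\mu+\sum_{\mu=1}^r\partial_{r+\mu}u_\eps\,\dot\theta_\eps^\mu+\sum_{j=1}^n\partial_{2r+j}u_\eps\,\ddot y_\eps^j+\sum_{j=1}^n\partial_{2r+n+j}u_\eps\,\dot y_\eps^j .
\end{equation*}
By~\eqref{eq:13} and $u\in C^2$, every $\partial u_\eps$ is bounded in $C^{0,1}([0,T])$ and converges uniformly to the corresponding $\partial u_0$, and $\{\ddot\phi_\eps\},\{\ddot y_\eps\}$ are bounded in $L^\infty([0,T])$ since $\{\dot\phi_\eps\},\{\dot y_\eps\}$ are bounded in $C^{0,1}([0,T])$. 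I would substitute~\eqref{eq:12b} and~\eqref{eq:12c} directly and the once-more differentiated forms of~\eqref{eq:12a} and~\eqref{eq:12c}. Using $\tfrac{d}{dt}\omega_\mu(y_\eps)=\omega_\mu(y_\eps)D_t L_\eps^\mu$, $\omega_\nu(y_\eps)D_j L_\eps^\nu=D_j\omega_\nu(y_\eps)$ and $\dot\phi_\eps^\mu=\omega_\mu(y_\eps)+\mathcal{O}(\eps)$, one obtains
\begin{equation*}
    \ddot\phi_\eps^\mu=\omega_\mu(y_\eps)D_t L_\eps^\mu\left(1+\cos(2\eps^{-1}\phi_\eps^\mu)\right)+\mathcal{O}(\eps),\qquad \dot\theta_\eps^\mu=-\theta_\eps^\mu D_t L_\eps^\mu\cos(2\eps^{-1}\phi_\eps^\mu),
\end{equation*}
together with $\dot y_\eps^j=p_\eps^j+\mathcal{O}(\eps)$ and $\ddot y_\eps^j=-D_j V(y_\eps)-\sum_\nu\theta_\eps^\nu D_j\omega_\nu(y_\eps)+\sum_\nu\theta_\eps^\nu D_j\omega_\nu(y_\eps)\cos(2\eps^{-1}\phi_\eps^\nu)+\mathcal{O}(\eps)$, each $\mathcal{O}(\eps)$ being $\mathcal{O}(\eps)$ in $L^\infty([0,T])$ times bounded oscillations. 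The cancellation driving this is that the $\eps^{-1}$ produced by differentiating $\sin(2\eps^{-1}\phi_\eps^\mu)$ exactly cancels the explicit prefactor $\eps$ in~\eqref{eq:12a} and~\eqref{eq:12c}, creating the $\mathcal{O}(1)$ terms $c_\eps^\mu\cos(2\eps^{-1}\phi_\eps^\mu)$ whose coefficients $c_\eps^\mu$, like the non-oscillatory parts, are bounded in $C^{0,1}([0,T])$ and converge uniformly to their obvious limits.

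Next I would multiply $\dot u_\eps$ by $\cos(2k\eps^{-1}\phi_\eps^\lambda)$ (respectively $\sin(2\eps^{-1}\phi_\eps^\lambda)$), integrate over $[a,b]$, and use the product-to-sum identities to reduce every product of two trigonometric factors to single ones with phases $2\eps^{-1}(\phi_\eps^\mu\pm k\phi_\eps^\lambda)$, together with $4\eps^{-1}\phi_\eps^\lambda$ in the diagonal case $\mu=\lambda$, $k=1$. Each resulting integral is then classified. A non-oscillatory coefficient times $\cos(2k\eps^{-1}\phi_\eps^\lambda)$ or $\sin(2\eps^{-1}\phi_\eps^\lambda)$ is $\mathcal{O}(\eps)$ by Lemma~\ref{lemma:1} and Lemma~\ref{lemma:2}. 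For the $+$-combinations the phase derivative tends to $\omega_\mu(y_0)+k\omega_\lambda(y_0)\ge\omega_\ast>0$, so Lemma~\ref{lemma:2} gives $\mathcal{O}(\eps)$. For $\phi_\eps^\mu-\phi_\eps^\lambda$ with $\mu\ne\lambda$, which occurs only when $k=1$, Assumption~\ref{ass:1} forces $\omega_\mu(y_0(t))\ne\omega_\lambda(y_0(t))$ on $[0,T]$, so Lemma~\ref{lemma:2} again applies. For $\phi_\eps^\mu-2\phi_\eps^\lambda$ with $\mu\ne\lambda$, which occurs only when $k=2$, the limiting phase derivative $\omega_\mu(y_0)-2\omega_\lambda(y_0)$ may vanish, but by Assumption~\ref{ass:2} only at finitely many impact times, at which it is transversal; cutting off small neighbourhoods of these times, Lemma~\ref{lemma:2} handles the complement and the stationary-phase Lemma~\ref{lemma:7} the neighbourhoods, giving $\mathcal{O}(\eps^{1/2})$ --- the hypotheses of Lemma~\ref{lemma:7} hold because $\phi_0^\mu,\phi_0^\lambda$ are smooth and $\{\phi_2^\eps\}$ is bounded by Lemmas~\ref{lemma:4} and~\ref{lemma:6}. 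Finally, in the diagonal case $k=1$, $\mu=\lambda$, one has $\cos^2\alpha=\tfrac12(1+\cos 2\alpha)$ with $\alpha=2\eps^{-1}\phi_\eps^\lambda$, whose first term is genuinely non-oscillatory, whereas $\cos(2\eps^{-1}\phi_\eps^\lambda)\sin(2\eps^{-1}\phi_\eps^\lambda)=\tfrac12\sin(4\eps^{-1}\phi_\eps^\lambda)$ has no surviving part.

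Collecting, for $k=1$ only the diagonal $\mu=\lambda$ pieces of $\partial_\lambda u_\eps\ddot\phi_\eps^\lambda$, of $\partial_{r+\lambda}u_\eps\dot\theta_\eps^\lambda$ and of $\partial_{2r+j}u_\eps\ddot y_\eps^j$ survive, each carrying the factor $\tfrac12$ and with all coefficients replaced by their uniform limits; they contribute, respectively, $\tfrac12\int_a^b\omega_\lambda(y_0)D_t L_0^\lambda\partial_\lambda u_0\,\ud t$, $-\tfrac12\int_a^b\theta_\ast^\lambda D_t L_0^\lambda\partial_{r+\lambda}u_0\,\ud t$ and $\tfrac12\int_a^b\theta_\ast^\lambda\sum_{j=1}^n D_j\omega_\lambda(y_0)\partial_{2r+j}u_0\,\ud t$, whose sum is the asserted limit since $\tfrac{2-k}{2}=\tfrac12$; the term $\partial_{2r+n+j}u_\eps\dot y_\eps^j$ is non-oscillatory and contributes only $\mathcal{O}(\eps)$. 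For $k=2$ no diagonal piece appears, so every contribution is $\mathcal{O}(\eps^{1/2})$ and the limit is $0$, consistent with $\tfrac{2-k}{2}=0$; and the full integral against $\sin(2\eps^{-1}\phi_\eps^\lambda)$ is $\mathcal{O}(\eps)$ by the same classification. The hard part will be the bookkeeping of the first two steps --- reorganising $\dot u_\eps$ after substituting the twice-differentiated equations of motion and keeping track of the single cancellation $\eps\cdot\eps^{-1}=1$ that turns $\eps\,\tfrac{d}{dt}\sin(2\eps^{-1}\phi_\eps^\mu)$ into an $\mathcal{O}(1)$ resonant term, so that exactly the diagonal pieces survive --- and then checking for each resulting phase that the hypotheses of Lemmas~\ref{lemma:2} and~\ref{lemma:7}, hence Assumptions~\ref{ass:1} and~\ref{ass:2}, do apply.
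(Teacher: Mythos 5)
Your proposal follows the paper's proof essentially verbatim: differentiate \(u_\eps\) by the chain rule, substitute the equations of motion~\eqref{eq:12} (whose differentiated oscillatory terms supply, after the cancellation \(\eps\cdot\eps^{-1}=1\), the \(\mathcal{O}(1)\) resonant contributions), reduce all products of trigonometric factors by the product-to-sum identities, and dispose of each resulting phase via Lemma~\ref{lemma:2} (non-resonant phases, using Lemma~\ref{lemma:1} and Assumption~\ref{ass:1}) or Lemma~\ref{lemma:7} (the order-three resonances that arise only for \(k=2\), using Assumption~\ref{ass:2}), so that exactly the diagonal \(\mu=\lambda\), \(k=1\) terms survive with the factor \((2-k)/2\). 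The only point to watch --- which the paper leaves implicit --- is that your appeal to Lemmas~\ref{lemma:4} and~\ref{lemma:6} to verify the hypotheses of Lemma~\ref{lemma:7} is needed only in the \(k=2\) case, while Lemma~\ref{lemma:4} in turn uses only the \(\sin\)-statement (and \(k=1\) case) of the present lemma, which require no stationary-phase argument; ordering the claims accordingly removes the apparent circularity.
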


\begin{proof}
    The equations in~\eqref{eq:12} imply
    \begin{IEEEeqnarray*}{rCl}
        \dot{u}_\eps&=&\sum_{\mu=1}^r \partial_\mu u_\eps \cdot \left( D_t \omega_\mu(y_\eps) + \frac{\eps}{2} D^2_t
        L_\eps^\mu \sin(2\eps^{-1}\phi^\mu_\eps) +  D_t L_\eps^\mu \cdot \dot{\phi}_\eps^\mu \cos(2\eps^{-1}\phi^\mu_\eps) \right)\\
        && -\>\sum_{\mu=1}^r \partial_{r+\mu} u_\eps\cdot \theta_\eps^\mu \cdot D_t L_\eps^\mu \cos(2\eps^{-1}\phi_\eps^\mu)\\
        && -\> \sum_{j=1}^n\partial_{2r+j}u_\eps \cdot \left(D_j V(y_\eps)+\sum_{\mu=1}^r \theta_\eps^\mu \cdot D_j
        \omega_\mu(y_\eps)-\sum_{\mu=1}^r \theta_\eps^\mu\cdot D_j \omega_\mu(y_\eps)\cos(2\eps^{-1}\phi^\mu_\eps)  \right)\\
        && +\> \sum_{j=1}^n \partial_{2r+n+j}u_\eps \cdot \dot{y}^j_\eps.
    \end{IEEEeqnarray*}
    The claim follows from the uniform convergence results in~\eqref{eq:13}, Lemmas~\ref{lemma:2}
    and~\ref{lemma:7} and the trigonometric identities
    \begin{equation*}
        2\cos(x)\cos(y) =\cos\left( x+y \right)+\cos\left( x-y \right), \qquad
        2\cos(x)\sin(y) = \sin\left(x+y \right)-\sin\left( x-y \right).
    \end{equation*}
\end{proof}

\begin{lemma}[Similar to Lemma~3.9 in~\cite{Klar2020}]
    \label{lemma:5}
    Let \(u_0,\psi_0 \in C^1([0,T])\) and let \(\{u_\eps\},\{\psi_\eps\}\) be sequences in \(C^1([0,T])\) such
    that the sequences \(\{\dot{u}_\eps\}\), \(\{\eps^{-1}\left( u_\eps - u_0 \right)\}\),
    \(\{\eps^{-2}\left( \psi_\eps - \psi_0 \right)\}\) and \(\{\eps^{-1}( \dot{\psi}_\eps - \dot{\psi}_0 )\}\)
    are bounded in \(L^\infty([0,T])\). Then, for \(v_\eps\) such that
    \begin{equation*}
        v_\eps \coloneqq u_\eps\exp(i\eps^{-1}\psi_\eps) - u_0\exp(i\eps^{-1}\psi_0),
    \end{equation*}
    the sequence \(\{\eps^{-1}v_\eps\}\) is bounded in \(L^\infty([0,T], \mathbb{C})\) and in particular
    \begin{equation*}
        v_\eps \to 0 \quad \text{in} \quad C([0,T], \mathbb{C}),\qquad \dot{v}_\eps \weak{\ast} 0 \quad \text{in}
        \quad L^\infty([0,T], \mathbb{C}).
    \end{equation*}
\end{lemma}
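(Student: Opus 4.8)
The plan is to establish the three assertions directly by elementary estimates — no earlier lemma is needed. The starting point is the decomposition
\begin{equation*}
    v_\eps = (u_\eps - u_0)\exp(i\eps^{-1}\psi_\eps) + u_0\bigl(\exp(i\eps^{-1}\psi_\eps) - \exp(i\eps^{-1}\psi_0)\bigr),
\end{equation*}
together with the elementary bounds $\vert\exp(ia)\vert = 1$ and $\vert\exp(ia)-\exp(ib)\vert \le \vert a-b\vert$ for real $a,b$. For the first summand, $\eps^{-1}\vert u_\eps - u_0\vert$ is bounded by hypothesis and the exponential has modulus one; for the second summand, $u_0$ is bounded on the compact interval $[0,T]$ and
\begin{equation*}
    \bigl\vert\exp(i\eps^{-1}\psi_\eps) - \exp(i\eps^{-1}\psi_0)\bigr\vert \le \eps^{-1}\vert\psi_\eps - \psi_0\vert = \mathcal{O}(\eps)
\end{equation*}
since $\{\eps^{-2}(\psi_\eps - \psi_0)\}$ is bounded. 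Hence $\{\eps^{-1}v_\eps\}$ is bounded in $L^\infty([0,T],\mathbb{C})$, and in particular $\Vert v_\eps\Vert_{C([0,T])} = \mathcal{O}(\eps)\to 0$, which gives the uniform convergence $v_\eps\to 0$.

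For the weak\(^\ast\) convergence of $\dot v_\eps$, I would first show that $\{\dot v_\eps\}$ is bounded in $L^\infty([0,T],\mathbb{C})$ and then conclude as follows: $v_\eps\to 0$ uniformly implies $\dot v_\eps\to 0$ in $\mathcal{D}'((0,T))$ (test against $\varphi\in C^\infty_0((0,T))$ and integrate by parts), so, since $\{\dot v_\eps\}$ is $L^\infty$-bounded, Alaoglu's theorem forces every weak\(^\ast\) cluster point of the sequence to coincide with the distributional limit $0$, and the whole sequence converges weak\(^\ast\) to $0$. To obtain the $L^\infty$-bound on $\dot v_\eps$, I differentiate the two summands above. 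The terms not immediately bounded are those produced by differentiating the rapid oscillations, each carrying a stray factor $\eps^{-1}\dot\psi_\eps$; these are absorbed because $\dot\psi_\eps = \dot\psi_0 + \mathcal{O}(\eps)$ is bounded (from $\{\eps^{-1}(\dot\psi_\eps - \dot\psi_0)\}$ bounded and $\dot\psi_0\in C([0,T])$), because $\eps^{-1}\vert u_\eps - u_0\vert$ is bounded, and, crucially, because
\begin{equation*}
    \eps^{-1}\bigl(\dot\psi_\eps\exp(i\eps^{-1}\psi_\eps) - \dot\psi_0\exp(i\eps^{-1}\psi_0)\bigr) = \eps^{-1}(\dot\psi_\eps - \dot\psi_0)\exp(i\eps^{-1}\psi_\eps) + \eps^{-1}\dot\psi_0\bigl(\exp(i\eps^{-1}\psi_\eps) - \exp(i\eps^{-1}\psi_0)\bigr)
\end{equation*}
is $\mathcal{O}(1)$, again since $\vert\dot\psi_\eps - \dot\psi_0\vert = \mathcal{O}(\eps)$ and the exponential difference is $\mathcal{O}(\eps)$. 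The remaining contributions, involving $\dot u_\eps\exp(i\eps^{-1}\psi_\eps)$ and $\dot u_0$ times exponential differences, are controlled directly using that $\{\dot u_\eps\}$ is bounded in $L^\infty$ and $u_0\in C^1([0,T])$.

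The argument is therefore purely computational; the single point requiring care is the bookkeeping of orders in $\eps$: the differences $\psi_\eps - \psi_0 = \mathcal{O}(\eps^2)$, $\dot\psi_\eps - \dot\psi_0 = \mathcal{O}(\eps)$ and $u_\eps - u_0 = \mathcal{O}(\eps)$ are each one power of $\eps$ better than what is needed merely to keep $v_\eps$ bounded, and it is precisely this gain that compensates the factor $\eps^{-1}$ generated whenever $\exp(i\eps^{-1}\psi_\eps)$ is differentiated. Once the $L^\infty$-bound on $\dot v_\eps$ is in hand, the weak\(^\ast\) statement follows automatically from the standard "bounded in $L^\infty$ plus convergent in $\mathcal{D}'$" reasoning.
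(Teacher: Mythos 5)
Your proof is correct and follows essentially the same route as the paper: the same decomposition of \(v_\eps\) (the paper writes the second summand as \(-u_0\exp(i\eps^{-1}\psi_0)\bigl(1-\exp(i\eps^{-1}(\psi_\eps-\psi_0))\bigr)\), which is algebraically identical to yours), the same order-counting to bound \(\eps^{-1}v_\eps\) and \(\dot v_\eps\) in \(L^\infty\), and the same cancellation of the stray \(\eps^{-1}\) against \(\dot\psi_\eps-\dot\psi_0=\mathcal{O}(\eps)\) and \(\psi_\eps-\psi_0=\mathcal{O}(\eps^2)\). The only cosmetic difference is that the paper concludes the weak\(^\ast\) convergence by citing Bornemann's Principle~1, whereas you reprove that standard fact via distributional convergence plus Alaoglu.
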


\begin{proof}
    By writing
    \begin{equation*}
        v_\eps = (u_\eps - u_0)\exp(i\eps^{-1}\psi_\eps)-u_0\exp(i\eps^{-1}\psi_0)\left(1-\exp\left(i\eps^{-1}\left(\psi_\eps-\psi_0\right)\right)\right)
    \end{equation*}
    and
    \begin{IEEEeqnarray*}{rCl}
        \dot{v}_\eps &=& \left( (\dot{u}_\eps - \dot{u}_0) + i\eps^{-1}\dot{\psi}_\eps(u_\eps - u_0)
        +i\eps^{-1}u_0(\dot{\psi}_\eps- \dot{\psi}_0)  \right)\exp(i\eps^{-1}\psi_\eps)\\
        && -\left( \dot{u}_0 + i\eps^{-1}\dot{\psi}_0u_0 \right)\left(1-\exp\left(i\eps^{-1}\left(\psi_\eps-\psi_0\right)\right)\right)\exp(i\eps^{-1}\psi_0),
    \end{IEEEeqnarray*}
    the assumptions imply that the sequences \(\{\eps^{-1}v_\eps\}\) and \(\{\dot{v}_\eps\}\) are
    bounded in
    \(L^\infty([0,T], \mathbb{C})\). This implies directly the uniform convergence of \(v_\eps\) to
    zero. The
    weak\(^\ast\) convergence of \(\dot{v}_\eps\) follows from~\cite[Principle
        1]{Bornemann}.
\end{proof}

\begin{lemma}
    \label{lemma:4}
    The sequences \(\{\theta_1^{\eps}\}\) and \(\{\phi_2^{\eps}\}\) are uniformly bounded in
    \(L^\infty([0,T],\R^r)\), and the sequences \(\{y_2^{\eps}\}\) and \(\{p_2^{\eps}\}\) are uniformly bounded
    in \(L^\infty([0,T],\R^n)\).
\end{lemma}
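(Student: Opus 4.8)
The plan is to bound each scaled residual function directly from its defining differential equation, working through the hierarchy \(\theta_1^\eps \rightsquigarrow \phi_2^\eps \rightsquigarrow y_2^\eps \rightsquigarrow p_2^\eps\) and closing the loop with a Gr\"onwall argument. First I would obtain the evolution equation for \(\theta_1^{\lambda\eps} = \eps^{-1}(\theta_\eps^\lambda - \theta_\ast^\lambda)\) by dividing \eqref{eq:12b} by \(\eps\): since \(\dot\theta_\eps^\lambda = -\theta_\eps^\lambda D_t L_\eps^\lambda \cos(2\eps^{-1}\phi_\eps^\lambda)\) and \(\dot\theta_\eps^\lambda\) is already \(\mathcal O(1)\) (the sequences \(\{\theta_\eps\}\), \(\{\dot y_\eps\}\), \(\{\omega_\lambda(y_\eps)\}\) are uniformly bounded by Theorem~\ref{thm:1} together with \eqref{eq:1}), the key point is to gain an extra power of \(\eps\) by integrating by parts in time. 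Writing \(\cos(2\eps^{-1}\phi_\eps^\lambda)\) and using \(\frac{d}{dt}\sin(2\eps^{-1}\phi_\eps^\lambda) = 2\eps^{-1}\dot\phi_\eps^\lambda\cos(2\eps^{-1}\phi_\eps^\lambda)\), integration by parts against the smooth factor \(-\theta_\eps^\lambda D_t L_\eps^\lambda /(2\dot\phi_\eps^\lambda)\) — which is legitimate and bounded by Lemma~\ref{lemma:1} for \(\eps<\eps_0\) — yields \(\theta_1^{\lambda\eps}(t) = \mathcal O(1)\) uniformly on \([0,T]\). (The boundary terms are \(\mathcal O(1)\) after the \(\eps^{-1}\) rescaling, and the remaining integral involves \(\frac{d}{dt}\) of a bounded smooth quantity times a bounded oscillatory factor, hence \(\mathcal O(1)\).)

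Next, for \(\phi_2^{\lambda\eps} = \eps^{-2}(\phi_\eps^\lambda - \phi_0^\lambda)\): subtracting \(\dot\phi_0^\lambda = \omega_\lambda(y_0)\) from \eqref{eq:12a} gives \(\dot\phi_\eps^\lambda - \dot\phi_0^\lambda = (\omega_\lambda(y_\eps) - \omega_\lambda(y_0)) + \tfrac\eps2 D_t L_\eps^\lambda \sin(2\eps^{-1}\phi_\eps^\lambda)\). The first term is \(\mathcal O(|y_\eps - y_0|)\), so after dividing by \(\eps^2\) it is controlled by \(\|y_2^\eps\|_\infty\) (using a Lipschitz bound for \(\omega_\lambda\) on the relevant bounded set); the second term, divided by \(\eps^2\), is \(\tfrac12 \eps^{-1} D_t L_\eps^\lambda \sin(2\eps^{-1}\phi_\eps^\lambda)\), which after one integration by parts (again using Lemma~\ref{lemma:1}) contributes \(\mathcal O(1)\) to \(\phi_2^{\lambda\eps}(t)\). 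So \(\|\phi_2^\eps\|_\infty \le C(1 + \|y_2^\eps\|_\infty)\). Similarly, from \eqref{eq:12c}, \(\dot y_\eps^j - \dot y_0^j = (p_\eps^j - p_0^j) + \tfrac\eps2\sum_\lambda \theta_\eps^\lambda D_j L_\eps^\lambda \sin(2\eps^{-1}\phi_\eps^\lambda)\); dividing by \(\eps^2\), integrating the oscillatory part by parts, gives \(y_2^{j\eps}(t) = y_2^{j\eps}(0) + \int_0^t p_2^{j\eps}\,ds + \mathcal O(1)\), and likewise from \eqref{eq:12d}, \(\dot p_\eps^j - \dot p_0^j\) is a difference of \(y\)- and \(\theta\)-dependent smooth terms plus an \(\mathcal O(\eps)\) oscillatory remainder, so after dividing by \(\eps^2\) and using Lipschitz estimates, \(p_2^{j\eps}(t) = p_2^{j\eps}(0) + \int_0^t \mathcal O(|y_2^\eps| + |\theta_1^\eps|)\,ds + \mathcal O(1)\), where \(\theta_1^\eps\) has already been bounded. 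Combining, \((y_2^\eps, p_2^\eps)\) satisfies an integral inequality of the form \(\|(y_2^\eps,p_2^\eps)(t)\| \le C + C\int_0^t \|(y_2^\eps,p_2^\eps)(s)\|\,ds\) with \(C\) independent of \(\eps\), and Gr\"onwall's inequality closes the estimate on \([0,T]\); then \(\|\phi_2^\eps\|_\infty\) is bounded a posteriori. Throughout, one needs the \(\eps\)-independent initial values \eqref{eq:9} to control the boundary terms at \(t=0\).

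The main obstacle is the gain of one power of \(\eps\) in the oscillatory contributions: each of the four equations \eqref{eq:12a}--\eqref{eq:12d} carries a rapidly oscillating term that is only \(\mathcal O(\eps)\) (or \(\mathcal O(1)\), in the case of \eqref{eq:12b}) pointwise, and naive estimation would blow up after dividing by \(\eps^2\) (resp.\ \(\eps\)). The resolution is systematic integration by parts using \(\frac{d}{dt}\cos(2\eps^{-1}\phi_\eps^\lambda) = -2\eps^{-1}\dot\phi_\eps^\lambda \sin(2\eps^{-1}\phi_\eps^\lambda)\), which trades a factor \(\eps^{-1}\) for a time derivative of a smooth bounded quantity, at the cost of dividing by \(\dot\phi_\eps^\lambda\) — this is exactly where Lemma~\ref{lemma:1} (hence Assumption~\ref{ass:1}) is essential, guaranteeing \(\dot\phi_\eps^\lambda \ge C > 0\) uniformly. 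A secondary point requiring care is that integration by parts generates a term with \(\ddot\phi_\eps^\lambda\) (equivalently \(\dot\theta_\eps^\lambda\) via differentiating \eqref{eq:12a}); one checks from \eqref{eq:12b} that \(\dot\theta_\eps^\lambda\) and hence \(\ddot\phi_\eps^\lambda\) are \(\mathcal O(1)\) in \(L^\infty\), so no further powers of \(\eps^{-1}\) intrude. Once these bookkeeping issues are handled, the Gr\"onwall closure is routine.
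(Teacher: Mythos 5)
Your overall scheme---trading each factor of \(\eps^{-1}\) in the oscillatory terms for an integration by parts against \(\dot\phi_\eps^\lambda\) (justified by Lemma~\ref{lemma:1}), then closing the estimate on \((y_2^\eps,p_2^\eps)\) with Gr\"onwall---is exactly the strategy of the paper's proof, and your treatment of \(\theta_1^\eps\), \(\phi_2^\eps\) and \(y_2^\eps\) is fine. But there is a genuine gap in the \(p_2^\eps\) step, and it is precisely the term the paper works hardest on. From~\eqref{eq:12d} the difference \(\dot p_\eps^j-\dot p_0^j\) contains \(\sum_\lambda\bigl(\theta_\eps^\lambda D_j\omega_\lambda(y_\eps)-\theta_\ast^\lambda D_j\omega_\lambda(y_0)\bigr)\), and after dividing by \(\eps^2\) the natural decomposition gives
\begin{equation*}
    \sum_{\lambda=1}^r\frac{\theta_\eps^\lambda-\theta_\ast^\lambda}{\eps^2}\,D_j\omega_\lambda(y_\eps)
    +\sum_{\lambda=1}^r\theta_\ast^\lambda\,\frac{D_j\omega_\lambda(y_\eps)-D_j\omega_\lambda(y_0)}{\eps^2}
    =\frac{1}{\eps}\sum_{\lambda=1}^r\theta_1^{\lambda\eps}\,D_j\omega_\lambda(y_\eps)+\mathcal{O}\bigl(\vert y_2^\eps\vert\bigr).
\end{equation*}
You assert this contributes \(\int_0^t\mathcal{O}(\vert y_2^\eps\vert+\vert\theta_1^\eps\vert)\,ds\), but the first piece carries an extra factor \(\eps^{-1}\): knowing only that \(\theta_1^\eps\) is uniformly bounded (which you established) gives \(\mathcal{O}(\eps^{-1})\), which destroys the Gr\"onwall closure. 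What is actually needed is the cancellation statement \(\int_0^t\theta_1^{\lambda\eps}D_j\omega_\lambda(y_\eps)\,ds=\mathcal{O}(\eps)\), i.e.~that \(\theta_1^{\lambda\eps}\) is, up to \(\mathcal{O}(\eps)\), a rapidly oscillating function with vanishing average, so that its time integral against a slowly varying factor gains a power of \(\eps\).

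The paper proves this in~\eqref{eq:18} by writing \(\theta_1^{\lambda\eps}(s)=\int_0^s\dot\theta_1^{\lambda\eps}\,dr\), substituting \(\dot\theta_1^{\lambda\eps}=\eps^{-1}\dot\theta_\eps^\lambda\) from~\eqref{eq:12b} (so the prefactor temporarily becomes \(\eps^{-2}\)), and then integrating by parts in the inner integral: the boundary term is an \(\eps^{-1}\)-weighted oscillatory integral controlled by Lemma~\ref{lemma:2}, and the remaining double integral has an inner integral of size \(\mathcal{O}(\eps)\), again by Lemma~\ref{lemma:2}. Without this second, nested integration by parts your induction on \((y_2^\eps,p_2^\eps)\) does not close; everything else in your proposal is essentially the paper's argument.
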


\begin{proof}
    In this proof, the constant \(0 < C < \infty\) depends on \(T\) but is
    independent of \(\eps\) and can take
    different values from line to line. Let \(t\in [0,T]\). For \(0 < \eps <\eps_0\) small enough and
    \(\lambda=1,\ldots,r\) let
    \begin{IEEEeqnarray*}{rCl}
        M^\lambda_1 &\coloneqq& \sup_{0<\eps<\eps_0} \sup_{h\in [0,1]}\left\Vert D
        \omega_\lambda((1-h)y_\eps +hy_0) \right\Vert_{L^\infty([0,T],\R^n)},\\
        M^\lambda_2 &\coloneqq& \sup_{0<\eps<\eps_0} \sup_{h\in [0,1]}\left\Vert D^2
        \omega_\lambda((1-h)y_\eps +hy_0) \right\Vert_{L^\infty([0,T],\R^{n\times n})},\\
        M_3 &\coloneqq& \sup_{0<\eps<\eps_0} \sup_{h\in [0,1]}\left\Vert D^2 V((1-h)y_\eps +hy_0)
        \right\Vert_{L^\infty([0,T],\R^{n\times n})}.
    \end{IEEEeqnarray*}
    For \(\lambda =1,\ldots,r\) and \(j=1,\ldots,n\) we apply  Lemma~\ref{lemma:2} to Equation~\eqref{eq:12a},
    \begin{equation}
        \label{eq:14}
        \left\vert \phi_2^{\lambda\eps}(t) \right\vert = \left\vert \int_0^t \dot{\phi}_2^{\lambda\eps}\ud s \right\vert
        \leq \frac{1}{\eps^2}\left\vert \int_0^t \omega_\lambda(y_\eps)-\omega_\lambda(y_0)\ud s \right\vert
        + \frac{1}{2\eps}\left\vert \int_0^t D_s L_\eps^\lambda \sin(2\eps^{-1}\phi_\eps^\lambda)\ud s\right\vert\leq
        M_1^\lambda \sum_{j=1}^n\int_0^t \left\vert y_2^{j\eps}\right\vert\ud s +C,
    \end{equation}
    then to Equation~\eqref{eq:12b},
    \begin{equation}
        \label{eq:15}
        \left\vert \theta_1^{\lambda\eps}(t) \right\vert = \left\vert \int_0^t \dot{\theta}_1^{\lambda\eps}\ud s \right\vert
        = \frac{1}{\eps} \left\vert \int_0^t \theta_\eps^\lambda\cdot D_s L_\eps^\lambda \cos(2\eps^{-1}\phi_\eps^\lambda)
        \ud s \right\vert \leq C,
    \end{equation}
    to Equation~\eqref{eq:12c},
    \begin{equation}
        \label{eq:16}
        \left\vert y_2^{j\eps}(t) \right\vert = \left\vert \int_0^t \dot{y}_2^{{j\eps}}\ud s \right\vert \leq \int_0^t
        \left\vert p_2^{j\eps} \right\vert \ud s + \frac{1}{2\eps}\sum_{\lambda=1}^r \left\vert \int_0^t \theta_\eps^\lambda
        \cdot D_j L_\eps^\lambda \sin(2\eps^{-1}\phi_\eps^\lambda)\ud s\right\vert \leq \int_0^t \left\vert p_2^{j\eps}
        \right\vert \ud s +C,
    \end{equation}
    and finally to Equation~\eqref{eq:12d},
    \begin{IEEEeqnarray*}{rCl}
        \left\vert p_2^{j\eps}(t) \right\vert &=& \left\vert \int_0^t \dot{p}_2^{j\eps}(s)\ud s \right\vert \\
        &\leq& \frac{1}{\eps^2}\left\vert \int_0^t D_j V(y_\eps)-D_jV(y_0)\ud s \right\vert+\frac{1}{\eps^2}\sum_{\lambda=1}^r\left\vert \int_0^t \theta_\eps^\lambda \cdot D_j \omega_\lambda(y_\eps)-\theta_\ast^\lambda \cdot D_j \omega_\lambda(y_0)\ud s \right\vert \\
        &&+\> \frac{1}{2\eps}\sum_{\lambda=1}^r \left\vert \int_0^t \theta_\eps^\lambda \cdot D_t D_j L_\eps^\lambda \sin(2\eps^{-1}\phi_\eps^\lambda) \ud s\right\vert\\
        \IEEEyesnumber \label{eq:17}
        &\leq& M_3 \sum_{k=1}^n\int_0^t \left\vert y_2^{k\eps} \right\vert \ud s+\frac{1}{\eps} \sum_{\lambda=1}^r\left\vert \int_0^t \theta_1^{\lambda\eps}\cdot D_j \omega_\lambda(y_\eps)\ud s \right\vert+ \left\langle  \theta_\ast, M_2 \right\rangle \sum_{k=1}^n\int_0^t \left\vert y_2^{k\eps}\right\vert \ud s+ C.
    \end{IEEEeqnarray*}
    After integrating by parts, Equation~\eqref{eq:12b} and Lemmas~\ref{lemma:2} and~\ref{lemma:3}
    imply that
    \begin{IEEEeqnarray*}{rCl}
        \frac{1}{\eps}\sum_{\lambda=1}^r \left\vert \int_{0}^{t} \theta_1^{\lambda\eps}\cdot D_j \omega_\lambda(y_\eps)\ud s \right\vert &=& \frac{1}{\eps}\sum_{\lambda=1}^r \left\vert \int_{0}^{t} D_j \omega_\lambda(y_\eps) \int_{0}^{s} \dot{\theta}_1^{\lambda\eps}\ud r\ud s \right\vert\\
        &=& \frac{1}{\eps^2} \sum_{\lambda=1}^r \left\vert \int_{0}^{t} D_j \omega_\lambda(y_\eps)\int_{0}^{s}\theta_\eps^\lambda\cdot D_r L_\eps^\lambda \cos(2\eps^{-1}\phi_\eps^\lambda)\ud r \ud s \right\vert\\
        &\leq&\frac{1}{2\eps}\sum_{\lambda=1}^r \left\vert \int_{0}^{t}\frac{\theta_\eps^\lambda\cdot D_s L_\eps^\lambda \cdot D_j \omega_\lambda(y_\eps)}{\dot{\phi}_\eps^\lambda}\sin(2\eps^{-1}\phi_\eps^\lambda)\ud s \right\vert \\
        \IEEEyesnumber\label{eq:18}
        &&+\> \frac{1}{2\eps}\sum_{\lambda=1}^r \left\vert \int_{0}^{t} D_j \omega_\lambda(y_\eps)\int_{0}^{s} \frac{d}{dr}\left( \frac{\theta_\eps^\lambda \cdot D_r L_\eps^\lambda}{\dot{\phi}_\eps^\lambda} \right)\sin(2\eps^{-1}\phi_\eps^\lambda)\ud r \ud s \right\vert \leq C.
    \end{IEEEeqnarray*}
    By combining the inequalities~\eqref{eq:16}--\eqref{eq:18} we obtain
    \begin{equation*}
        \left\vert y_2^{j\eps}(t) \right\vert \leq C
        +\left(  M_3 + \left\langle \theta_\ast, M_2 \right\rangle  \right)\int_0^T \int_0^T \sum_{j=1}^n\left\vert y_2^{j\eps}
        \right\vert \ud s
    \end{equation*}
    and thus
    \begin{equation*}
        \sum_{j=1}^n \left\vert y_2^{j\eps}(t) \right\vert \leq n C+ n\left(M_3 + \left\langle \theta_\ast, M_2
        \right\rangle  \right)\int_0^T \int_0^T \sum_{j=1}^n\left\vert y_2^{j\eps} \right\vert \ud s.
    \end{equation*}
    Finally, a variation of the classical Gronwall inequality (see~\cite[p.~383]{Mitrinovic}) implies that
    \begin{equation*}
        \sum_{j=1}^n \left\vert y_2^{j\eps}(t) \right\vert \leq n C\exp\left( n\left(M_3
            + \left\langle \theta_\ast, M_2 \right\rangle  \right)T^2 \right),\end{equation*}
    for \(t\in[0,T]\), which together with~\eqref{eq:14}--\eqref{eq:18} yields the uniform bound for
    \(\{\theta_1^\eps\},\, \{\phi_2^\eps\},\,\{y_2^\eps\}\) and \(\{p_2^\eps\}\).
\end{proof}

\begin{lemma}
    \label{lemma:6}
    The sequence \(\{\theta_2^\eps\}\) is uniformly bounded in \(L^\infty([0,T],\R^r)\).
\end{lemma}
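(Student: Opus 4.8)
\emph{Proof idea.} The plan is to obtain an explicit near-identity representation of $\theta_1^{\lambda\eps}$ by a single integration by parts, to recognise that its boundary term coincides with the rapidly oscillating term $[\theta_1^\lambda]^\eps$ of Definition~\ref{def:1} up to a quantity controlled by Lemma~\ref{lemma:5}, and to show that the remaining integral term is $\mathcal{O}(\eps)$ by Lemma~\ref{lemma:2}; since $\theta_2^{\lambda\eps}=\eps^{-1}(\theta_1^{\lambda\eps}-[\theta_1^\lambda]^\eps)$, dividing these estimates by $\eps$ yields the uniform bound. Concretely, integrating~\eqref{eq:12b} with $\theta_\eps^\lambda(0)=\theta_\ast^\lambda$ from~\eqref{eq:9}, then using $\cos(2\eps^{-1}\phi_\eps^\lambda)=\tfrac{\eps}{2\dot\phi_\eps^\lambda}\tfrac{d}{ds}\sin(2\eps^{-1}\phi_\eps^\lambda)$ (admissible for $0<\eps<\eps_0$ small by Lemma~\ref{lemma:1}) and integrating by parts, with the boundary term at $s=0$ vanishing because $\phi_\eps^\lambda(0)=0$, one gets
\[
\theta_1^{\lambda\eps}(t)=-\tfrac{1}{2}\,u_\eps^\lambda(t)\,\sin(2\eps^{-1}\phi_\eps^\lambda(t))+\tfrac{1}{2}\int_0^t\dot u_\eps^\lambda\,\sin(2\eps^{-1}\phi_\eps^\lambda)\,ds,\qquad u_\eps^\lambda\coloneqq\frac{\theta_\eps^\lambda\,D_t L_\eps^\lambda}{\dot\phi_\eps^\lambda}.
\]
Since $\theta_0^\lambda\equiv\theta_\ast^\lambda$ and $\dot\phi_0^\lambda=\omega_\lambda(y_0)$, the function in Definition~\ref{def:1} is precisely $[\theta_1^\lambda]^\eps=-\tfrac{1}{2}\,u_0^\lambda\,\sin(2\eps^{-1}\phi_0^\lambda)$ with $u_0^\lambda\coloneqq\theta_\ast^\lambda D_t L_0^\lambda/\omega_\lambda(y_0)$, so $\theta_1^{\lambda\eps}-[\theta_1^\lambda]^\eps$ decomposes as the sum of the boundary difference $-\tfrac{1}{2}\bigl(u_\eps^\lambda\sin(2\eps^{-1}\phi_\eps^\lambda)-u_0^\lambda\sin(2\eps^{-1}\phi_0^\lambda)\bigr)$ and the integral term above.

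For the boundary difference I would apply Lemma~\ref{lemma:5} with $u_\eps=u_\eps^\lambda$, $u_0=u_0^\lambda$, $\psi_\eps=2\phi_\eps^\lambda$ and $\psi_0=2\phi_0^\lambda$. Its four hypotheses follow from Lemma~\ref{lemma:4} together with the $C^{0,1}$-bounds on $\{\dot\phi_\eps\}$, $\{\theta_\eps\}$, $\{\dot y_\eps\}$, $\{\dot p_\eps\}$ recorded in Section~\ref{sec:Lead-order-expansion}: one has $u_\eps^\lambda-u_0^\lambda=\mathcal{O}(\eps)$ because $\theta_\eps^\lambda-\theta_\ast^\lambda=\eps\theta_1^{\lambda\eps}=\mathcal{O}(\eps)$, $\dot y_\eps-\dot y_0=(p_\eps-p_0)+\mathcal{O}(\eps)=\mathcal{O}(\eps)$ by~\eqref{eq:12c} (whence $D_t L_\eps^\lambda-D_t L_0^\lambda=\mathcal{O}(\eps)$), and $\dot\phi_\eps^\lambda-\omega_\lambda(y_0)=\mathcal{O}(\eps)$ by~\eqref{eq:12a}; moreover $\eps^{-2}(\psi_\eps-\psi_0)=2\phi_2^{\lambda\eps}$ is bounded by Lemma~\ref{lemma:4}, $\eps^{-1}(\dot\psi_\eps-\dot\psi_0)=2\eps^{-1}(\omega_\lambda(y_\eps)-\omega_\lambda(y_0))+D_t L_\eps^\lambda\sin(2\eps^{-1}\phi_\eps^\lambda)$ is bounded, and $\dot u_\eps^\lambda$ is bounded since $\dot\theta_\eps^\lambda$, $D_t^2 L_\eps^\lambda$ and $\ddot\phi_\eps^\lambda$ are. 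Lemma~\ref{lemma:5} then gives that $\eps^{-1}\bigl(u_\eps^\lambda\sin(2\eps^{-1}\phi_\eps^\lambda)-u_0^\lambda\sin(2\eps^{-1}\phi_0^\lambda)\bigr)$ is bounded in $L^\infty([0,T])$.

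For the integral term the crucial observation is that, although certain higher-order derivatives such as $\ddot\theta_\eps^\lambda$ become unbounded as $\eps\to0$, the function $\dot u_\eps^\lambda$ is not merely bounded but has the structural decomposition $\dot u_\eps^\lambda=a_\eps^\lambda+\sum_{\mu=1}^r b_\eps^{\lambda\mu}\cos(2\eps^{-1}\phi_\eps^\mu)+\eps\,c_\eps^\lambda$, where $\{a_\eps^\lambda\}$ and $\{b_\eps^{\lambda\mu}\}$ are bounded in $C^{0,1}([0,T])$ and $\{c_\eps^\lambda\}$ is bounded in $L^\infty([0,T])$; this is read off by differentiating $u_\eps^\lambda$ and substituting~\eqref{eq:12}, the relation $\dot\theta_\eps^\lambda=-\theta_\eps^\lambda D_t L_\eps^\lambda\cos(2\eps^{-1}\phi_\eps^\lambda)$, and the explicit (bounded) expressions for $\ddot y_\eps$ and $\ddot\phi_\eps^\lambda$ obtained from~\eqref{eq:12}. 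Multiplying by $\sin(2\eps^{-1}\phi_\eps^\lambda)$ and invoking the product-to-sum identities rewrites the integrand as a $C^{0,1}$-bounded coefficient times one of $\sin(2\eps^{-1}\phi_\eps^\lambda)$, $\sin(2\eps^{-1}(\phi_\eps^\lambda\pm\phi_\eps^\mu))$ with $\mu\neq\lambda$, or $\sin(4\eps^{-1}\phi_\eps^\lambda)$, plus a bounded term of order $\mathcal{O}(\eps)$; by Lemma~\ref{lemma:1} each phase occurring has time derivative bounded away from zero, so Lemma~\ref{lemma:2} applies to every piece and gives $\int_0^t\dot u_\eps^\lambda\sin(2\eps^{-1}\phi_\eps^\lambda)\,ds=\mathcal{O}(\eps)$. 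Combining the two estimates shows that $\theta_2^{\lambda\eps}=\eps^{-1}(\theta_1^{\lambda\eps}-[\theta_1^\lambda]^\eps)$ is bounded uniformly in $t\in[0,T]$ and $0<\eps<\eps_0$ for every $\lambda=1,\ldots,r$, which is the claim.

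The step I expect to be the main obstacle is the control of the integral term: because $\{\dot u_\eps^\lambda\}$ is not uniformly Lipschitz, Lemma~\ref{lemma:2} cannot be applied to it directly, and one must instead extract the explicit oscillatory decomposition of $\dot u_\eps^\lambda$ and handle its single-, double- and difference-frequency contributions separately, the latter relying on the second bound $|\dot\phi_\eps^\lambda-\dot\phi_\eps^\mu|\geq C$ of Lemma~\ref{lemma:1}. A secondary delicate point is the careful verification of the four hypotheses of Lemma~\ref{lemma:5} for the boundary term.
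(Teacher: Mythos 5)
Your proposal is correct and follows essentially the same route as the paper: a single integration by parts of $\theta_1^{\lambda\eps}$, control of the boundary difference by Lemma~\ref{lemma:5}, and an $\mathcal{O}(\eps)$ bound on the remaining integral under the non-resonance Assumption~\ref{ass:1}. The only difference is presentational --- your hand-made oscillatory decomposition of $\dot u_\eps^\lambda$ and the subsequent case-by-case application of Lemma~\ref{lemma:2} is exactly the content of Lemma~\ref{lemma:3}, which the paper simply cites at this point.
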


\begin{proof}
    We write \(\theta_1^\eps\) componentwise as
    \begin{equation*}
        \theta_1^{\lambda\eps} = \frac{1}{\eps}\int_{0}^{\cdot}\dot{\theta}_\eps^\lambda\ud
        t = -\int_{0}^{\cdot}\frac{\theta_\eps^\lambda \cdot D_t
            L_\eps^\lambda}{2\dot{\phi}_\eps^\lambda}\frac{d}{dt}\sin(2\eps^{-1}\phi_\eps^\lambda) \ud t
    \end{equation*}
    and integrate by parts to derive
    \(\theta_2^{\lambda\eps}= \eps^{-1}\left( \theta_1^{\lambda\eps}-[\theta_1^\lambda]^\eps
    \right)=\theta_{21}^{\lambda\eps} + \theta_{22}^{\lambda\eps}\),
    where
    \begin{equation}
        \label{eq:74}
        \theta_{21}^{\lambda\eps}\coloneqq \frac{1}{\eps}\left( \frac{\theta_\ast^\lambda\cdot
            D_t L_0^\lambda}{2\omega_\lambda(y_0)}\sin(2\eps^{-1}\phi^\lambda_0)- \frac{\theta_\eps^\lambda \cdot
            D_t L_\eps^\lambda}{2\dot{\phi}_\eps^\lambda}\sin(2\eps^{-1}\phi^\lambda_\eps) \right),
        \quad
        \theta_{22}^{\lambda\eps}\coloneqq \frac{1}{\eps}\int_0^\cdot \frac{d}{dt} \left( \frac{\theta_\eps^\lambda \cdot
            D_t L_\eps^\lambda}{2\dot{\phi}^\lambda_\eps} \right)\sin(2\eps^{-1}\phi^\lambda_\eps)\ud t.
    \end{equation}
    The claim then follows from Assumption~\ref{ass:1} and Lemmas~\ref{lemma:3} and~\ref{lemma:5}.
\end{proof}

\begin{remark}
    Lemmas~\ref{lemma:5} and~\ref{lemma:6} imply the convergence~\eqref{eq:19}. Moreover, without
    Assumption~\ref{ass:1}, the sequence \(\{\theta_{22}^{\lambda\eps}\}\) is not necessarily bounded in
    \(L^\infty([0,T])\). In fact, by Lemma~\ref{lemma:7} and the definition of \(\theta_{227}^{\lambda\eps}\) in
    the proof of Lemma~\ref{lemma:9}, we would have in this case
    \(\{\theta_{22}^{\lambda\eps}\}= \mathcal{O}(\eps^{-1/2})\).
\end{remark}

\begin{lemma}
    \label{lemma:8}
    There exist a subsequence \(\{\eps^\prime\}\) and functions \(\bar{\phi}_2\in C^{0,1}([0,T],\R^r)\),
    \(\bar{y}_2, \bar{p}_2\in C^{0,1}([0,T],\R^n)\) such that the convergences~\eqref{eq:20}--\eqref{eq:22}
    hold.
\end{lemma}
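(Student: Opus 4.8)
The plan is to establish that the sequences $\{\phi_2^\eps - [\phi_2]^\eps\}$, $\{y_2^\eps - [y_2]^\eps\}$ and $\{p_2^\eps - [p_2]^\eps\}$ are bounded in $C^{0,1}([0,T])$ uniformly in $\eps$, and then to extract the subsequence $\{\eps'\}$ together with the limits $\bar\phi_2$, $\bar y_2$, $\bar p_2$ in $C^{0,1}$ via the extended Arzel\`a--Ascoli theorem and Alaoglu's theorem. The uniform $L^\infty$-bound on the differences is immediate: Lemma~\ref{lemma:4} bounds $\{\phi_2^\eps\}$, $\{y_2^\eps\}$, $\{p_2^\eps\}$ in $L^\infty$, and the correctors $[\phi_2]^\eps$, $[y_2]^\eps$, $[p_2]^\eps$ of Definition~\ref{def:1} are products of smooth, uniformly bounded coefficients --- built from $y_0$, $\dot y_0$ and finitely many derivatives of the $\omega_\lambda$ and of $V$ --- with the trigonometric factors $\cos(2\eps^{-1}\phi_0^\lambda)$. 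The work is therefore the uniform $L^\infty$-bound on the time derivatives $\tfrac{d}{dt}(\phi_2^\eps - [\phi_2]^\eps)$, $\tfrac{d}{dt}(y_2^\eps - [y_2]^\eps)$, $\tfrac{d}{dt}(p_2^\eps - [p_2]^\eps)$.

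To obtain it, differentiate $\phi_2^{\lambda\eps} = \eps^{-2}(\phi_\eps^\lambda - \phi_0^\lambda)$, $y_2^{j\eps} = \eps^{-2}(y_\eps^j - y_0^j)$ and $p_2^{j\eps} = \eps^{-2}(p_\eps^j - p_0^j)$ using the equations of motion~\eqref{eq:12} and the leading-order relations from Section~\ref{sec:Lead-order-expansion}. Taylor-expanding the smooth terms $\omega_\lambda(y_\eps)$, $D_j V(y_\eps)$ and $\theta_\eps^\lambda D_j\omega_\lambda(y_\eps)$ about $y_0$ and $\theta_\ast$ --- using $y_\eps - y_0 = \eps^2 y_2^\eps$, $\theta_\eps^\lambda = \theta_\ast^\lambda + \eps[\theta_1^\lambda]^\eps + \eps^2\theta_2^{\lambda\eps}$, and the $L^\infty$-bounds of Lemmas~\ref{lemma:4} and~\ref{lemma:6} --- each of these time derivatives splits into an $\mathcal{O}(1)$ part (involving, e.g., $p_2^{j\eps}$, $\langle D\omega_\lambda(y_0), y_2^\eps\rangle$, $\theta_2^{\lambda\eps}$, $\langle DD_j V(y_0), y_2^\eps\rangle$) plus a single $\mathcal{O}(\eps^{-1})$ oscillatory contribution of the shape $\eps^{-1}(\text{smooth coefficient at }y_0)\,\{\sin,\cos\}(2\eps^{-1}\phi_\eps^\lambda)$; for $p$ this $\mathcal{O}(\eps^{-1})$ term arises jointly from the $\eps^{-1}[\theta_1^\lambda]^\eps$-contribution in the expansion of $\theta_\eps^\lambda D_j\omega_\lambda(y_\eps)$ and from the $\mathcal{O}(\eps^{-1})$ part of $-\tfrac{\eps}{2}\sum_\lambda \theta_\eps^\lambda D_t D_j L_\eps^\lambda \sin(2\eps^{-1}\phi_\eps^\lambda)$ in~\eqref{eq:12d}. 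On the other side, differentiating the correctors of Definition~\ref{def:1}, the term in which $\tfrac{d}{dt}$ hits the trigonometric factor --- where $\dot\phi_0^\lambda = \omega_\lambda(y_0)$ produces the compensating $\eps^{-1}$ --- reproduces exactly this same $\mathcal{O}(\eps^{-1})$ contribution with $(y_\eps, \phi_\eps)$ replaced by $(y_0, \phi_0)$, the correctors having been designed in Definition~\ref{def:1} precisely so that this match occurs, while the term in which $\tfrac{d}{dt}$ hits the smooth prefactor is $\mathcal{O}(1)$. Subtracting, the leading oscillations cancel up to the discrepancy between the arguments $(y_\eps, \phi_\eps)$ and $(y_0, \phi_0)$, which is $\mathcal{O}(1)$ by Lemma~\ref{lemma:5} applied with $\psi_\eps = 2\phi_\eps^\lambda$, $\psi_0 = 2\phi_0^\lambda$ and the relevant smooth coefficient as $u_\eps$; the hypotheses of Lemma~\ref{lemma:5} --- boundedness of $\{\eps^{-2}(\psi_\eps - \psi_0)\} = \{2\phi_2^\eps\}$, of $\{\eps^{-1}(\dot\psi_\eps - \dot\psi_0)\}$, of $\{\eps^{-1}(u_\eps - u_0)\}$ and of $\{\dot u_\eps\}$ --- follow from~\eqref{eq:13} and Lemmas~\ref{lemma:4} and~\ref{lemma:6}. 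Hence all three time derivatives are bounded in $L^\infty([0,T])$.

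With the uniform $C^{0,1}$-bounds established, the extended Arzel\`a--Ascoli theorem (\cite[Principle~4]{Bornemann}) and Alaoglu's theorem (\cite[Principle~3]{Bornemann}) yield a subsequence $\{\eps'\}$ along which $\phi_2^\eps - [\phi_2]^\eps \to \bar\phi_2$ in $C([0,T],\R^r)$ with $\bar\phi_2 \in C^{0,1}([0,T],\R^r)$ and $\tfrac{d}{dt}(\phi_2^\eps - [\phi_2]^\eps) \weak{\ast} \tfrac{d\bar\phi_2}{dt}$ in $L^\infty([0,T],\R^r)$; passing twice more to subsequences, not relabelled, produces along a common subsequence the analogous limits $\bar y_2, \bar p_2 \in C^{0,1}([0,T],\R^n)$ for which~\eqref{eq:21} and~\eqref{eq:22} hold. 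This is precisely the assertion of the lemma.

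The main obstacle is the bookkeeping in the $p$-equation: one has to verify that the single surviving $\mathcal{O}(\eps^{-1})$ oscillatory term in $\dot p_2^{j\eps}$ is matched term by term by the corresponding part of $\tfrac{d}{dt}[p_2^j]^\eps$, which reduces to the identity $4\omega_\lambda(y_0)\tfrac{d}{dt}\!\bigl(\tfrac{\theta_\ast^\lambda D_j L_0^\lambda}{4\omega_\lambda(y_0)}\bigr) = \theta_\ast^\lambda D_t D_j L_0^\lambda - \theta_\ast^\lambda D_j L_0^\lambda D_t L_0^\lambda$ combined with $D_j L_0^\lambda = D_j\omega_\lambda(y_0)/\omega_\lambda(y_0)$; the corresponding cancellations for $\phi$ and $y$ are simpler, relying only on $\dot\phi_0^\lambda = \omega_\lambda(y_0)$ and $\dot y_0^j = p_0^j$. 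A secondary point is confirming that the coefficients $u_\eps$ feeding the applications of Lemma~\ref{lemma:5} (such as $\theta_\eps^\lambda D_t D_j L_\eps^\lambda$) have $\{\dot u_\eps\}$ bounded in $L^\infty$, which uses the boundedness of $\{\ddot y_\eps\}$ and $\{\dot\theta_\eps\}$ from~\eqref{eq:13}.
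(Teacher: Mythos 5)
Your proposal is correct and follows essentially the same route as the paper: the paper likewise rewrites $\tfrac{d}{dt}(\phi_2^\eps-[\phi_2]^\eps)$, $\tfrac{d}{dt}(y_2^\eps-[y_2]^\eps)$ and $\tfrac{d}{dt}(p_2^\eps-[p_2]^\eps)$ so that the $\mathcal{O}(\eps^{-1})$ oscillations appear as derivatives of Lemma~\ref{lemma:5}-type differences $u_\eps\cos(2\eps^{-1}\phi_\eps^\lambda)-u_0\cos(2\eps^{-1}\phi_0^\lambda)$ (with the $[\theta_1^\lambda]^\eps$-split of $\theta_\eps^\lambda D_j\omega_\lambda(y_\eps)$ and Lemma~\ref{lemma:6} handling the extra term in the $p$-equation, exactly as you describe), deduces the uniform $C^{0,1}$-bounds from Lemmas~\ref{lemma:5}, \ref{lemma:4} and~\ref{lemma:6}, and concludes by successive applications of the extended Arzel\`a--Ascoli theorem. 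Your identity for $\tfrac{d}{dt}\bigl(\tfrac{\theta_\ast^\lambda D_jL_0^\lambda}{4\omega_\lambda(y_0)}\bigr)$ is precisely the paper's decomposition $[p_2^j]^\eps=[p_2^j]_1^\eps+[p_2^j]_2^\eps$.
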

\begin{proof}
    By taking the time derivative of \(\phi_2^{\lambda\eps} -[\phi^\lambda_2]^\eps\),
    \(y_2^{j\eps}- [y^j_2]^\eps\) and \(p_2^{j\eps}-[p^j_2]^\eps\) for \(\lambda= 1,\ldots, r\) and
    \(j=1, \ldots, n\) we obtain
    \begin{IEEEeqnarray}{rcl}
        \label{eq:23}
        \frac{d}{dt}\left( \phi_2^{\lambda\eps} -[\phi^\lambda_2]^\eps \right)\>&=&\> \frac{\omega_\lambda(y_\eps) -\omega_\lambda(y_0)}{\eps^2} + \frac{d}{dt}\left( \frac{D_t L_\eps^\lambda}{4\dot{\phi}_\eps^\lambda} \right)\cos(2\eps^{-1}\phi_\eps^\lambda)- \frac{d}{dt}\left([\phi^\lambda_2]^\eps +\frac{D_t L_\eps^\lambda}{4\dot{\phi}_\eps^\lambda}\cos(2\eps^{-1}\phi_\eps^\lambda) \right),\\
        \label{eq:24}
        \frac{d}{dt}\left( y_2^{j\eps}- [y^j_2]^\eps \right)\>&=&\> \frac{p_\eps^j - p_0^j}{\eps^2} +\sum_{\lambda=1}^r\frac{d}{dt}\left( \frac{\theta_\eps^\lambda \cdot D_j L_\eps^\lambda}{4\dot{\phi}_\eps^\lambda} \right)\cos(2\eps^{-1}\phi^\lambda_\eps)- \frac{d}{dt}\left( [y^j_2]^\eps + \sum_{\lambda=1}^r \frac{\theta_\eps^\lambda \cdot D_j L_\eps^\lambda}{4\dot{\phi}_\eps^\lambda}\cos(2\eps^{-1}\phi^\lambda_\eps)\right),\IEEEeqnarraynumspace\\
        \IEEEnonumber
        \frac{d}{dt}\left( p_2^{j\eps}-[p^j_2]^\eps \right) \>&=&\>-\>\frac{D_j V(y_\eps)-D_jV(y_0)}{\eps^2}-\sum_{\lambda=1}^r\theta_\ast^\lambda \frac{D_j \omega_\lambda(y_\eps)-D_j\omega_\lambda(y_0)}{\eps^2}-\sum_{\lambda=1}^r \frac{\theta_1^{\lambda\eps}-[\theta_1^\lambda]^\eps}{\eps}D_j \omega_\lambda(y_\eps)  \\
        \IEEEnonumber
        && \> -\>\sum_{\lambda=1}^r\frac{d}{dt}\left( \frac{\theta_\eps^\lambda \cdot D_tD_j L_\eps^\lambda}{4\dot{\phi}_\eps^\lambda} \right)\cos(2\eps^{-1}\phi^\lambda_\eps)-\frac{d}{dt}\left( [p^j_2]^\eps_1 - \sum_{\lambda=1}^r \frac{\theta_\eps^\lambda\cdot D_tD_jL_\eps^\lambda}{4\dot{\phi}_\eps^\lambda}\cos(2\eps^{-1}\phi^\lambda_\eps)\right)\\
        \IEEEeqnarraymulticol{3}{r}{
            \label{eq:25}
            +\>\sum_{\lambda=1}^r \frac{d}{dt}\left( \frac{\theta_\ast^\lambda \cdot D_t L_0^\lambda\cdot D_j \omega_\lambda(y_\eps)}{4\omega^2(y_0)} \right)\cos(2\eps^{-1}\phi^\lambda_0)-\frac{d}{dt}\left( [p^j_2]^\eps_2+\sum_{\lambda=1}^r\frac{\theta_\ast^\lambda \cdot D_t L_0^\lambda\cdot D_j \omega_\lambda(y_\eps)}{4\omega^2_\lambda(y_0)}\cos(2\eps^{-1}\phi^\lambda_0) \right),\IEEEeqnarraynumspace}
    \end{IEEEeqnarray}
    where we used \([p^j_2]^\eps = [p^j_2]^\eps_1 + [p^j_2]^\eps_2\) with
    \begin{equation*}
        [p^j_2]^\eps_1 \coloneqq \sum_{\lambda=1}^r \frac{\theta_\ast^\lambda\cdot D_t D_j
            L_0^\lambda}{4\omega_\lambda(y_0)}\cos(2\eps^{-1}\phi^\lambda_0),\qquad [p^j_2]^\eps_2 \coloneqq -
        \sum_{\lambda=1}^r\frac{\theta_\ast^\lambda \cdot D_t L_0^\lambda \cdot
            D_j L_0^\lambda}{4\omega_\lambda(y_0) }\cos(2\eps^{-1}\phi^\lambda_0).
    \end{equation*}
    For the derivation of Equation~\eqref{eq:25} note that in the evaluation of \(\dot{p}_2^{j\eps}\) we need
    to evaluate the expression
    \begin{equation*}
        \sum_{\lambda=1}^r \frac{\theta_\eps^\lambda \cdot D_j\omega_\lambda(y_\eps)-\theta_\ast^\lambda
            \cdot D_j \omega_\lambda(y_0)}{\eps^2} = \sum_{\lambda=1}^r \frac{\theta_\eps^\lambda
            - \theta_\ast^\lambda}{\eps^2}D_j \omega_\lambda(y_\eps)+\sum_{\lambda=1}^r \theta_\ast^\lambda\frac{D_j
            \omega_\lambda(y_\eps)-D_j \omega_\lambda(y_0)}{\eps^2},
    \end{equation*}
    in which we rewrite the first term on the right-hand side by introducing
    \([\theta_1^\lambda]^\eps\),
    i.e.,
    \begin{IEEEeqnarray*}{rCl}
        \frac{\theta_\eps^\lambda - \theta_\ast^\lambda}{\eps^2}D_j \omega_\lambda(y_\eps) &=&\frac{\theta_1^{\lambda\eps}-[\theta_1^\lambda]^\eps}{\eps}D_j \omega_\lambda(y_\eps) +\frac{d}{dt}\left( \frac{\theta_\ast^\lambda \cdot D_t L_0^\lambda \cdot D_j \omega_\lambda(y_\eps)}{4\omega^2_\lambda(y_0)}\cos(2\eps^{-1}\phi^\lambda_0) \right)\\
        && -\>\frac{d}{dt}\left(\frac{\theta_\ast^\lambda \cdot D_t L_0^\lambda  \cdot  D_j \omega_\lambda(y_\eps)}{4\omega^2(y_0)} \right)\cos(2\eps^{-1}\phi^\lambda_0).
    \end{IEEEeqnarray*}
    By Lemmas~\ref{lemma:5},~\ref{lemma:4} and~\ref{lemma:6} the sequence \(\{\phi_2^\eps -[\phi_2]^\eps\}\)
    is bounded in \(C^{0,1}([0,T], \R^r)\) and the sequences \(\{y_2^\eps -[y_2]^\eps\}\) and
    \(\{p_2^\eps -[p_2]^\eps\}\) are bounded in \(C^{0,1}([0,T],\R^n)\). The claim follows after
    successive applications of~\cite[Principle~4]{Bornemann}.
\end{proof}

\begin{lemma}
    \label{lemma:9}
    There exists a further subsequence \(\{\eps^\prime\}\) and a function
    \(\bar{\theta}_2\in C^\infty([0,T],\R^r)\) such that the convergence~\eqref{eq:26} holds. Moreover, the
    component functions \(\bar{\theta}_2^\lambda \;(\lambda = 1, \ldots, r)\) satisfy~\eqref{eq:27b}.
\end{lemma}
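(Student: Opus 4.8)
The plan is to start from the decomposition $\theta_2^{\lambda\eps}=\theta_{21}^{\lambda\eps}+\theta_{22}^{\lambda\eps}$ obtained in the proof of Lemma~\ref{lemma:6} and to analyse each summand by Taylor-expanding every $\eps$-dependent coefficient about its leading-order value, using the expansions already available from Lemmas~\ref{lemma:4},~\ref{lemma:6} and~\ref{lemma:8}: $\theta_\eps^\lambda=\theta_\ast^\lambda+\eps[\theta_1^\lambda]^\eps+\mathcal{O}(\eps^2)$, $\phi_\eps^\lambda=\phi_0^\lambda+\eps^2\phi_2^{\lambda\eps}$ with $\phi_2^{\lambda\eps}=[\phi_2^\lambda]^\eps+\bar\phi_2^\lambda+o(1)$, $y_\eps=y_0+\mathcal{O}(\eps^2)$, $\dot y_\eps^j=p_0^j+\tfrac{\eps}{2}\sum_\mu\theta_\ast^\mu D_jL_0^\mu\sin(2\eps^{-1}\phi_0^\mu)+\mathcal{O}(\eps^2)$ and $\dot\phi_\eps^\lambda=\omega_\lambda(y_0)+\tfrac{\eps}{2}D_tL_0^\lambda\sin(2\eps^{-1}\phi_0^\lambda)+\mathcal{O}(\eps^2)$. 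Writing $A_\eps\coloneqq\theta_\eps^\lambda D_tL_\eps^\lambda/(2\dot\phi_\eps^\lambda)$ with leading part $A_0$, the expansion of $\theta_{21}^{\lambda\eps}=\tfrac1\eps\bigl(A_0\sin(2\eps^{-1}\phi_0^\lambda)-A_\eps\sin(2\eps^{-1}\phi_\eps^\lambda)\bigr)$ gives $\theta_{21}^{\lambda\eps}=-A_1^\eps\sin(2\eps^{-1}\phi_0^\lambda)-2A_0\phi_2^{\lambda\eps}\cos(2\eps^{-1}\phi_0^\lambda)+\mathcal{O}(\eps)$, where $A_1^\eps\coloneqq\eps^{-1}(A_\eps-A_0)$ is read off explicitly from the first-order expansions above. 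For $\theta_{22}^{\lambda\eps}=\tfrac1\eps\int_0^\cdot\dot A_\eps\sin(2\eps^{-1}\phi_\eps^\lambda)\,dt$ I would integrate by parts once more, via $\sin(2\eps^{-1}\phi_\eps^\lambda)=-\tfrac{\eps}{2\dot\phi_\eps^\lambda}\tfrac{d}{dt}\cos(2\eps^{-1}\phi_\eps^\lambda)$; this produces an explicit oscillating boundary term with bounded derivative, a constant, and a remaining integral whose integrand I would split into its singular $\mathcal{O}(\eps^{-1})$ part (proportional to $\sin(2\eps^{-1}\phi_0^\lambda)$), to be handled by a further integration by parts, and its bounded part, to be handled by Lemma~\ref{lemma:3} after reducing it to $\dot u_\eps\cos(2k\eps^{-1}\phi_\eps^\lambda)$-type terms. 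This furnishes $\theta_{22}^{\lambda\eps}$ as a sum of explicit rapidly oscillating $C^1$-type functions, a constant, and a slow term converging uniformly (this is the multi-part split referred to in the remark after Lemma~\ref{lemma:6}).

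Collecting these expansions, the genuinely oscillating contributions --- of phases $2\eps^{-1}\phi_0^\lambda$, $4\eps^{-1}\phi_0^\lambda$ and $2\eps^{-1}(\phi_0^\mu\pm\phi_0^\lambda)$ with $\mu\neq\lambda$ --- are, by construction, exactly $[\theta_2^\lambda]^\eps$ of Definition~\ref{def:1}; the $\bar\phi_2^\lambda$-term there arises from $-2A_0\bar\phi_2^\lambda\cos(2\eps^{-1}\phi_0^\lambda)$, and this is the point at which Assumption~\ref{ass:1} is essential, since it guarantees (through Lemma~\ref{lemma:1}) that all cross-mode phase derivatives $\dot\phi_\eps^\lambda-\dot\phi_\eps^\mu$ stay bounded away from zero, so that the cross-mode integrations by parts produce no stationary-phase contributions and $\theta_{22}^{\lambda\eps}$ stays bounded. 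The remaining part of $\theta_2^{\lambda\eps}-[\theta_2^\lambda]^\eps$ is thus bounded in $C^{0,1}([0,T],\R^r)$: the $\mathcal{O}(\eps^{-1})$ singular pieces cancel by the very definition of $[\theta_2^\lambda]^\eps$, and what remains has uniformly bounded time derivative. The extended Arzel\`a--Ascoli theorem (\cite[Principle~4]{Bornemann}) then yields a further subsequence $\{\eps'\}$ and a limit $\bar\theta_2\in C^{0,1}([0,T],\R^r)$ with $\theta_2^{\eps'}-[\theta_2]^{\eps'}\to\bar\theta_2$ in $C([0,T],\R^r)$, which is the convergence~\eqref{eq:26}.

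To identify $\bar\theta_2^\lambda$ I would pass to the weak$^\ast$ limit in the above expansion of $\theta_2^{\lambda\eps}$ (equivalently, of its time derivative): all genuinely oscillating terms vanish weakly$^\ast$ by Lemma~\ref{lemma:2}, using~\eqref{eq:1} for the self-phases $2\phi_0^\lambda$, $4\phi_0^\lambda$ and Assumption~\ref{ass:1} for the cross-mode phases $\phi_0^\mu\pm\phi_0^\lambda$, with Lemma~\ref{lemma:7} and Assumption~\ref{ass:2} taking care of the finitely many impact times at which a cross-mode phase becomes stationary; the products of equal-phase factors collapse via $\sin^2(2\eps^{-1}\phi_0^\lambda)\weak{\ast}\tfrac12$, $\cos^2(2\eps^{-1}\phi_0^\lambda)\weak{\ast}\tfrac12$ and $2\sin(2\eps^{-1}\phi_0^\lambda)\cos(2\eps^{-1}\phi_0^\lambda)=\sin(4\eps^{-1}\phi_0^\lambda)\weak{\ast}0$. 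A careful tally of the resulting non-oscillating residues --- arising both from $\theta_{21}^{\lambda\eps}$ (the $\sin^2$ and $\phi_2^\lambda\cos$ terms) and from the boundary and integral parts of $\theta_{22}^{\lambda\eps}$ --- then shows that the unique surviving non-constant slow contribution is $\theta_\ast^\lambda(D_tL_0^\lambda)^2/(8\omega_\lambda^2(y_0))$, so that $\bar\theta_2^\lambda=\theta_\ast^\lambda(D_tL_0^\lambda)^2/(8\omega_\lambda^2(y_0))+\mathrm{const}$, which is~\eqref{eq:27b}. Finally, $y_0$ is smooth, being the solution of~\eqref{eq:62} with a smooth right-hand side, so the right-hand side of~\eqref{eq:27b} is smooth and, bootstrapping from $\bar\theta_2^\lambda\in C^{0,1}$, we obtain $\bar\theta_2\in C^\infty([0,T],\R^r)$.

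The main obstacle is precisely this last bookkeeping step: checking that, after the Taylor expansions and the repeated integrations by parts, the singular $\mathcal{O}(\eps^{-1})$ and $\mathcal{O}(\eps^{-2})$ terms in $\theta_2^{\lambda\eps}-[\theta_2^\lambda]^\eps$ cancel exactly, and that the various non-oscillating residues produced by the trigonometric products combine, after the double-angle identities, into the single clean expression $\tfrac{d}{dt}\bigl(\theta_\ast^\lambda(D_tL_0^\lambda)^2/(8\omega_\lambda^2(y_0))\bigr)$ --- in other words, tracking the second-order self-interaction of each fast mode. By contrast the cross-mode interactions ($\mu\neq\lambda$) are harmless, since under Assumptions~\ref{ass:1}--\ref{ass:2} they only generate oscillations with nonvanishing phase derivative away from the finitely many impact times and hence do not contribute to the weak$^\ast$ limit.
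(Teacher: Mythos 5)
Your plan follows essentially the same route as the paper's proof: the same splitting \(\theta_2^{\lambda\eps}=\theta_{21}^{\lambda\eps}+\theta_{22}^{\lambda\eps}\), the same expansion of the \(\eps\)-dependent coefficients about their leading-order values to peel off exactly the oscillatory profiles of Definition~\ref{def:1}, the same reliance on Lemmas~\ref{lemma:2},~\ref{lemma:3},~\ref{lemma:5} and~\ref{lemma:7} together with Assumptions~\ref{ass:1}--\ref{ass:2} for the cross-mode phases, and the extended Arzel\`a--Ascoli theorem for the slow remainder --- the paper merely carries out the bookkeeping you defer, by splitting \(\theta_{21}^{\lambda\eps}\) into five and \(\theta_{22}^{\lambda\eps}\) into eight explicit pieces whose averages sum to \(\tfrac{d}{dt}\bigl(\theta_\ast^\lambda(D_tL_0^\lambda)^2/(8\omega_\lambda^2(y_0))\bigr)\). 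One small caution: for the piece \(-2A_0\phi_2^{\lambda\eps}\cos(2\eps^{-1}\phi_0^\lambda)\) your parenthetical ``equivalently, of its time derivative'' overstates what is available, since \(\dot\phi_2^{\lambda\eps}\) is only \(\mathcal{O}(\eps^{-1})\); the paper accordingly proves only uniform convergence for that term (which is why~\eqref{eq:26} carries no weak\(^\ast\) statement, unlike~\eqref{eq:19}--\eqref{eq:22}), so your identification of \(\bar{\theta}_2^\lambda\) must pass through the uniform limit there rather than through the derivative.
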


\begin{proof}
    We write \(\theta_2^{\lambda\eps}= \sum_{i=1}^2 \theta_{2i}^{\lambda\eps}\) as in the proof of
    Lemma~\ref{lemma:6} and \([\theta_2^\lambda]^\eps = \sum_{i=1}^2[\theta_2^\lambda]_i^\eps\), where
    \([\theta_2^\lambda]_i^\eps\) \((i=1,2)\) will be defined later in this proof. We then show that there exist
    component functions \(\bar{\theta}_2^\lambda\coloneqq\sum_{i=1}^2\bar{\theta}_{2i}^\lambda\) and a
    subsequence \(\{\eps^\prime\}\), not relabelled, such that for \(i=1\)
    \begin{equation}
        \label{eq:56}
        \theta_{21}^{\lambda\eps} -[\theta^\lambda_2]_1^\eps\to \bar{\theta}^\lambda_{21} \quad \text{in} \quad C([0,T])
    \end{equation}
    and for \(i=2\)
    \begin{equation}
        \label{eq:58}
        \theta_{22}^{\lambda\eps} -[\theta^\lambda_2]_2^\eps \to \bar{\theta}^\lambda_{22} \quad \text{in} \quad C([0,T]),
        \qquad \frac{d}{dt}\left(  \theta_{22}^{\lambda\eps} -[\theta^\lambda_2]_2^\eps \right)\weak{\ast}
        \frac{d \bar{\theta}^\lambda_{22}}{dt} \quad \text{in} \quad L^\infty([0,T]).
    \end{equation}
    The convergence result~\eqref{eq:26} then follows immediately.

    \paragraph{Part \(\bm{i=1}\):}
    To prove~\eqref{eq:56}, we expand \(\theta_{21}^{\lambda\eps}\) in~\eqref{eq:74}, by replacing
    \(\theta_\eps^\lambda \to \theta_\ast^\lambda + \left( \theta_\eps^\lambda - \theta_\ast^\lambda
    \right)\),
    \(D_t L_\eps^\lambda \to D_t L_0^\lambda + \left( D_t L_\eps^\lambda - D_t L_0^\lambda \right)\) and
    \(\sin(2\eps^{-1}\phi_\eps^\lambda) \to \sin(2\eps^{-1}\phi_0^\lambda) + \left(
    \sin(2\eps^{-1}\phi_\eps^\lambda) - \sin(2\eps^{-1}\phi_0^\lambda) \right)\), and assign the resulting
    terms to the functions \(\theta_{21j}^{\lambda\eps}\) \((j=1,\ldots,5)\). That is, we derive
    \(\theta_{21}^{\lambda\eps}= \sum_{j=1}^5\theta_{21j}^{\lambda\eps}\), where
    \begin{IEEEeqnarray*}{rCl}
        \theta_{211}^{\lambda\eps}&\coloneqq& \frac{1}{\eps}\left( \frac{1}{\omega_\lambda(y_0)}-\frac{1}{\dot{\phi}_\eps^\lambda} \right)\frac{\theta_\ast^\lambda\cdot D_t L_0^\lambda}{2}\sin(2\eps^{-1}\phi^\lambda_0),\\
        \theta_{212}^{\lambda\eps}&\coloneqq& -\frac{1}{\eps}\frac{\left( \theta_\eps^\lambda - \theta_\ast^\lambda \right)\cdot D_t L_0^\lambda}{2\dot{\phi}_\eps^\lambda}\sin(2\eps^{-1}\phi^\lambda_0),\\
        \theta_{213}^{\lambda\eps}&\coloneqq& -\frac{1}{\eps}\frac{\theta_\ast^\lambda \cdot \left( D_t L_\eps^\lambda - D_t L_0^\lambda\right)}{2\dot{\phi}_\eps^\lambda}\sin(2\eps^{-1}\phi^\lambda_0),\\
        \theta_{214}^{\lambda\eps}&\coloneqq& -\frac{1}{\eps}\frac{\theta_\ast^\lambda \cdot D_t L_0^\lambda}{2\dot{\phi}_\eps^\lambda}\left( \sin(2\eps^{-1}\phi^\lambda_\eps)-\sin(2\eps^{-1}\phi^\lambda_0) \right),\\
        \theta_{215}^{\lambda\eps}&\coloneqq& - \>\frac{1}{\eps}\frac{\left( \theta_\eps^\lambda - \theta_\ast^\lambda \right)\cdot \left( D_t L_\eps^\lambda- D_t L_0^\lambda\right)}{2\dot{\phi}_\eps^\lambda}\sin(2\eps^{-1}\phi^\lambda_0)\\
        &&-\> \frac{1}{\eps}\frac{\left( \theta_\eps^\lambda - \theta_\ast^\lambda \right)\cdot D_t L_0^\lambda}{2\dot{\phi}_\eps^\lambda}\left( \sin(2\eps^{-1}\phi^\lambda_\eps)-\sin(2\eps^{-1}\phi^\lambda_0) \right)\\
        && -\frac{1}{\eps}\frac{\theta_\ast^\lambda \cdot \left( D_t L_\eps^\lambda - D_t L_0^\lambda\right)}{2\dot{\phi}_\eps^\lambda}\left( \sin(2\eps^{-1}\phi^\lambda_\eps)-\sin(2\eps^{-1}\phi^\lambda_0) \right)\\
        &&- \>\frac{1}{\eps}\frac{\left( \theta_\eps^\lambda - \theta_\ast^\lambda \right)\cdot \left( D_t L_\eps^\lambda- D_t L_0^\lambda\right)}{2\dot{\phi}_\eps^\lambda}\left( \sin(2\eps^{-1}\phi^\lambda_\eps)-\sin(2\eps^{-1}\phi^\lambda_0) \right).
    \end{IEEEeqnarray*}
    Notice that by~\eqref{eq:12} and Lemma~\ref{lemma:4}, we have
    \(\theta_{21j}^{\lambda\eps}=\mathcal{O}(1)\) for \(j=1,\ldots,4\) and
    \(\theta_{215}^{\lambda\eps}=\mathcal{O}(\eps)\). The function
    \(\theta_{21}^{\lambda\eps}= \sum_{j=1}^5\theta_{21j}^{\lambda\eps}\) is composed of oscillatory and
    non-oscillatory (averaged) terms. For the proof of~\eqref{eq:56}, we therefore define the corresponding
    oscillatory term \([\theta_2^\lambda]_1^\eps \coloneqq\sum_{j=1}^4[\theta_2^\lambda]_{1j}^\eps\), where
    \begin{IEEEeqnarray*}{rCl}
        [\theta_2^\lambda]^\eps_{11} &\coloneqq& -\frac{\theta_\ast^\lambda (D_t L_0^\lambda)^2}{8\omega_\lambda^2(y_0)}\cos(4\eps^{-1}\phi^\lambda_0),\\
        \phantom{}[\theta_2^\lambda]^\eps_{12}&\coloneqq&-\frac{\theta_\ast^\lambda (D_t L_0^\lambda)^2}{8\omega_\lambda^2(y_0)}\cos(4\eps^{-1}\phi^\lambda_0),\\
        \phantom{}[\theta_2^\lambda]^\eps_{13}&\coloneqq& \frac{(\theta_\ast^\lambda)^2 \left\vert DL_0^\lambda \right\vert^2}{8\omega_\lambda(y_0)}\cos(4\eps^{-1}\phi^\lambda_0)-\sum_{\substack{\mu=1\\\mu \neq \lambda}}^r\frac{\theta_\ast^\lambda\theta_\ast^\mu\left\langle DL_0^\mu,DL_0^\lambda \right\rangle }{4\omega_\lambda(y_0)}\sin(2\eps^{-1}\phi^\lambda_0)\sin(2\eps^{-1}\phi^\mu_0),\\
        \phantom{}[\theta_2^\lambda]^\eps_{14}&\coloneqq&-\frac{\theta_\ast^\lambda \cdot D_t L_0^\lambda}{\omega_\lambda(y_0)}\bar{\phi}_2^\lambda \cos(2\eps^{-1}\phi^\lambda_0)+\frac{\theta_\ast^\lambda (D_t L_0^\lambda)^2}{8\omega_\lambda^2(y_0)}\cos(4\eps^{-1}\phi^\lambda_0).
    \end{IEEEeqnarray*}
    We now define the averaged functions \(\bar{\theta}_{21j}^\lambda\) \((j=1,\ldots,4)\) such that for
    \(\bar{\theta}_{21}^\lambda\coloneqq \sum_{j=1}^4 \bar{\theta}_{21j}^\lambda\) the statement
    in~\eqref{eq:56} holds. More precisely, we will show that in the case of \(j=1,2,3\)
    \begin{equation}
        \label{eq:28}
        \theta_{21j}^{\lambda\eps} -[\theta^\lambda_2]_{1j}^\eps\to \bar{\theta}^\lambda_{21j} \quad \text{in} \quad C([0,T]),
        \qquad \frac{d}{dt}\left(  \theta_{21j}^{\lambda\eps} -[\theta^\lambda_2]_{1j}^\eps \right)\weak{\ast}
        \frac{d \bar{\theta}^\lambda_{21j}}{dt} \quad \text{in} \quad L^\infty([0,T]),
    \end{equation}
    and in the case of \(j=4\)
    \begin{equation}
        \label{eq:57}
        \theta_{214}^{\lambda\eps} -[\theta^\lambda_2]_{14}^\eps\to \bar{\theta}^\lambda_{214} \quad \text{in} \quad C([0,T]).
    \end{equation}
    In the following, we give the detailed proof of the convergences
    in~\eqref{eq:28} for the case \(j=1\). The
    other cases follow along a similar line of arguments.

    \paragraph{Case \(\bm{j=1}\):}
    We start by defining \(\bar{\theta}_{221}^\lambda\), i.e.,
    \begin{equation*}
        \bar{\theta}_{211}^\lambda \coloneqq \frac{\theta_\ast^\lambda (D_t L_0^\lambda)^2}{8\omega_\lambda^2(y_0)},
    \end{equation*}
    for \(\lambda = 1,\ldots, r\) and use the trigonometric identity \(1 - \cos(4x) = 2\sin^2(2x)\) to derive
    \begin{equation}
        \label{eq:68}
        \bar{\theta}_{211}^\lambda + [\theta_2^\lambda]^\eps_{11} = \frac{\theta_\ast^\lambda
            (D_t L_0^\lambda)^2}{8\omega_\lambda^2(y_0)} -\frac{\theta_\ast^\lambda
            (D_t L_0^\lambda)^2}{8\omega_\lambda^2(y_0)}\cos(4\eps^{-1}\phi^\lambda_0)
        = \frac{\theta_\ast^\lambda (D_t L_0^\lambda)^2}{4\omega_\lambda^2(y_0)} \sin^2(2\eps^{-1}\phi_0^\lambda).
    \end{equation}
    Moreover, with Equation~\eqref{eq:12a} we can write
    \begin{IEEEeqnarray*}{rCl}
        \theta_{211}^{\lambda\eps} &=& \frac{1}{\eps} \left(\frac{\dot{\phi}_\eps^\lambda
            - \omega_\lambda(y_0)}{\omega_\lambda(y_0)\dot{\phi}_\eps^\lambda}\right)\frac{\theta_\ast^\lambda\cdot
            D_t L_0^\lambda}{2}\sin(2\eps^{-1}\phi_0^\lambda) \\
        \IEEEyesnumber
        \label{eq:69}
        &=&  \left(\frac{1}{\eps}\frac{\omega_\lambda(y_\eps) - \omega_\lambda(y_0)}{\dot{\phi}_\eps^\lambda}
        + \frac{D_t L_\eps^\lambda}{2\dot{\phi}_\eps^\lambda}\sin(2\eps^{-1}\phi_\eps^\lambda)\right)\frac{\theta_\ast^\lambda\cdot
            D_t L_0^\lambda}{2\omega_\lambda(y_0)}\sin(2\eps^{-1}\phi_0^\lambda).
    \end{IEEEeqnarray*}
    Now, we use Equations~\eqref{eq:68} and~\eqref{eq:69} to write
    \begin{equation}
        \label{eq:70}
        \theta_{211}^{\lambda\eps} - [\theta_2^\lambda]^\eps_{11} - \bar{\theta}_{211}^\lambda =\left( u_1^{\lambda\eps}
        + v_1^{\lambda\eps} \right)w_1^{\lambda\eps},
    \end{equation}
    where
    \begin{equation*}
        u_1^{\lambda\eps} \coloneqq \frac{1}{\eps}\frac{\omega_\lambda(y_\eps) - \omega_\lambda(y_0)}{\dot{\phi}_\eps^\lambda},
        \quad v_1^{\lambda\eps} \coloneqq  \frac{D_t L_\eps^\lambda}{2\dot{\phi}_\eps^\lambda}\sin(2\eps^{-1}\phi^\lambda_\eps)
        -\frac{D_t L_0^\lambda}{2\omega_\lambda(y_0)}\sin(2\eps^{-1}\phi^\lambda_0), \quad w_1^{\lambda\eps}
        \coloneqq \frac{\theta_\ast^\lambda \cdot D_t L_0^\lambda}{2\omega_\lambda(y_0)}\sin(2\eps^{-1}\phi^\lambda_0).
    \end{equation*}
    It follows from Lemmas~\ref{lemma:5} and~\ref{lemma:4}, and the system of differential
    equations~\eqref{eq:12} that the sequences \(\{\eps^{-1}u_1^{\lambda\eps}\}\),
    \(\{\dot{u}_1^{\lambda\eps}\}\), \(\{\eps^{-1}v_1^{\lambda\eps}\}\), \(\{\dot{v}_j^{\lambda\eps}\}\),
    \(\{w_1^{\lambda\eps}\}\) and \(\{\eps \dot{w}_1^{\lambda\eps}\}\) are bounded in
    \(L^\infty([0,T])\). This implies the uniform convergence, and after an application of~\cite[Principle
        1]{Bornemann}, the weak\(^\ast\) convergence in~\eqref{eq:28}.

    For the cases \(j = 2,3\), we only summarise the equations corresponding to~\eqref{eq:70}, from which
    \(u_j^{\lambda\eps}\), \(v_3^{\lambda\eps}\) and \(w_j^{\lambda\eps}\) can be read off. The convergences
    as in~\eqref{eq:28} are then proven similarly to the case \(j=1\), by
    applying Lemmas~\ref{lemma:5}
    and~\ref{lemma:4} (and~\ref{lemma:6} in the case \(j=2\)). The case \(j=4\) requires more explanation and
    is thus again described in greater detail.
    \paragraph{Case \(\bm{j=2}\):}
    With
    \begin{equation*}
        \bar{\theta}_{212}^\lambda \coloneqq \frac{\theta_\ast^\lambda (D_t L_0^\lambda)^2}{8\omega_\lambda^2(y_0)},
    \end{equation*}
    we can write
    \begin{equation*}
        \theta_{212}^{\lambda\eps} -[\theta_2^\lambda]^\eps_{12}- \bar{\theta}_{212}^\lambda = u_2^{\lambda\eps} w_2^{\lambda\eps},
    \end{equation*}
    where
    \begin{equation*}
        u_2^{\lambda\eps} \coloneqq \frac{[\theta_1^\lambda]^\eps}{\omega_\lambda(y_0)} -
        \frac{\theta_1^{\lambda\eps}}{\dot{\phi}_\eps^\lambda},\qquad w_2^{\lambda\eps}
        \coloneqq \frac{D_t L_0^\lambda}{2}\sin(2\eps^{-1}\phi_0^\lambda).
    \end{equation*}
    \paragraph{Case \(\bm{j=3}\):}
    Analogously, with
    \begin{equation*}
        \bar{\theta}_{213}^\lambda \coloneqq -\frac{(\theta_\ast^\lambda)^2 \left\vert DL_0^\lambda \right\vert^2}{8\omega_\lambda(y_0)},
    \end{equation*}
    we write
    \begin{equation*}
        \theta_{213}^{\lambda\eps} -[\theta_2^\lambda]^\eps_{13}- \bar{\theta}_{213}^\lambda = \left( u_3^{\lambda\eps}
        + v_3^{\lambda\eps} \right)w_3^{\lambda\eps};
    \end{equation*}
    here
    \begin{equation*}
        u_3^{\lambda\eps} \coloneqq -\frac{1}{\eps} \frac{ \left\langle p_\eps, D L_\eps^\lambda \right\rangle
            - \left\langle p_0, DL_0^\lambda \right\rangle }{\dot{\phi}_\eps^\lambda}, \qquad w_3^{\lambda\eps}
        \coloneqq \frac{\theta_\ast^\lambda}{2}\sin(2\eps^{-1}\phi_0^\lambda),
    \end{equation*}
    \begin{equation*}
        v_3^{\lambda\eps} \coloneqq \sum_{\mu=1}^r \frac{\theta_\ast^\mu\left\langle DL_0^\mu ,
            DL_0^\lambda \right\rangle }{2 \omega_\lambda(y_0)}\sin(2\eps^{-1}\phi_0^\mu) - \frac{\theta_\eps^\mu \left\langle
            DL_\eps^\mu , DL_\eps^\lambda  \right\rangle }{2\dot{\phi}_\eps^\lambda}\sin(2\eps^{-1}\phi_\eps^\mu).
    \end{equation*}
    \paragraph{Case \(\bm{j=4}\):}
    For this final case we first define
    \begin{equation*}
        \bar{\theta}_{214}^\lambda \coloneqq \frac{\theta_\ast^\lambda (D_t L_0^\lambda)^2}{8\omega_\lambda^2(y_0)}
    \end{equation*}
    and use \(\phi_\eps^\lambda = \phi_0^\lambda + \eps^2 \phi_2^{\lambda\eps}\) (see Definition~\ref{def:1})
    with a trigonometric identity to write
    \begin{equation*}
        \sin(2\eps^{-1}\phi_\eps^\lambda) =
        \sin(2\eps^{-1}\phi_0^\lambda)\cos(2\eps\phi_2^{\lambda\eps})+\cos(2\eps^{-1}\phi_0^\lambda)\sin(2\eps\phi_2^{\lambda\eps}).
    \end{equation*}
    This allows us to derive the equation
    \begin{equation*}
        \theta_{214}^{\lambda\eps} - [\theta_2^\lambda]^\eps_{14} - \bar{\theta}_{214}^\lambda
        = u_{41}^{\lambda\eps} w_{41}^{\lambda\eps} + u_{42}^{\lambda\eps} w_{42}^{\lambda\eps},
    \end{equation*}
    where
    \begin{IEEEeqnarray*}{rCl+rCl}
        u_{41}^{\lambda\eps} &\coloneqq& \frac{[\phi_2^\lambda]^\eps + \bar{\phi}_2^\lambda}{\omega_\lambda(y_0)} - \frac{1}{\eps}\frac{\sin(2\eps \phi_2^{\lambda\eps})}{2\dot{\phi}_\eps^\lambda},& w_{41}^{\lambda\eps} &\coloneqq& \theta_\ast^\lambda \cdot D_t L_0^\lambda\cos(2\eps^{-1}\phi^\lambda_0),\\
        u_{42}^{\lambda\eps} &\coloneqq& \frac{1}{\eps}\frac{1-\cos(2\eps\phi_2^{\lambda\eps})}{2\dot{\phi}_\eps^\lambda},& w_{42}^{\lambda\eps} &\coloneqq& \theta_\ast^\lambda \cdot D_t L_0^\lambda \sin(2\eps^{-1}\phi_0^\lambda).
    \end{IEEEeqnarray*}
    By~\eqref{eq:12},~\eqref{eq:20} and Lemma~\ref{lemma:4} we obtain
    \begin{equation*}
        \left\vert u_{41}^{\lambda\eps} \right\vert \leq \left\vert \frac{[\phi_2^\lambda]^\eps
            + \bar{\phi}_2^\lambda}{\omega_\lambda(y_0)} -\frac{\phi_2^{\lambda\eps}}{\dot{\phi}_\eps^\lambda} \right\vert
        + \left\vert  \frac{\phi_2^{\lambda\eps}}{\dot{\phi}_\eps^\lambda} \sum_{k=1}^\infty
        \frac{(-1)^k}{(2k+1)!}(2\eps \phi_2^{\lambda\eps})^{2k}\right\vert\to 0 \quad \text{in} \quad C([0,T])
    \end{equation*}
    and
    \begin{equation*}
        \left\vert u_{42}^{\lambda\eps} \right\vert \leq  \left\vert \frac{\phi_2^{\lambda\eps}}{\dot{\phi}_\eps^\lambda}
        \sum_{k=1}^\infty \frac{(-1)^k}{(2k)!}(2\eps\phi_2^{\lambda\eps})^{2k-1} \right\vert = \mathcal{O}(\eps),
    \end{equation*}
    which implies the uniform convergence in~\eqref{eq:57}.

    \paragraph{Part \(\bm{i=2}\):}
    To prove~\eqref{eq:58} we expand \(\theta_{22}^{\lambda\eps}\) in~\eqref{eq:74}, by writing out the time
    derivative and using the equations
    \begin{equation*}
        \ddot{\phi}_\eps^\lambda = D_t \omega_\lambda(y_\eps) + \frac{\eps}{2} D^2_t L_\eps^\lambda \sin(2\eps^{-1}\phi^\lambda_\eps)
        +  D_t L_\eps^\lambda \cdot \dot{\phi}_\eps^\lambda \cos(2\eps^{-1}\phi^\lambda_\eps)
    \end{equation*}
    and
    \begin{equation*}
        D_t^2 L_\eps^\lambda = \left\langle D^2 L_\eps^\lambda \dot{y}_\eps, \dot{y}_\eps \right\rangle
        -\left\langle DV(y_\eps),DL_\eps^\lambda \right\rangle -\sum_{\mu=1}^r\theta_\eps^\mu \left\langle
        D\omega_\mu(y_\eps), DL_\eps^\lambda \right\rangle + \sum_{\mu=1}^r\theta_\eps^\mu
        \left\langle D\omega_\mu(y_\eps), DL_\eps^\lambda \right\rangle\cos(2\eps^{-1}\phi_\eps^\mu).
    \end{equation*}
    In this way, we can write \(\theta_{22}^{\lambda\eps} = \sum_{j=1}^8\theta^{\lambda\eps}_{22j}\), where
    \begin{IEEEeqnarray*}{rCl}
        \theta^{\lambda\eps}_{221} &\coloneqq& -\frac{1}{\eps}\int_0^\cdot \frac{\theta_\eps^\lambda(D_t L_\eps^\lambda)^2}{\dot{\phi}_\eps^\lambda}\cos(2\eps^{-1}\phi^\lambda_\eps)\sin(2\eps^{-1}\phi^\lambda_\eps)\ud t,\\
        \theta^{\lambda\eps}_{222} &\coloneqq& -\frac{1}{\eps}\int_0^\cdot \frac{\theta_\eps^\lambda \cdot D_t L_\eps^\lambda \cdot D_t \omega_\lambda(y_\eps)}{2(\dot{\phi}_\eps^\lambda)^2}\sin(2\eps^{-1}\phi^\lambda_\eps)\ud t,\\
        \theta^{\lambda\eps}_{223} &\coloneqq& \frac{1}{\eps}\int_0^\cdot \frac{\theta_\eps^\lambda\left\langle D^2L_\eps^\lambda \dot{y}_\eps,\dot{y}_\eps\right\rangle }{2\dot{\phi}_\eps^\lambda}\sin(2\eps^{-1}\phi^\lambda_\eps)\ud t,\\
        \theta_{224}^{\lambda\eps}&\coloneqq& -\frac{1}{\eps}\int_{0}^{\cdot}\frac{\theta_\eps^\lambda \left\langle DV(y_\eps), DL_\eps^\lambda \right\rangle }{2\dot{\phi}_\eps^\lambda}\sin(2\eps^{-1}\phi_0^\lambda)\ud t,\\
        \theta^{\lambda\eps}_{225} &\coloneqq& -\frac{1}{\eps}\sum_{\mu=1}^r\int_0^\cdot  \frac{\theta_\eps^\lambda\theta_\eps^\mu \left\langle D \omega_\mu (y_\eps), DL_\eps^\lambda   \right\rangle }{2\dot{\phi}_\eps^\lambda}\sin(2\eps^{-1}\phi^\lambda_\eps) \ud t,\\
        \theta^{\lambda\eps}_{226} &\coloneqq& \frac{1}{\eps} \int_0^\cdot\frac{(\theta_\eps^\lambda)^2 \left\langle D \omega_\lambda(y_\eps), DL_\eps^\lambda \right\rangle }{2\dot{\phi}_\eps^\lambda}\cos(2\eps^{-1}\phi^\lambda_\eps)\sin(2\eps^{-1}\phi^\lambda_\eps)\ud t,\\
        \theta^{\lambda\eps}_{227} &\coloneqq& \frac{1}{\eps}\sum_{\substack{\mu=1\\\mu \neq \lambda}}^r \int_0^\cdot\frac{\theta_\eps^\lambda \theta_\eps^\mu \left\langle D \omega_\mu(y_\eps), DL_\eps^\lambda \right\rangle }{2\dot{\phi}_\eps^\lambda}\cos(2\eps^{-1}\phi^\mu_\eps)\sin(2\eps^{-1}\phi^\lambda_\eps)\ud t,\\
        \theta^{\lambda\eps}_{228} &\coloneqq& -\int_0^\cdot \frac{\theta_\eps^\lambda \cdot D_t L_\eps^\lambda \cdot D_t^2 L_\eps^\lambda}{4(\dot{\phi}_\eps^\lambda)^2}\sin^2(2\eps^{-1}\phi^\lambda_\eps)\ud t.\\
    \end{IEEEeqnarray*}
    Again, the function \(\theta_{22}^{\lambda\eps}=\sum_{j=1}^8 \theta_{22j}^{\lambda\eps}\) consists of
    oscillatory and non-oscillatory terms. To derive the statement in~\eqref{eq:58}, we therefore define the
    corresponding oscillatory term
    \([\theta_2^\lambda]_2^\eps \coloneqq \sum_{j=1}^7[\theta_2^\lambda]_{2j}^\eps\), where
    \begin{IEEEeqnarray*}{rCl}
        [\theta_2^\lambda]_{21}^\eps &\coloneqq&\frac{\theta_\ast^\lambda(D_t L_0^\lambda)^2}{8\omega^2_\lambda(y_0)}\cos(4\eps^{-1}\phi_0^\lambda),\\
        \phantom{}[\theta_2^\lambda]_{22}^\eps &\coloneqq&\frac{\theta_\ast^\lambda(D_t L_0^\lambda)^2 }{4\omega^2_\lambda(y_0)}\cos(2\eps^{-1}\phi^\lambda_0),\\
        \phantom{}[\theta_2^\lambda]_{23}^\eps &\coloneqq&-\frac{\theta_\ast^\lambda \left\langle D^2L_0^\lambda \dot{y}_0,\dot{y}_0\right\rangle }{4\omega^2_\lambda(y_0)}\cos(2\eps^{-1}\phi^\lambda_0),\\
        \phantom{}[\theta_2^\lambda]_{24}^\eps &\coloneqq& \frac{\theta_\ast^\lambda\left\langle DV(y_0), DL_0^\lambda\right\rangle }{4\omega_\lambda^2(y_0)}\cos(2\eps^{-1}\phi_0^\lambda),\\
        \phantom{}[\theta_2^\lambda]_{25}^\eps &\coloneqq&\sum_{\mu=1}^r \frac{\theta_\ast^\lambda\theta_\ast^\mu \left\langle D\omega_\mu(y_0), DL_0^\lambda \right\rangle }{4\omega_\lambda^2(y_0)}\cos(2\eps^{-1}\phi^\lambda_0),\\
        \phantom{}[\theta_2^\lambda]_{26}^\eps &\coloneqq&-\frac{(\theta_\ast^\lambda)^2 \vert DL_0^\lambda \vert^2 }{16\omega_\lambda(y_0)}\cos(4\eps^{-1}\phi^\lambda_0),\\
        \phantom{}[\theta_2^\lambda]_{27}^\eps &\coloneqq&\sum_{\substack{\mu=1\\\mu \neq \lambda}}^r \frac{\theta_\ast^\lambda\theta_\ast^\mu \left\langle D \omega_\mu(y_0), DL_0^\lambda \right\rangle }{8\omega_\lambda(y_0)}\left\{ \frac{\cos\left( 2\eps^{-1}(\phi^\mu_0-\phi^\lambda_0) \right)}{\omega_\mu(y_0) - \omega_\lambda(y_0)}-\frac{\cos\left( 2\eps^{-1}(\phi^\mu_0+\phi^\lambda_0) \right)}{\omega_\mu(y_0) + \omega_\lambda(y_0)}\right\}.
    \end{IEEEeqnarray*}
    We show that for a subsequence \(\{\eps^\prime\}\) (not relabelled) there exist non-oscillatory
    functions \(\bar{\theta}_{22j}^\lambda\) \({(j=1,\ldots,8)}\) such that for
    \(\bar{\theta}_{22}^\lambda\coloneqq \sum_{j=1}^8 \bar{\theta}_{22j}^\lambda\), the statement
    in~\eqref{eq:58} holds. More precisely, we will prove that for \(j=1,\ldots,7\)
    \begin{equation}
        \label{eq:29}
        \theta_{22j}^{\lambda\eps} -[\theta^\lambda_2]_{2j}^\eps\to \bar{\theta}^\lambda_{22j} \quad \text{in} \quad C([0,T]),
        \qquad \frac{d}{dt} \left( \theta_{22j}^{\lambda\eps} -[\theta^\lambda_2]_{2j}^\eps \right) \weak{\ast}
        \frac{d \bar{\theta}^\lambda_{22j}}{dt} \quad \text{in} \quad L^\infty([0,T]),
    \end{equation}
    and for \(j=8\)
    \begin{equation}
        \label{eq:29a}
        \theta_{228}^{\lambda\eps}\to \bar{\theta}^\lambda_{228} \quad \text{in} \quad C([0,T]),\qquad
        \frac{d \theta_{228}^{\lambda\eps}}{dt} \weak{\ast} \frac{d \bar{\theta}^\lambda_{228}}{dt} \quad \text{in}
        \quad L^\infty([0,T]).
    \end{equation}
    Note that the scaling in \(\theta_{228}^{\lambda\eps}\) is different from the scaling in
    \(\theta_{22j}^{\lambda\eps}\) \((j=1,\ldots,7)\). As a consequence, there is no non-converging oscillatory
    component that we would have to subtract from \(\theta_{228}^{\lambda\eps}\) in order to analyse the limit
    \(\eps\to0\).

    We now give a detailed proof of the convergences in~\eqref{eq:29} for the case \(j=1\). The other cases
    are dealt with similarly.
    \paragraph{Case \(\bm{j=1}\):}
    For \(\lambda = 1,\ldots, r\) we start by writing
    \begin{IEEEeqnarray*}{rCl}
        \frac{d \theta_{221}^{\lambda\eps}}{dt} &=& -\frac{1}{\eps}\frac{\theta_\eps^\lambda (D_t L_\eps^\lambda)^2}{2\dot{\phi}_\eps^\lambda}\sin(4\eps^{-1}\phi^\lambda_\eps) = \frac{\theta_\eps^\lambda (D_t L_\eps^\lambda)^2}{8(\dot{\phi}_\eps^\lambda)^2}\frac{d}{dt}\cos(4\eps^{-1}\phi^\lambda_\eps)\\
        &=&\frac{d}{dt}\left( \frac{\theta_\eps^\lambda (D_t L_\eps^\lambda)^2}{8(\dot{\phi}_\eps^\lambda)^2}\cos(4\eps^{-1}\phi^\lambda_\eps)\right)-\frac{d}{dt}
        \left( \frac{\theta_\eps^\lambda (D_t L_\eps^\lambda)^2}{8(\dot{\phi}_\eps^\lambda)^2} \right)
        \cos(4\eps^{-1}\phi^\lambda_\eps),
    \end{IEEEeqnarray*}
    which we use to derive
    \begin{IEEEeqnarray*}{rCl}
        \frac{d}{dt}\left( \theta^{\lambda\eps}_{221}-[\theta_2^\lambda]_{21}^\eps \right)&=& \frac{d}{dt}\left( \frac{\theta_\eps^\lambda (D_t L_\eps^\lambda)^2}{8(\dot{\phi}_\eps^\lambda)^2}\cos(4\eps^{-1}\phi^\lambda_\eps) -[\theta_2^\lambda]_{21}^\eps\right)-\frac{d}{dt} \left( \frac{\theta_\eps^\lambda (D_t L_\eps^\lambda)^2}{8(\dot{\phi}_\eps^\lambda)^2} \right)\cos(4\eps^{-1}\phi^\lambda_\eps)\\
        &=& \dot{v}_1^{\lambda\eps} - \dot{u}_1^{\lambda\eps} \cos(\eps^{-1}\psi_1^{\lambda\eps}).
    \end{IEEEeqnarray*}
    Here, we identified
    \(v_1^{\lambda\eps} \coloneqq u_1^{\lambda\eps} \cos(\eps^{-1}\psi_1^{\lambda\eps}) - u_1^{\lambda 0}
    \cos(\eps^{-1}\psi_1^{\lambda 0})\), where
    \begin{equation*}
        u_1^{\lambda\eps} \coloneqq \frac{\theta_\eps^\lambda (D_t L_\eps^\lambda)^2}{8(\dot{\phi}_\eps^\lambda)^2},
        \qquad \psi_1^{\lambda\eps} \coloneqq 4\phi_\eps^\lambda,
    \end{equation*}
    according to Lemma~\ref{lemma:5}. The necessary assumptions on \(u_1^{\lambda\eps}\) and
    \(\psi_1^{\lambda\eps}\) are satisfied by~\eqref{eq:10} and Lemma~\ref{lemma:4}. Consequently, it follows
    that \(\dot{v}_1^{\lambda\eps}\weak{\ast}0\) in \(L^\infty([0,T])\). Moreover, since
    \(\{\dot{u}_1^{\lambda\eps} \cos(\eps^{-1}\psi_1^{\lambda\eps})\}\) is a bounded sequence in
    \(L^\infty([0,T])\),~\cite[Chapter I. Lemma 1]{Bornemann} implies the equivalence of the weak\(^\ast\)
    convergence of the sequence \(\{\dot{u}_1^{\lambda\eps} \cos(\eps^{-1}\psi_1^{\lambda\eps})\}\) and the
    integral convergence as in Lemma~\ref{lemma:3}. Hence, we reason that
    \(\dot{u}_1^{\lambda\eps} \cos(\eps^{-1}\psi_1^{\lambda\eps})\weak{\ast}0\) in \(L^\infty([0,T])\). We
    therefore conclude that
    \begin{equation*}
        \frac{d}{dt}\left( \theta^{\lambda\eps}_{221}-[\theta_2^\lambda]_{21}^\eps \right)
        \weak{\ast}\frac{d\bar{\theta}_{221}^\lambda}{dt}\coloneqq 0 \quad \text{in} \quad L^\infty([0,T]).
    \end{equation*}
    Finally, with \(\{\dot{v}_1^{\lambda\eps}\}\) and \(\{\dot{u}_1^{\lambda\eps}\}\) being bounded sequences
    in \(L^\infty([0,T])\), the convergence in~\eqref{eq:29} follows for a subsequence \(\{\eps^\prime\}\)
    from~\cite[Principle 4]{Bornemann}, i.e.,~an extended version of the Arzel\`a--Ascoli theorem.

    For \(j=2,\ldots,6\) we only summarise the results, since the arguments follow along the same lines as in the case
    \(j=1\) above. In particular, we always identify terms \(\dot{v}_j^{\lambda\eps}\weak{\ast}0\)
    in \(L^\infty([0,T])\), according to Lemma~\ref{lemma:5}, and terms
    \(\dot{u}_j^{\lambda\eps} \cos(\eps^{-1}\psi_j^{\lambda\eps})\weak{\ast}\dot{\bar{\theta}}_{22j}^\lambda\)
    \((j=2,\ldots,6)\) according to Lemma~\ref{lemma:3}. The cases \(j=7,8\) require some different reasoning
    and are thus explained in more detail.
    \paragraph{Case \(\bm{j=2}\):}
    \begin{IEEEeqnarray*}{rcl}
        \frac{d}{dt}\left( \theta^{\lambda\eps}_{222}-[\theta_2^\lambda]_{22}^\eps \right) &=&\>\frac{d}{dt}\left( \frac{\theta_\eps^\lambda \cdot D_t L_\eps^\lambda \cdot D_t \omega_\lambda(y_\eps)}{4(\dot{\phi}_\eps^\lambda)^3}\cos(2\eps^{-1}\phi^\lambda_\eps)-[\theta_2^\lambda]_{22}^\eps \right)-\frac{d}{dt}\left( \frac{\theta_\eps^\lambda \cdot D_t L_\eps^\lambda \cdot D_t \omega_\lambda(y_\eps)}{4(\dot{\phi}_\eps^\lambda)^3} \right) \cos(2\eps^{-1}\phi^\lambda_\eps) \\
        &\weak{\ast}&\> \frac{d\bar{\theta}^{\lambda}_{222}}{dt}\coloneqq \frac{\theta_\ast^\lambda(D_t L_0^\lambda)^3}{2\omega^2_\lambda(y_0)}- \frac{(\theta_\ast^\lambda)^2 \vert  DL_0^\lambda \vert^2 \cdot D_t L_0^\lambda}{4\omega_\lambda(y_0)} \quad \text{in} \quad L^\infty([0,T]).
    \end{IEEEeqnarray*}
    \paragraph{Case \(\bm{j=3}\):}
    \begin{IEEEeqnarray*}{rCl}
        \frac{d}{dt}\left( \theta^{\lambda\eps}_{223}-[\theta_2^\lambda]_{23}^\eps \right)&=& -\frac{d}{dt}\left( \frac{\theta_\eps^\lambda \left\langle D^2L_\eps^\lambda \dot{y}_\eps,\dot{y}_\eps\right\rangle }{4(\dot{\phi}_\eps^\lambda)^2}\cos(2\eps^{-1}\phi^\lambda_\eps)+[\theta_2^\lambda]_{23}^\eps \right)+ \frac{d}{dt}\left( \frac{\theta_\eps^\lambda \left\langle D^2L_\eps^\lambda \dot{y}_\eps,\dot{y}_\eps\right\rangle }{4(\dot{\phi}_\eps^\lambda)^2} \right)\cos(2\eps^{-1}\phi^\lambda_\eps) \\
        &\weak{\ast}& \frac{d\bar{\theta}^{\lambda}_{223}}{dt}\coloneqq \frac{(\theta_\ast^\lambda)^2 \left\langle D_t DL_0^\lambda, DL_0^\lambda \right\rangle }{4\omega_\lambda(y_0)}- \frac{3\theta_\ast^\lambda\left\langle D^2 L_0^\lambda \dot{y}_0, \dot{y}_0 \right\rangle \cdot D_t L_0^\lambda }{8\omega^2_\lambda(y_0)}\quad \text{in} \quad L^\infty([0,T]).
    \end{IEEEeqnarray*}
    \paragraph{Case \(\bm{j=4}\):}
    \begin{IEEEeqnarray*}{rCl}
        \frac{d}{dt}\left( \theta_{224}^{\lambda\eps} - [\theta_2^\lambda]_{24}^\eps\right)&=&\frac{d}{dt}\left( \frac{\theta_\eps^\lambda \left\langle DV(y_\eps), DL_\eps^\lambda\right\rangle }{4(\dot{\phi}_\eps^\lambda)^2}\cos(2\eps^{-1}\phi_\eps^\lambda) - [\theta_2^\lambda]_{24}^\eps \right) - \frac{d}{dt}\left( \frac{\theta_\eps^\lambda \left\langle DV(y_\eps), DL_\eps^\lambda \right\rangle }{4(\dot{\phi}_\eps^\lambda)^2} \right)\cos(2\eps^{-1}\phi_\eps^\lambda)\\
        &\weak{\ast}&\frac{d\bar{\theta}_{224}^\lambda}{dt}\coloneqq \frac{3 \theta_\ast^\lambda \left\langle DV(y_0), DL_0^\lambda \right\rangle \cdot D_t L_0^\lambda }{8 \omega_\lambda^2(y_0)} \quad \text{in} \quad L^\infty([0,T]).
    \end{IEEEeqnarray*}
    \paragraph{Case \(\bm{j=5}\):}
    \begin{IEEEeqnarray*}{rCl}
        \frac{d}{dt}\left( \theta^{\lambda\eps}_{225}-[\theta_2^\lambda]_{25}^\eps \right)&=& \frac{d}{dt}\left( \sum_{\mu=1}^r \frac{\theta_\eps^\lambda \theta_\eps^\mu \left\langle D \omega_\mu (y_\eps), DL_\eps^\lambda \right\rangle }{4(\dot{\phi}_\eps^\lambda)^2}\cos(2\eps^{-1}\phi^\lambda_\eps)-[\theta_2^\lambda]_{25}^\eps \right)\\
        &&-\>\sum_{\mu=1}^r \frac{d}{dt}\left( \frac{\theta_\eps^\lambda \theta_\eps^\mu \left\langle D \omega_\mu (y_\eps), DL_\eps^\lambda \right\rangle }{4(\dot{\phi}_\eps^\lambda)^2} \right)\cos(2\eps^{-1}\phi^\lambda_\eps)\\
        &\weak{\ast}&\frac{d\bar{\theta}^{\lambda}_{225}}{dt} \coloneqq \sum_{\mu=1}^r \frac{3\theta_\ast^\lambda\theta_\ast^\mu \left\langle D\omega_\mu(y_0), DL_0^\lambda \right\rangle \cdot D_t L_0^\lambda }{8\omega^2_\lambda(y_0)} +\frac{(\theta_\ast^\lambda)^2\vert DL_0^\lambda\vert^2 \cdot D_t L_0^\lambda}{8\omega_\lambda(y_0)}  \quad \text{in} \quad L^\infty([0,T]).
    \end{IEEEeqnarray*}
    \paragraph{Case \(\bm{j=6}\):}
    \begin{IEEEeqnarray*}{rCl}
        \frac{d}{dt}\left( \theta^{\lambda\eps}_{226}-[\theta_2^\lambda]_{26}^\eps \right)&=&-\frac{d}{dt} \left( \frac{(\theta_\eps^\lambda)^2 \left\langle D \omega_\lambda(y_\eps), DL_\eps^\lambda \right\rangle }{16(\dot{\phi}_\eps^\lambda)^2}\cos(4\eps^{-1}\phi^\lambda_\eps)+[\theta_2^\lambda]_{26}^\eps \right)\\
        &&+\> \frac{d}{dt} \left( \frac{(\theta_\eps^\lambda)^2 \left\langle D \omega_\lambda(y_\eps), DL_\eps^\lambda \right\rangle }{16(\dot{\phi}_\eps^\lambda)^2} \right) \cos(4\eps^{-1}\phi^\lambda_\eps)\\
        &\weak{\ast}& \frac{d\bar{\theta}_{226}^\lambda}{dt}\coloneqq 0 \quad \text{in} \quad L^\infty([0,T]).
    \end{IEEEeqnarray*}
    \paragraph{Case \(\bm{j=7}\):}
    Analogous to the previous cases we write
    \begin{IEEEeqnarray*}{rCl}
        \frac{d}{dt}\left( \theta^{\lambda\eps}_{227}-[\theta_2^\lambda]_{27}^\eps \right)&=&\frac{d}{dt}\left( \sum_{\substack{\mu=1\\\mu \neq \lambda}}^r \frac{\theta_\eps^\lambda\theta_\eps^\mu \left\langle D \omega_\mu(y_\eps), DL_\eps^\lambda \right\rangle }{8\dot{\phi}_\eps^\lambda}\left\{ \frac{\cos\left( 2\eps^{-1}(\phi^\mu_\eps-\phi^\lambda_\eps) \right)}{\dot{\phi}^\mu_\eps - \dot{\phi}^\lambda_\eps}-\frac{\cos\left( 2\eps^{-1}(\phi^\mu_\eps+\phi^\lambda_\eps) \right)}{\dot{\phi}^\mu_\eps + \dot{\phi}^\lambda_\eps}\right\}-[\theta_2^\lambda]_{27}^\eps \right)\\
        &&+\>\sum_{\substack{\mu=1\\\mu \neq \lambda}}^r \frac{d}{dt}\left( \frac{\theta_\eps^\lambda\theta_\eps^\mu \left\langle D \omega_\mu(y_\eps), DL_\eps^\lambda \right\rangle }{8\dot{\phi}_\eps^\lambda (\dot{\phi}^\mu_\eps + \dot{\phi}^\lambda_\eps)} \right) \cos\left( 2\eps^{-1}(\phi^\mu_\eps + \phi^\lambda_\eps) \right)\\
        &&-\>\sum_{\substack{\mu=1\\\mu \neq \lambda}}^r \frac{d}{dt}\left( \frac{\theta_\eps^\lambda\theta_\eps^\mu \left\langle D \omega_\mu(y_\eps), DL_\eps^\lambda \right\rangle }{8\dot{\phi}_\eps^\lambda (\dot{\phi}^\mu_\eps - \dot{\phi}^\lambda_\eps)} \right) \cos\left( 2\eps^{-1}(\phi^\mu_\eps - \phi^\lambda_\eps) \right).
    \end{IEEEeqnarray*}
    We identify, similar to the case \(j=1\), the first term on the right-hand
    side with \(\dot{v}_7^{\lambda\eps}\). Then, it follows from Lemma~\ref{lemma:5} that \(\dot{v}_7^{\lambda\eps}\weak{\ast}0\) in \(L^\infty([0,T])\). Moreover, we identify the summands in the remaining two sums on the right-hand side with functions
    \begin{equation*}
        \dot{u}_{7\pm}^{\lambda\mu\eps} \cos(\eps^{-1}\psi_{7\pm}^{\lambda\mu\eps}) \coloneqq \frac{d}{dt}\left( \frac{\theta_\eps^\lambda\theta_\eps^\mu \left\langle D \omega_\mu(y_\eps), DL_\eps^\lambda \right\rangle }{8\dot{\phi}_\eps^\lambda (\dot{\phi}^\mu_\eps \pm \dot{\phi}^\lambda_\eps)} \right) \cos\left( 2\eps^{-1}(\phi^\mu_\eps \pm \phi^\lambda_\eps) \right),
    \end{equation*}
    for \(\lambda,\mu = 1,\ldots, r\), \(\lambda\neq \mu\). Together with~\eqref{eq:12} we expand the time
    derivative in \(\dot{u}_{7\pm}^{\lambda\mu\eps}\) and find, based on the non-resonance Assumptions~\ref{ass:1}
    and~\ref{ass:2}, and Lemmas~\ref{lemma:2} and~\ref{lemma:7}, that
    \(\dot{u}_{7\pm}^{\lambda\mu\eps} \cos(\eps^{-1}\psi_{7\pm}^{\lambda\mu\eps})\weak{\ast}0\) in
    \(L^\infty([0,T])\). All together, we conclude that
    \begin{equation*}
        \frac{d}{dt}\left( \theta^{\lambda\eps}_{227}-[\theta_2^\lambda]_{27}^\eps \right)\weak{\ast} \frac{d\bar{\theta}_{227}^\lambda}{dt}\coloneqq 0 \quad \text{in} \quad L^\infty([0,T]).
    \end{equation*}
    Finally, we use~\cite[Principle 4]{Bornemann} to derive the uniform convergence for a subsequence
    \(\{\eps^\prime\}\) in~\eqref{eq:29}.
    \paragraph{Case \(\bm{j=8}\):}
    The convergences in~\eqref{eq:29a} follow for a subsequence \(\{\eps^\prime\}\) (not relabelled)
    from~\cite[Principle 4]{Bornemann} and
    \begin{equation*}
        \frac{d\theta^{\lambda\eps}_{228}}{dt}  =  -\frac{\theta_\eps^\lambda \cdot D_t L_\eps^\lambda \cdot D_t^2
            L_\eps^\lambda}{8(\dot{\phi}_\eps^\lambda)^2} +\frac{\theta_\eps^\lambda \cdot D_t L_\eps^\lambda \cdot D_t^2
            L_\eps^\lambda}{8(\dot{\phi}_\eps^\lambda)^2}\cos(4\eps^{-1}\phi^\lambda_\eps) \weak{\ast}
        \frac{d\bar{\theta}^{\lambda}_{228}}{dt}\coloneqq - \frac{\theta_\ast^\lambda \cdot D_t L_0^\lambda \cdot D_t^2
            L_0^\lambda}{8\omega_\lambda^2(y_0)}  \quad \text{in} \quad L^\infty([0,T]),
    \end{equation*}
    where we used Lemma~\ref{lemma:2} for the weak\(^\ast\) convergence.

    Now, by combining the cases \(j=1,\ldots, 8\) notice that
    \begin{IEEEeqnarray*}{rCl}
        \sum_{j=1}^8\frac{d\bar{\theta}_{22j}^\lambda}{dt}&=& \frac{\theta_\ast^\lambda(D_t L_0^\lambda)^3}{2\omega^2_\lambda(y_0)}- \frac{\theta_\ast^\lambda \cdot D_t L_0^\lambda \cdot D_t^2 L_0^\lambda}{2\omega_\lambda^2(y_0)}+\frac{(\theta_\ast^\lambda)^2 \left\langle D_t DL_0^\lambda, DL_0^\lambda \right\rangle }{4\omega_\lambda(y_0)}- \frac{(\theta_\ast^\lambda)^2 \vert  DL_0^\lambda \vert^2 \cdot D_t L_0^\lambda}{8\omega_\lambda(y_0)}\\
        &=&- \frac{d}{dt}\left( \frac{\theta_\ast^\lambda (D_t L_0^\lambda)^2}{4\omega_\lambda^2(y_0)} -\frac{(\theta_\ast^\lambda)^2 \left\vert DL_0^\lambda \right\vert^2}{8\omega_\lambda(y_0)}\right),
    \end{IEEEeqnarray*}
    where we used
    \begin{equation*}
        D_t^2 L_0^\lambda = \left\langle D^2 L_0^\lambda \dot{y}_0, \dot{y}_0 \right\rangle -\left\langle DV(y_0),DL_0^\lambda
        \right\rangle -\sum_{\mu=1}^r\theta_\ast^\mu \left\langle D\omega_\mu(y_0), DL_0^\lambda \right\rangle.
    \end{equation*}
    Finally, Equation~\eqref{eq:27b} follows with
    \begin{equation}
        \label{eq:31}
        \frac{d\bar{\theta}_2^\lambda}{dt}=\sum_{i=1}^2\frac{d\bar{\theta}_{2i}^\lambda}{dt} = \sum_{j=1}^4
        \frac{d\bar{\theta}_{21j}^\lambda}{dt}+\sum_{j=1}^8\frac{d\bar{\theta}_{22j}^\lambda}{dt}.
    \end{equation}
\end{proof}

\begin{remark}
    It would be desirable to complement the uniform convergence result
    in~\eqref{eq:57} by a weak\(^\ast\)
    convergence result of the form
    \begin{equation*}
        \frac{d}{dt}\left(  \theta_{214}^{\lambda\eps} -[\theta^\lambda_2]_{14}^\eps \right)\weak{\ast} \frac{d
            \bar{\theta}^\lambda_{214}}{dt} \quad \text{in} \quad L^\infty([0,T]).
    \end{equation*}
    This would allow us to extend the uniform convergence result in~\eqref{eq:26} by a weak\(^\ast\) convergence
    as in~\eqref{eq:19}--\eqref{eq:22}. To this end one would need to show in the proof of Lemma~\ref{lemma:9},
    part \(i=1\), case \(j=4\), that \(u_4^{\lambda\eps} = \mathcal{O}(\eps)\). To do so one would need to
    extend Lemma~\ref{lemma:4} and show that the sequence
    \(\{\eps^{-1}\left( \phi_2^{\lambda\eps}-\left(\bar{\phi}_2^\lambda+[\phi_2^\lambda]^\eps\right)\right)\}\)
    is bounded in \(L^\infty([0,T])\). This would require more notation and would significantly increase the
    complexity of this article. We therefore do not pursue this analysis further.
\end{remark}

\begin{lemma}
    \label{lemma:10}
    The extraction of a subsequence in Lemmas~\ref{lemma:8} and~\ref{lemma:9} can be discarded altogether and
    \((\bar{\phi}_2, \bar{\theta}_2, \bar{y}_2,\bar{p}_2)\) is the unique solution to the initial value
    problem~\eqref{eq:27}--\eqref{eq:30}.
\end{lemma}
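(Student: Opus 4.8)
The plan is to proceed in three parts: first to identify the differential equations satisfied by the subsequential limit functions produced in Lemmas~\ref{lemma:8} and~\ref{lemma:9}, then to invoke uniqueness for the resulting linear initial value problem, and finally to deduce by a standard subsequence argument that the convergences~\eqref{eq:20}--\eqref{eq:22} and~\eqref{eq:26} hold for the whole family $\eps\to0$.

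For the first part I would pass to the weak\(^\ast\) limit in the evolution equations~\eqref{eq:23},~\eqref{eq:24},~\eqref{eq:25} derived in the proof of Lemma~\ref{lemma:8}. By that lemma the left-hand sides converge weakly\(^\ast\) along the subsequence to $\tfrac{d}{dt}\bar\phi_2$, $\tfrac{d}{dt}\bar y_2$ and $\tfrac{d}{dt}\bar p_2$. On the right-hand sides, the singular quotients $(\omega_\lambda(y_\eps)-\omega_\lambda(y_0))/\eps^2$, $(p_\eps-p_0)/\eps^2$, $(D_jV(y_\eps)-D_jV(y_0))/\eps^2$ and $\theta_\ast^\lambda(D_j\omega_\lambda(y_\eps)-D_j\omega_\lambda(y_0))/\eps^2$ are rewritten by a first-order Taylor expansion and, using the uniform bounds of Lemma~\ref{lemma:4}, the convergences of Lemma~\ref{lemma:8}, the relation~\eqref{eq:19}, and the fact that $[y_2]^\eps,[p_2]^\eps,[\theta_1]^\eps,[\theta_2]^\eps\weak{\ast}0$, converge weakly\(^\ast\) to the linear expressions in $\bar y_2$, $\bar p_2$ and $\bar\theta_2$ that appear on the right of~\eqref{eq:27a},~\eqref{eq:27c},~\eqref{eq:27d}. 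The remaining oscillatory contributions are handled exactly as for $\bar\theta_2$ in the proof of Lemma~\ref{lemma:9}: after integrating by parts one writes each of them, via Lemma~\ref{lemma:5}, as the sum of the total derivative of a sequence converging uniformly to zero (hence weakly\(^\ast\) null) and of terms of the form $\dot u_\eps\cos(2k\eps^{-1}\phi_\eps^\lambda)$ whose weak\(^\ast\) limits are supplied by the integral identity in Lemma~\ref{lemma:3}, while products of two distinct angles are eliminated by the non-resonance Assumptions~\ref{ass:1},~\ref{ass:2} together with Lemmas~\ref{lemma:2} and~\ref{lemma:7}. Collecting these limits gives~\eqref{eq:27a},~\eqref{eq:27c},~\eqref{eq:27d}, and~\eqref{eq:27b} has already been obtained in Lemma~\ref{lemma:9}. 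The initial conditions~\eqref{eq:30} are checked by evaluating the uniform limits at $t=0$: by~\eqref{eq:9} and Definition~\ref{def:1} the scaled residuals $\phi_2^\eps$, $\theta_1^\eps$, $\theta_2^\eps$, $y_2^\eps$, $p_2^\eps$ all vanish at $t=0$, and since $\phi_0(0)=0$ the explicit oscillatory corrections reduce there to their ($\eps$-independent) amplitudes, which yields $\bar\phi_2(0)=-[\phi_2]^\eps(0)$, $\bar\theta_2(0)=-[\theta_2]^\eps(0)$, $\bar y_2(0)=-[y_2]^\eps(0)$ and $\bar p_2(0)=-[p_2]^\eps(0)$.

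For the second part, the system~\eqref{eq:27} is a linear inhomogeneous system of first-order ordinary differential equations for $(\bar\phi_2,\bar\theta_2,\bar y_2,\bar p_2)$ whose coefficients are smooth functions of $t$ on the compact interval $[0,T]$, assembled from $y_0$, $p_0$, $\omega_\lambda(y_0)$ and their derivatives; indeed~\eqref{eq:27b} determines $\bar\theta_2$ explicitly, $(\bar y_2,\bar p_2)$ then solves a closed linear subsystem, and $\bar\phi_2$ is recovered by a single quadrature. Standard theory for linear ordinary differential equations on a compact interval (for instance Gronwall's inequality) therefore guarantees a unique global solution of~\eqref{eq:27}--\eqref{eq:30} on $[0,T]$.

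For the third part, let $\eps'\to0$ be arbitrary. The uniform bounds of Lemmas~\ref{lemma:4} and~\ref{lemma:6} permit applying Lemmas~\ref{lemma:8} and~\ref{lemma:9} successively along $\{\eps'\}$, producing a further subsequence $\{\eps''\}$ along which~\eqref{eq:20}--\eqref{eq:22} and~\eqref{eq:26} hold with certain limits; by the first part these limits solve the initial value problem~\eqref{eq:27}--\eqref{eq:30}, and by the second part they coincide with its unique solution $(\bar\phi_2,\bar\theta_2,\bar y_2,\bar p_2)$, independently of the choices made. Since every subsequence thus has a further subsequence converging to the same limit, the subsequence principle~\cite[Principle~5]{Bornemann} gives convergence of the whole family as $\eps\to0$. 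I expect the main obstacle to be the first part --- the careful weak\(^\ast\) passage to the limit in~\eqref{eq:23}--\eqref{eq:25} and the attendant bookkeeping of oscillatory cancellations --- but, conceptually, this merely repeats the computation already carried out for $\bar\theta_2$ in Lemma~\ref{lemma:9}, so it introduces no genuinely new difficulty.
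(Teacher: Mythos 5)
Your proposal is correct and follows essentially the same route as the paper: weak\(^\ast\) passage to the limit in~\eqref{eq:23}--\eqref{eq:25} via Lemmas~\ref{lemma:3} and~\ref{lemma:5} (plus the product rule for a uniformly convergent times a weak\(^\ast\) convergent sequence,~\cite[Lemma~1]{Bornemann}, which you use implicitly), Equation~\eqref{eq:27b} from Lemma~\ref{lemma:9}, initial values from the uniform convergences, and~\cite[Principle~5]{Bornemann} to remove the subsequence extraction since the limit system has a unique, subsequence-independent solution. Your version merely spells out the uniqueness of the linear initial value problem and the subsequence--of--a--subsequence argument, which the paper leaves implicit.
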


\begin{proof}
    The differential equations~\eqref{eq:27a},~\eqref{eq:27c} and~\eqref{eq:27d} follow
    from~\eqref{eq:23}--\eqref{eq:25} by taking the weak\(^\ast\) limit in
    combination with Lemmas~\ref{lemma:3}
    and~\ref{lemma:5}, and~\cite[Lemma~1]{Bornemann}. Formula~\eqref{eq:27b} follows from~\eqref{eq:31}. The
    initial values~\eqref{eq:30} can be derived from the uniform convergences in~\eqref{eq:20}--\eqref{eq:26}.
    Furthermore, since the right-hand side of~\eqref{eq:27} --- and therefore the solution
    \( (\bar{\phi}_2, \bar{\theta}_2, \bar{y}_2, \bar{p}_2) \in C^\infty([0,T],\R^{2m})\) --- does not depend on
    the chosen subsequence,~\cite[Principle~5]{Bornemann} allows us to discard the extraction of a subsequence
    altogether.
\end{proof}

\subsection{Higher-order asymptotic expansion and restrictions on the timescale}

\label{sec:High-order-asympt}
In the following we summarise how to derive higher-order asymptotic expansions
of the solution
to~\eqref{eq:8}--\eqref{eq:9}. Let us assume that we know the asymptotic
expansion up to order \(k-1\) and we want to
derive the asymptotic expansion to \(k\)th order, i.e.,~for \(u_\eps\)
representing the functions
\(\phi_\eps\), \(\theta_\eps\), \(y_\eps\) or \(p_\eps\), we are looking for an
asymptotic expansion of the
form
\begin{equation*}
    u_\eps = u_0 + \sum_{\ell=1}^{k-1} \eps^\ell[\bar{u}_\ell]^\eps + \eps^k[\bar{u}_k]^\eps + \eps^k u_{k+1}^\eps,\end{equation*}
where for \(\ell = 1, \ldots, k\),
\begin{equation*}
    [\bar{u}_\ell]^\eps \coloneqq \bar{u}_\ell + [u_\ell]^\eps\weak{\ast} \bar{u}_\ell \quad \text{in}
    \quad L^\infty([0,T]),\qquad u_{k+1}^\eps \to 0 \quad \text{in}\quad C([0,T]).
\end{equation*}
Two approaches can be used to derive the function \([\bar{u}_k]^\eps\). They both rely
on analysing the
leading-order asymptotic expansion of
\begin{equation*}
    u_k^{\eps}\coloneqq\frac{u_\eps-u_0}{\eps^k} - \sum_{\ell=1}^{k-1}\eps^{\ell-k} [\bar{u}_\ell]^\eps.
\end{equation*}
The first approach relies on deriving \([\bar{u}_k]^\eps\) directly from \(u_k^\eps\) by
applying the
fundamental theorem of calculus to the function \(u_\eps - u_0\) and subsequently
integrating the oscillatory
component of the integrand \(\dot{u}_\eps - \dot{u}_0\) by parts to lower the exponent of the
denominator
\(\eps^k\). After \(k\) iterations by parts and corresponding
expansions of the resulting terms, \(\bar{u}_k\)
and \([u_k]^\eps\) can then be derived such that
\begin{equation*}
    u_k^{\eps} - [u_k]^\eps \to \bar{u}_k \quad \text{in} \quad C([0,T]).
\end{equation*}
This method was used to derive the leading-order asymptotic expansion of
\(\theta_{21}^{\eps}\) in
Lemma~\ref{lemma:9}.

Another approach for the derivation of \([\bar{u}_k]^\eps\) is based on an application
of the extended
Arzel\`a--Ascoli theorem. Analogous to  Lemma~\ref{lemma:4}, one shows first that the
sequence \(\{u_k^\eps\}\)
is uniformly bounded in \(L^\infty([0,T])\). Then, by Alaoglu's
theorem~\cite[Principle 3]{Bornemann}, there
exists a subsequence \(\{\eps^\prime\}\) such that \(u_k^\eps \weak{\ast} \bar{u}_k\) in
\(L^\infty([0,T])\). To determine \(\bar{u}_k\), one chooses \([u_k]^\eps\) such that the
sequence
\(\{u_k^\eps-[u_k]^\eps\}\) is uniformly bounded in \(C^{0,1}([0,T])\). Then, according to the
extended
Arzel\`a--Ascoli theorem~\cite[Chapter~I~\textsection 1]{Bornemann}, there exists
a
subsequence such that
\begin{equation*}
    u_k^{\eps} - [u_k]^\eps \to \bar{u}_k \quad \text{in} \quad C([0,T]), \qquad \frac{d}{dt}\left( u_k^{\eps}
    - [u_k]^\eps \right)\weak{\ast} \frac{d \bar{u}_k}{dt} \quad \text{in} \quad L^\infty([0,T]),\end{equation*}
from which \(\bar{u}_k\) can be determined as the solution to a system of
differential equations. This
approach was used to derive the leading-order asymptotic expansion of
\(\phi_2^\eps\), \(y_2^\eps\),
\(p_2^\eps\) in  Lemma~\ref{lemma:8} and \(\theta_{22}^{\eps}\) in  Lemma~\ref{lemma:9}.

\begin{remark}Theorem~\ref{thm:2} provides immediately quantitative estimates on the difference
    between the original
    system \((\phi_\eps, \theta_\eps, y_\eps, p_\eps)\) and the limit system \((\phi_0,\theta_0,y_0,p_0)\) of
    order \(\mathcal{O}(\eps)\) for times up to arbitrary, but fixed \(T\). With the
    second-order asymptotic expansions
    \(\phi_0 +\eps^2 (\bar{\phi}_2+[\phi_2])\), \(\theta_0+\eps [\theta_1]+\eps^2(\bar{\theta}_2+[\theta_2])\), \(y_0+
    \eps^2 (\bar{y}_2+[y_2])\) and \(p_0+ \eps^2 (\bar{p}_2+[p_2])\) the result provides
    error estimates of order better than
    \(\mathcal{O}(\eps^2)\) over the same timescale. There are other averaging approaches which
    deal with the differential
    equation only. A formal expansion in \(\eps\) can also derive the equations
    for the averaged second-order corrections
    \(\bar{\phi}_2, \bar{\theta}_2,\bar{y}_2,\bar{p}_2\), then error estimates need to be obtained separately, e.g.,~using
    some Gronwall and integration by parts arguments as described
    in~\cite{SVM07}. The general
    restriction to finite timescales cannot be avoided unless the averaged
    correction terms vanish~\cite[Chap. 2]{SVM07}, which
    does not hold in our situation.
\end{remark}

\section{Thermodynamic interpretation}
\label{sec:4}

We now give a thermodynamic interpretation of the analytic result presented in Theorem~\ref{thm:2}. The model
problem in Section~\ref{sec:8} describes the interaction of \(r\) (in general non-ergodic) fast and \(n\) slow
degrees of freedom \((n,r\in \mathbb{N})\). A simplified model of one fast (hence, ergodic) and one slow
degree of freedom was already studied in~\cite{Klar2020}, where the authors similarly interpret a fast--slow
system of the kind presented in Section~\ref{sec:8} from a thermodynamic point of view. Since the
thermodynamic interpretation of the model studied in~\cite{Klar2020} includes arguments that are similarly
applicable to the more general model considered in this article, we will focus here on the differences and
refer the interested reader for a detailed thermodynamic discussion to~\cite{Klar2020}.

Fundamental in the theory of classical equilibrium thermodynamics is the transfer of energy in the form of work and heat in
thermodynamic processes. This energy transfer is described by the energy
relation
\begin{equation}
    \label{eq:61}
    d E = dW + dQ = \sum_{j=1}^n F^j dy^j + T dS.
\end{equation}
In more detail, let \(E\) be the energy of a generic thermodynamic system
composed of many fast particles,
such as gas particles trapped in a container with a piston. Then, the change of the system's energy \(dE\) is the sum of
external work done on the system, \(dW = \sum_{j=1}^n F^j dy^j\),
where \(F^j\) are external forces exerted on the system by infinitesimal
displacements of some external slow
variables \(dy^j\), and a change of heat, \(dQ = TdS\), where \(T\)
is the system's temperature and \(dS\) a
change of entropy. Classical statistical mechanics provides the derivation of
thermodynamic quantities such as
temperature, entropy and external forces as the slow, average macroscale
observations from the microscale
dynamics in the system.

The energy transfer within a thermodynamic system in the form of work and heat also
applies to mechanical systems
which evolve within an environment that allow for thermodynamic interactions. A
suitable thermodynamic theory
for such mechanical systems was developed by L.~Boltzmann and later refined by
G.~W.~Gibbs~\cite{Gibbs1901},
which was subsequently rederived by Hertz~\cite{Hertz1910}. We will follow
Hertz' line of thought. His
formalisation is based on fast Hamiltonian systems that are slowly perturbed by
external agents. In this
setting, his theory describes how to define temperature, entropy and external
forces such that the fundamental
thermodynamic energy relation~\eqref{eq:61} is satisfied.

Applying Hertz' thermodynamic formalism to the model problem introduced in
Section~\ref{sec:8}, we regard,
similar to~\cite{Klar2020}, the subsystem composed of the fast degrees of
freedom \(z_\eps^\lambda\)
\((\lambda=1, \ldots, r)\) as a thermodynamic system that is slowly perturbed by the
interactions with the
slow subsystem composed of \(y_\eps^j\) \((j=1,\ldots, n)\). Note that the ergodicity
assumption for
thermodynamic systems is not given for the fast subsystem. Nevertheless, one can still
derive thermodynamic properties
if one replaces time-averages with ensemble-averages (see~\cite{Berdichevsky}),
which can be derived by
averaging the trajectories not only with respect to time but also with respect
to initial values assumed to be
uniformly distributed over the energy surface. A more detailed explanation can
be found in
Appendix~\ref{app:1}.

In contrast to classical thermodynamic theory, which mainly focuses on the
thermodynamic analysis of some fast
dynamics that experiences some slow external influence, such as gas particles trapped in a container with a piston,
our focus lies in analysing
the slow dynamics that experiences some external thermodynamic effects through
its interaction with the fast
subsystem. This focus is motivated, for instance, by the conformal motion of a
molecule in a solvent.

We will focus our attention on the energy associated to the fast degrees of
freedom \(E_\eps^\perp\) and the
residual energy \(E_\eps^\parallel\), which are given by
\begin{equation}
    \label{eq:72}
    E_\eps^\perp = \frac{1}{2}\vert \dot{z}_\eps \vert^2 + \frac{1}{2}\eps^{-2}\sum_{\lambda=1}^r \omega_\lambda^2(y_\eps)
    (z_\eps^\lambda)^2, \qquad E_\eps^\parallel = E_\eps - E_\eps^\perp.
\end{equation}
The evolution of the fast degrees of freedom is governed by the energy
\(E_\eps^\perp = E_\eps^\perp(z_\eps, \dot{z}_\eps; y_\eps)\) which is subject to slowly varying external
parameters given by \(y_\eps^j\) \((j=1, \ldots, n)\). As pointed out
in~\cite{Klar2020}, this framework
allows us to interpret the model problem from a thermodynamic point of view by
applying the thermodynamic
theory of Hertz~\cite{Hertz1910}.

By applying Hertz' thermodynamic formalism to the fast subsystem, which is
governed by the energy function
\(E_\eps^\perp\), we derive in Appendix~\ref{app:1}, provided that \(\theta_\ast^\lambda\neq 0\) for at
least one
\(\lambda=1,\ldots, r\), the following expressions for the temperature \(T_\eps\), the
entropy \(S_\eps\) and the external force \(F_\eps\):
\begin{equation}
    \label{eq:32}
    T_\eps = \frac{1}{r}\sum_{\lambda=1}^r \theta_\eps^\lambda \omega_\lambda(y_\eps),\qquad
    S_\eps =\sum_{\lambda=1}^r \log\left( \sum_{\mu=1}^r \theta_\eps^\mu \frac{\omega_\mu(y_\eps)}{\omega_\lambda(y_\eps)} \right),
    \qquad F_\eps = T_\eps \sum_{\lambda = 1}^r D L_\eps^\lambda,
\end{equation}
where, according to the notation introduced in~\eqref{eq:755}, the vector
\(DL_\eps^\lambda\) represents the gradient of \(L_\eps^\lambda=\log(\omega_\lambda(y_\eps))\) with respect to
\(y_\eps\in \R^n\). The classical thermodynamic concepts of temperature and entropy
are commonly described for systems in or near thermodynamic equilibrium,
i.e.,~for an infinite separation of timescales, so in the limit \(\eps\to0\). It
is
noteworthy that we derive these expressions for finite but non-zero
\(\eps\). Note that the assumption on
\(\theta_\ast^\lambda\) ensures, that the system exhibits a genuine scale-separation into
fast and slow
dynamics. Moreover, note that the temperature is the arithmetic mean of the
frequencies
\(\omega_\lambda(y_\eps)\) \((\lambda=1,\ldots, r)\) weighted by their corresponding actions
\(\theta_\eps^\lambda\). The entropy provides a measure for the pairwise weighted frequency ratios
\(\theta_\eps^\lambda \omega_\lambda(y_\eps)/ \omega_\mu(y_\eps)\) \((\lambda,\mu = 1,\ldots, r)\), while the
external force primarily indicates the change of \(\log(\omega_\lambda(y_\eps))\) with respect to the slow
coordinates \(y_\eps\).

In combination with the second-order expansion derived in Theorem~\ref{thm:2} we can expand \(T_\eps\),
\(S_\eps\) and \(F_\eps\), and thus determine their asymptotic properties, i.e.,~\(T_\eps= T_0 +\mathcal{O}(\eps)\), \(F_\eps= F_0+\mathcal{O}(\eps)\) and
\(S_\eps=S_0 + \eps [\bar{S}_1]^\eps +\eps^2 [\bar{S}_2]^\eps + \eps^2 S_3^\eps\) with \(S_3^\eps \to 0\) in
\(C([0,T])\), where
\begin{equation}
    \label{eq:59}
    T_0 \coloneqq  \frac{1}{r}\sum_{\lambda=1}^r \theta_\ast^\lambda \omega_\lambda(y_0), \quad F_0 \coloneqq T_0 \sum_{\lambda=1}^r D L_0^\lambda, \quad S_0 \coloneqq \sum_{\lambda=1}^r \log\left( \sum_{\mu=1}^r \theta_\ast^\mu \frac{\omega_\mu(y_0)}{\omega_\lambda(y_0)} \right),\quad [\bar{S}_1]^\eps \coloneqq \frac{1}{T_0} \sum_{\lambda=1}^r [\theta_1^\lambda]^\eps \omega_\lambda(y_0),
\end{equation}
and
\begin{IEEEeqnarray*}{rCl}
    [\bar{S}_2]^\eps &\coloneqq& \frac{1}{T_0}\sum_{\lambda=1}^r \left(\bar{\theta}_2^\lambda+[\theta_2^\lambda]^\eps \right) \omega_\lambda(y_0)+\frac{1}{T_0}\sum_{\lambda=1}^r \theta_\ast^\lambda \left\langle D\omega_\lambda(y_0),\bar{y}_2+[y_2]^\eps \right\rangle\\
    && -\> \sum_{\lambda=1}^r \left\langle DL_0^\lambda, \bar{y}_2+[y_2]^\eps \right\rangle-\frac{1}{2rT_0^2}  \left(\sum_{\lambda=1}^r [\theta_1^\lambda]^\eps \omega_\lambda(y_0)\right)^2.
\end{IEEEeqnarray*}
We use the expansions derived in Section~\ref{sec:5} to analyse the energy \(E_\eps^\perp\) on different
scales. To this end, we expand \(E_\eps^\perp =\sum_{\lambda=1}^r \theta_\eps^\lambda\omega_\lambda(y_\eps)\)
and write
\(E_\eps^\perp = E_0^\perp + \eps [\bar{E}_1^\perp]^\eps + \eps^2 [\bar{E}_2^\perp]^\eps + \eps^2
E_3^{\perp\eps}\) with \(E_3^{\perp\eps} \to 0 \) in \(C([0,T])\), where
\begin{IEEEeqnarray*}{C}
    E_0^\perp \coloneqq \sum_{\lambda=1}^r \theta_\ast^\lambda \omega_\lambda(y_0),\qquad [\bar{E}_1^\perp]^\eps\coloneqq \sum_{\lambda=1}^r\left[ \theta_1^\lambda \right]^\eps \omega_\lambda(y_0),\\
    \phantom{}[\bar{E}_2^\perp]^\eps \coloneqq \sum_{\lambda=1}^r \left( \bar{\theta}_2^\lambda + [\theta_2^\lambda]^\eps  \right) \omega_\lambda(y_0) + \sum_{\lambda=1}^r \theta_\ast^\lambda \left\langle D \omega_\lambda(y_0),\bar{y}_2 + [y_2]^\eps \right\rangle.
\end{IEEEeqnarray*}

\subsection{Leading-order thermodynamics}
\label{sec:Lead-order-therm}
We now analyse the energy \(E_\eps^\perp\) in the limit \(\eps\to0\) from a thermodynamic perspective. For
\(\eps\to0\), the temperature, entropy and external force are given by the expressions \(T_0\), \(S_0\) and
\(F_0\) as in~\eqref{eq:59}.

While the temperature captures the average collective dynamics of the weighted frequencies
\(\theta_\ast^\lambda\omega_\lambda(y_0)\), the entropy depends on the dynamics of the weighted frequency
ratios \(\theta_\ast^\lambda \omega_\lambda(y_0)/\omega_\mu(y_0)\). In contrast to the simplified model
in~\cite{Klar2020}, which can be regarded as the degenerate case of one fast degree of freedom, the entropy
\(S_0\) is constant if and only if all weighted frequency ratios
\(\theta_\ast^\lambda \omega_\lambda(y_0)/\omega_\mu(y_0)\) \((\lambda, \mu = 1, \ldots, r)\) are constant,
regardless of the number of fast degrees of freedom. In this case, the motion of
the fast degrees of freedom
can be described as a quasi-periodic motion. Thus, the entropy can be considered
as an indicator of the
homogeneity of the frequencies with respect to \(y_0\) and therefore serves
as a measure of chaos for the fast
subsystem. In the case of one fast degree of freedom as in~\cite{Klar2020}, the
weighted frequency ratio is
naturally constant and hence the entropy remains constant. Therefore, we can regard
-- in reference to
classical thermodynamic theory -- the leading-order dynamics of the fast
subsystem in the case of a constant
entropy as an adiabatic thermodynamic process and non-constant entropy as a
non-adiabatic thermodynamic
process. We remark that we make this thermodynamic interpretation despite the
fact that the fast subsystem is
non-ergodic.

Finally, by expressing the leading-order energy of the fast subsystem
\(E_0^\perp =\sum_{\lambda=1}^r \theta_\ast^\lambda\omega_\lambda(y_0)\) as a function of \(S_0\) and \(y_0\),
it can be written as
\begin{equation*}
    E_0^\perp(S_0, y_0)=e^{S_0/r}\prod_{\lambda=1}^r \omega_\lambda^{1/r}(y_0).
\end{equation*}
As a consequence, the differential is given by
\begin{equation}
    \label{eq:first-order-thermo}
    d E_0^\perp = \sum_{j=1} ^n F_0^j dy_0^j + T_0 dS_0,
\end{equation}
which coincides with the fundamental thermodynamic energy relation in~\eqref{eq:61}.

\subsection{Second-order thermodynamics}
\label{sec:Second-order-therm}
In contrast to the \(\eps\)-independent thermodynamic expressions to
leading-order discussed in
Section~\ref{sec:Lead-order-therm}, the asymptotic expansion terms to higher-order are
\(\eps\)-dependent. In particular, they contain terms that rapidly oscillate
around zero, and terms that yield
the average motion of the higher-order asymptotic expansions. As the
thermodynamic theory aims to describe
many-particle systems by their average dynamics, we analyse, similar
to~\cite{Klar2020}, the average
dynamics of the higher-order asymptotic expansion in \(E_\eps^\perp\) and
\(S_\eps\) by studying the
weak\(^\ast\) limit of \( [\bar{E}_1^\perp]^\eps\), \([\bar{E}_2^\perp]^\eps\), \([\bar{S}_1]^\eps\) and
\([\bar{S}_2]^\eps\), i.e.,
\begin{IEEEeqnarray*}{rClCl+rClCl}
    [\bar{E}_1^\perp]^\eps &\weak{\ast}& 0 \quad &\text{in}& \quad L^\infty([0,T]),& [\bar{E}_2^\perp]^\eps &\weak{\ast}& \bar{E}_2^\perp \quad&\text{in}& \quad L^\infty([0,T]),\\
    \phantom{}[\bar{S}_1]^\eps &\weak{\ast}& 0 \quad &\text{in}& \quad L^\infty([0,T]),& [\bar{S}_2]^\eps &\weak{\ast}& \bar{S}_2 \quad &\text{in}& \quad L^\infty([0,T]),
\end{IEEEeqnarray*}
where
\begin{equation*}
    \bar{E}_2^\perp \coloneqq \sum_{\lambda=1}^r  \bar{\theta}_2^\lambda  \omega_\lambda(y_0) + \sum_{\lambda=1}^r \theta_\ast^\lambda \left\langle D \omega_\lambda(y_0),\bar{y}_2  \right\rangle
\end{equation*}
and
\begin{IEEEeqnarray*}{rCl}
    \bar{S}_2&\coloneqq&\frac{1}{T_0}\sum_{\lambda=1}^r \bar{\theta}_2^\lambda \omega_\lambda(y_0)+\frac{1}{T_0}\sum_{\lambda=1}^r \theta_\ast^\lambda \left\langle D\omega_\lambda(y_0),\bar{y}_2 \right\rangle-  \sum_{\lambda=1}^r \left\langle DL_0^\lambda, \bar{y}_2 \right\rangle -\frac{1}{16rT_0^2} \sum_{\lambda=1}^r \left( \theta_\ast^\lambda\cdot D_t L_0^\lambda \right)^2\\
    \IEEEyesnumber \label{eq:60}
    &=& \frac{\bar{E}_2^\perp}{T_0}-  \sum_{\lambda=1}^r \left\langle DL_0^\lambda, \bar{y}_2 \right\rangle -\frac{1}{16rT_0^2} \sum_{\lambda=1}^r \left( \theta_\ast^\lambda\cdot D_t L_0^\lambda \right)^2.
\end{IEEEeqnarray*}
Note that the expression of the entropy is in this case not constant.
Intuitively, this follows from the
second-order asymptotic expansion of the slow degrees of freedom \(y_\eps^j\)
\((j=1,\ldots,n)\). These
exhibit according to Theorem~\ref{thm:2} a decomposition into slowly varying
components \(\bar{y}_2^j\) and
rapidly varying components \([y_2^j]^\eps\). The existence of this decomposition
gives rise to a non-constant
entropy discussed in more detail in~\cite{Klar2020}. Moreover, we notice that
the last term in \(\bar{S}_2\)
originates from \([\bar{S}_1]^\eps\), the rapidly oscillating first-order correction of
\(S_0\).

Finally, after rearranging~\eqref{eq:60}, we derive for
\(\bar{E}_2^\perp = \bar{E}_2^\perp(\bar{S}_2,\bar{y}_2; y_0, p_0)\) the expression
\begin{equation*}
    \bar{E}_2^\perp \coloneqq \left\langle F_0, \bar{y}_2 \right\rangle + T_0 \bar{S}_2 +  \frac{1}{16rT_0} \sum_{\lambda=1}^r \left( \theta_\ast^\lambda\cdot D_t L_0^\lambda \right)^2.
\end{equation*}
Now, by analysing the differential of \(\bar{E}_2^\perp\) for fixed \((y_0, p_0)\), which
is given by
\begin{equation*}
    d \bar{E}^\perp_2 = \sum_{j=1}^n F_0^j d \bar{y}_2^j + T_0 d \bar{S}_2,\end{equation*}
we find, similar to~\eqref{eq:first-order-thermo}, a remarkable resemblance to
the fundamental
thermodynamic relation as presented in~\eqref{eq:61}.

\subsection{Analysis of the total energy}

Finally, we inspect how the thermodynamic energy transfer in form of work and
heat is realised in the
second-order asymptotic expansion of the total energy \(E_\eps\). Recalling
the analysis above, we split the
total energy \(E_\eps\) into \(E_\eps^\perp\) and \(E_\eps^\parallel\) (compare
with~\eqref{eq:72}), i.e.,~\(E_\eps=E_\eps^\perp+E_\eps^\parallel\), where
\begin{equation*}
    E^\parallel_\eps \coloneqq \frac{1}{2} \vert p_\eps\vert^2 +V(y_\eps) +\frac{\eps}{2}\sum_{\lambda=1}^r \theta_\eps^\lambda \left\langle p_\eps, DL_\eps^\lambda \right\rangle \sin(2\eps^{-1}\phi_\eps^\lambda)+\frac{\eps^2}{8}\sum_{\lambda=1}^r\sum_{\mu=1}^r\theta_\eps^\lambda\theta_\eps^\mu\left\langle DL_\eps^\lambda, DL_\eps^\mu \right\rangle \sin(2\eps^{-1}\phi_\eps^\lambda)\sin(2\eps^{-1}\phi_\eps^\mu).
\end{equation*}
Similar to before, we use the expressions derived in Theorem~\ref{thm:2} to expand the
energy
\(E_\eps^\parallel\), i.e.,~\(E_\eps^\parallel = E_0^\parallel + \eps [\bar{E}_1^\parallel]^\eps + \eps^2 [\bar{E}_2^\parallel]^\eps +
\eps^2 E_3^{\parallel\eps}\) with \(E_3^{\parallel\eps} \to 0 \) in \(C([0,T])\), where
\begin{equation*}
    E_0^\parallel \coloneqq \frac{1}{2}\vert p_0\vert^2 +V(y_0),\qquad [\bar{E}_1^\parallel]^\eps
    \coloneqq \frac{1}{2}\sum_{\lambda=1}^r \theta_\ast^\lambda D_t L_0^\lambda\sin(2\eps^{-1}\phi_0^\lambda)
\end{equation*}
and
\begin{IEEEeqnarray*}{rCl}
    [\bar{E}_2^\parallel]^\eps &\coloneqq& \left\langle p_0, \bar{p}_2+[p_2]^\eps \right\rangle +\left\langle DV(y_0), \bar{y}_2 + [y_2]^\eps \right\rangle +\frac{1}{2}\sum_{\lambda=1}^r [\theta_1^\lambda]^\eps D_tL_0^\lambda \sin(2\eps^{-1}\phi_0^\lambda) \\
    &&+\>\sum_{\lambda=1}^r\theta_\eps^\lambda D_tL_0^\lambda\cos(2\eps^{-1}\phi_0^\lambda)(\bar{\phi}_2+[\phi_2]^\eps)+\frac{1}{8}\sum_{\lambda=1}^r\sum_{\mu=1}^r \theta_\ast^\lambda\theta_\ast^\mu \left\langle DL_0^\mu, DL_0^\lambda \right\rangle \sin(2\eps^{-1}\phi_0^\lambda)\sin(2\eps^{-1}\phi_0^\mu).
\end{IEEEeqnarray*}
To determine the average energy correction at first- and second-order, we take
the weak\(^\ast\) limit and
derive \(E_1^{\parallel\eps}\weak{\ast} 0\) in \(L^\infty([0,T])\) and
\begin{equation*}
    [\bar{E}_2^\parallel]^\eps \weak{\ast} \bar{E}_2^\parallel \coloneqq \left\langle p_0,\bar{p}_2 \right\rangle +\left\langle DV(y_0), \bar{y}_2 \right\rangle -\sum_{\lambda=1}^r\frac{\theta_\ast^\lambda (D_t L_0^\lambda)^2}{4\omega_\lambda(y_0)} + \sum_{\lambda=1}^r \frac{(\theta_\ast^\lambda)^2 \vert DL_0^\lambda\vert^2}{16} \quad \text{in} \quad L^\infty([0,T]).
\end{equation*}

The following theorem shows how the Hamiltonian character of the problem and the
thermodynamic interpretation
materialise for the averaged second-order energy correction
\(\bar{E}_2 = \bar{E}_2^\parallel + \bar{E}_2^\perp\).
\begin{theorem}
    \label{thm:3}
    Let \((y_0, p_0)\) be as in~\eqref{eq:13} and \((\bar{y}_2, \bar{p}_2)\) be as in Theorem~\ref{thm:2}. Let
    \(\bar{E}_2\) be the averaged second-order energy correction
    \(\bar{E}_2= \bar{E}_2^\parallel + \bar{E}_2^\perp\), where
    \begin{equation*}
        \bar{E}_2^\parallel(\bar{y}_2,\bar{p}_2; y_0, p_0) = \left\langle p_0, \bar{p}_2 \right\rangle +\left\langle DV(y_0),\bar{y}_2
        \right\rangle -\sum_{\lambda=1}^r \frac{\theta_\ast^\lambda \left\langle p_0,
            D\omega_\lambda(y_0)\right\rangle^2}{4\omega^3_\lambda(y_0)}+\sum_{\lambda=1}^r \frac{(\theta_\ast^\lambda)^2 \left\vert
            D\omega_\lambda(y_0)\right\vert^2}{16\omega_\lambda^2(y_0)}
    \end{equation*}
    and
    \begin{equation*}
        \bar{E}_2^\perp(\bar{y}_2;y_0, p_0) = \sum_{\lambda=1}^r \bar{\theta}_2^\lambda(y_0, p_0) \omega_\lambda(y_0)+\sum_{\lambda=1}^r
        \theta_\ast^\lambda \left\langle D\omega_\lambda(y_0), \bar{y}_2 \right\rangle,
    \end{equation*}
    with
    \begin{equation*}
        \bar{\theta}_2^\lambda(y_0, p_0)=\frac{\theta_\ast^\lambda \left\langle p_0, D\omega_\lambda(y_0) \right\rangle^2 }{8\omega^4_\lambda(y_0)}+C_{\bar{\theta}^\lambda_2}, \qquad     C_{\bar{\theta}_2^\lambda} = -\frac{\theta_\ast^\lambda \left\langle p_\ast, D\omega_\lambda(y_\ast) \right\rangle^2 }{8\omega^4_\lambda(y_\ast)} -[\theta_2^\lambda]^\eps(0).
    \end{equation*}
    Then the differential equations~\eqref{eq:27c} and~\eqref{eq:27d} take the form
    \begin{equation}
        \label{eq:63}
        \frac{d \bar{y}_2^j}{dt} = \frac{\partial \bar{E}_2}{\partial p_0^j},\qquad \frac{d \bar{p}_2^j}{dt} = -\frac{\partial \bar{E}_2}{\partial y_0^j},
    \end{equation}
    for \(j=1,\ldots, n\). Moreover, with the functions \(T_0\), \(\bar{S}_2\) and \(F_0\) given
    in~\eqref{eq:59} and~\eqref{eq:60}, which can be interpreted as the temperature, entropy and external
    force in the fast subsystem, the energy \(\bar{E}_2^\perp\) can be written as
    \begin{equation*}
        \bar{E}_2^\perp(\bar{S}_2, \bar{y}_2; y_0, p_0) = \left\langle F_0(y_0), \bar{y}_2 \right\rangle +T_0(y_0) \bar{S}_2 + \frac{1}{16rT_0(y_0)} \sum_{\lambda=1}^r \frac{(\theta_\ast^\lambda)^2 \left\langle p_0, D \omega_\lambda(y_0) \right\rangle^2 }{\omega_\lambda^2(y_0)}.
    \end{equation*}
    With this notation, the energy \(\bar{E}_2^\perp\) satisfies the constituent equations
    \begin{equation}
        \label{eq:73}
        T_0 = \frac{\partial \bar{E}^\perp_2}{\partial \bar{S}_2},\qquad F_0^j = \frac{\partial \bar{E}^\perp_2}{\partial \bar{y}_2^j}.
    \end{equation}
\end{theorem}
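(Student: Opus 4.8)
The plan is to verify the three assertions of the theorem --- the Hamiltonian form of the evolution equations, the thermodynamic representation of $\bar{E}_2^\perp$, and the constituent relations --- by direct computation, after first recording a closed form for $\bar{\theta}_2^\lambda$. First I would integrate the differential equation~\eqref{eq:27b}: since $D_t L_0^\lambda = \langle \dot{y}_0, DL_0^\lambda\rangle = \langle p_0, D\omega_\lambda(y_0)\rangle/\omega_\lambda(y_0)$ (using $\dot{y}_0 = p_0$ and $L_0^\lambda = \log\omega_\lambda(y_0)$), integration gives $\bar{\theta}_2^\lambda(t) = \tfrac{\theta_\ast^\lambda\langle p_0, D\omega_\lambda(y_0)\rangle^2}{8\omega_\lambda^4(y_0)} + C_{\bar{\theta}_2^\lambda}$, and the integration constant is fixed by the initial value~\eqref{eq:30}, $\bar{\theta}_2^\lambda(0) = -[\theta_2^\lambda]^\eps(0)$, together with the fact that $[\theta_2^\lambda]^\eps(0)$ is $\eps$-independent because $\phi_0^\lambda(0)=0$. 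This produces exactly the formula for $C_{\bar{\theta}_2^\lambda}$ in the statement and shows that the expressions for $\bar{E}_2^\parallel$ and $\bar{E}_2^\perp$ given here coincide with those derived in Section~\ref{sec:4}.

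Next, for~\eqref{eq:63} I would regard $\bar{E}_2 = \bar{E}_2^\parallel + \bar{E}_2^\perp$ as an explicit smooth function of the four arguments $(\bar{y}_2, \bar{p}_2; y_0, p_0)$, combine the two $\langle p_0, D\omega_\lambda(y_0)\rangle^2/\omega_\lambda^3(y_0)$ contributions (coefficient $-\tfrac14$ from $\bar{E}_2^\parallel$, $+\tfrac18$ from $\bar{\theta}_2^\lambda\omega_\lambda(y_0)$, total $-\tfrac18$), and compute $\partial\bar{E}_2/\partial p_0^j$ and $-\partial\bar{E}_2/\partial y_0^j$ with the other three arguments held fixed. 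The $p_0$-derivative yields $\bar{p}_2^j - \sum_\lambda \tfrac{\theta_\ast^\lambda\, D_t L_0^\lambda\, D_j L_0^\lambda}{4\omega_\lambda(y_0)}$, which is precisely the right-hand side of~\eqref{eq:27c}. For the $y_0$-derivative I would differentiate term by term using $\partial_j DV = DD_j V$, $\partial_j D\omega_\lambda = DD_j\omega_\lambda$, the quotient rule, and the identities $D_t D_j L_0^\lambda = \tfrac{\langle p_0, DD_j\omega_\lambda\rangle}{\omega_\lambda(y_0)} - \tfrac{\partial_j\omega_\lambda\langle p_0, D\omega_\lambda\rangle}{\omega_\lambda^2(y_0)}$ and $\langle DL_0^\lambda, DD_j L_0^\lambda\rangle = \tfrac{\langle D\omega_\lambda, DD_j\omega_\lambda\rangle}{\omega_\lambda^2(y_0)} - \tfrac{\partial_j\omega_\lambda\,|D\omega_\lambda|^2}{\omega_\lambda^3(y_0)}$, then collect terms according to their dependence on $\theta_\ast^\lambda$, $(\theta_\ast^\lambda)^2$, and $C_{\bar{\theta}_2^\lambda}$; each group matches the corresponding group in~\eqref{eq:27d}. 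The main obstacle is exactly this bookkeeping: the cancellations rely on the precise numerical coefficients ($\tfrac14,\tfrac18,\tfrac38,\tfrac1{16}$), and one must be disciplined about which quantities are held fixed, since the $\langle p_0, D\omega_\lambda(y_0)\rangle^2$ terms feed both $\partial_{p_0}$ and $\partial_{y_0}$.

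Finally, the thermodynamic representation follows by algebra: solving~\eqref{eq:60} for $\bar{E}_2^\perp$ and substituting $F_0 = T_0\sum_\lambda DL_0^\lambda$ from~\eqref{eq:59} together with $(\theta_\ast^\lambda D_t L_0^\lambda)^2 = (\theta_\ast^\lambda)^2\langle p_0, D\omega_\lambda(y_0)\rangle^2/\omega_\lambda^2(y_0)$ gives $\bar{E}_2^\perp = \langle F_0, \bar{y}_2\rangle + T_0\bar{S}_2 + \tfrac{1}{16rT_0}\sum_\lambda (\theta_\ast^\lambda)^2\langle p_0, D\omega_\lambda(y_0)\rangle^2/\omega_\lambda^2(y_0)$, the stated formula; I would also note in passing that the coefficient $\tfrac{1}{16rT_0^2}$ in $\bar{S}_2$ arises from the weak$^\ast$ limit of $([\bar{S}_1]^\eps)^2$, where the non-resonance Assumptions~\ref{ass:1}--\ref{ass:2} kill the cross terms and $\sin^2 \rightharpoonup \tfrac12$. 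Since $T_0 = T_0(y_0)$, $F_0 = F_0(y_0)$, and the last summand is independent of $\bar{S}_2$ and $\bar{y}_2$, differentiating this representation with the remaining arguments held fixed immediately yields $\partial\bar{E}_2^\perp/\partial\bar{S}_2 = T_0$ and $\partial\bar{E}_2^\perp/\partial\bar{y}_2^j = F_0^j$, which is~\eqref{eq:73}.
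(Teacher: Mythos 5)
Your proposal is correct and follows essentially the same route as the paper, whose own proof is just the two-sentence assertion that~\eqref{eq:63} follows directly from~\eqref{eq:27c}--\eqref{eq:27d} and that~\eqref{eq:73} follows from~\eqref{eq:61}; you simply carry out the computation the paper leaves implicit (integrating~\eqref{eq:27b} with initial value~\eqref{eq:30}, matching the partial derivatives of \(\bar{E}_2\) against~\eqref{eq:27c}--\eqref{eq:27d}, and rearranging~\eqref{eq:60}), and your coefficients and identities all check out.
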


\begin{proof}
    The evolution equations~\eqref{eq:63} follow directly from~\eqref{eq:27c} and~\eqref{eq:27d}. The
    constituent equations~\eqref{eq:73} follow from~\eqref{eq:61}.
\end{proof}

\begin{remark}
    With \(E_\eps =E_\ast= E_0\), according to~\eqref{eq:4}, the expansion of the energy
    \(E_\eps = E_0 + \eps [\bar{E}_1]^\eps + \eps^2 [\bar{E}_2]^\eps + \eps^2 E_3^\eps\) implies that
    \([\bar{E}_1]^\eps =[\bar{E}_2]^\eps = E_3^\eps\equiv 0 \) and thus \(\bar{E}_2\equiv 0\). As a consequence,
    the averaged energy function \(\bar{E}_2\) acts as a constraint on the system and \(\eps^2\) can be regarded
    as a Lagrange multiplier. Note that the evolution equations~\eqref{eq:63} resemble
    Hamilton's canonical
    equations.
\end{remark}

\section{Simulations}
\label{sec:3}
Fast--slow Hamiltonian systems model, for example, the evolution of molecular
systems, where the slow degrees
of freedom represent the conformal motion of a molecule and the fast degrees of
freedom represent the
molecular vibrations. A crucial component in the fast--slow Hamiltonian system
with the Lagrangian of
Section~\ref{sec:8} is the scale parameter \(\eps\). It often represents a fixed
parameter determined by the
problem in terms of the ratio of the typical timescales of the fast (here
\(z_\eps\)) and slow (here
\(y_\eps\)) degrees of freedom.

In the analysis of molecular systems, one is often primarily interested in the
slow conformal motion of
molecules. As such, a small scale parameter \(\eps\) causes costly overhead
in the numerical derivation of
\(y_\eps\) from~\eqref{eq:2}, since the step size has to be chosen
sufficiently small to account for the fast,
oscillatory motion of \(z_\eps\). Theorem~\ref{thm:1} provides a possible solution
to this problem by deriving
the homogenised system~\eqref{eq:2}.

The homogenised system describes the evolution of the slow degrees of freedom
\(y_0\) only, which can be used
to approximate the evolution of \(y_\eps\). The approximation of \(y_\eps\)
by \(y_0\) comes, however, with a
trade-off. On the one hand, one can choose a larger step size for the
computation of \(y_0\)
from~\eqref{eq:62} than for that of \(y_\eps\) from~\eqref{eq:2}. This
significantly reduces the computational
cost of the numerical integration. On the other hand, approximating \(y_\eps\)
by \(y_0\) introduces an
approximation error which depends on the scale parameter \(\eps\), namely
\(\left\Vert y_\eps - y_0 \right\Vert_{L^\infty([0,T], \R^n)} = \mathcal{O}(\eps^2)\).  Therefore, we extend
in this article the leading-order asymptotic expansion and derive in Theorem~\ref{thm:2} the second-order
correction \([\bar{y}_2]^\eps\) to \(y_0\) such that
\(\left\Vert  \eps^{-2}(y_\eps - y_0) -[\bar{y}_2]^\eps \right\Vert _{L^\infty([0,T], \R^n)}\to 0\) as
\(\eps\to0\). Here, \([\bar{y}_2]^\eps\) takes the form \([\bar{y}_2]^\eps=\bar{y}_2 + [y_2]^\eps\), where
\(\bar{y}_2\) traces the average motion of the second-order correction and can be
derived as the solution to a
slow system of differential equations~\eqref{eq:63} and \([y_2]^\eps \) is the
explicitly given rapidly
oscillating term of the second-order correction.

We compare the global error of approximating \(y_\eps\) by \(y_0\) and by
\(y_0 + \eps^2 [\bar{y}_2]^\eps\)
both on a short and a long time interval, and the associated computation times
for a specific fast--slow
Hamiltonian system described in the next paragraph. The key finding is that the
computation of \(y_0\) and
\([\bar{y}_2]^\eps\), which can be done in parallel, is up to two orders of magnitude
faster than the
computation of \(y_\eps\) of similar accuracy. Moreover, the total computation
time for \(y_0\) and
\(y_0 + \eps^2 [\bar{y}_2]^\eps\) is practically identical, while the global error
\(\Vert y_\eps-y_0-\eps^2 [\bar{y}_2]^\eps \Vert_{L^\infty([0,T], \R^n)}\) is significantly smaller than the
global error \(\Vert y_\eps - y_0\Vert_{L^\infty([0,T], \R^n)}\) on short as well as on long time intervals.

\paragraph*{The test model.}
We consider a fast--slow Hamiltonian system as described in Section~\ref{sec:8},
defined on the Euclidean
configuration space \(M=\R^4\). The test model describes the evolution of two
fast and two slow degrees of
freedom such that \(x = (y,z)\in \R^2 \times \R^2 = \R^4\). Their dynamics is governed by the Lagrangian as
described in~\eqref{eq:67}, with
\begin{equation*}
    V(y_\eps) = \tfrac{1}{2} (y_\eps^1)^4 + \tfrac{1}{2} (y_\eps^2)^4, \qquad \omega_1(y_\eps) = 4 + (y_\eps^1y_\eps^2)^2,\qquad \omega_2(y_\eps) = 2 + \sin(y_\eps^1)
\end{equation*}
and initial values
\begin{equation*}
    y_\eps(0) = (1,-0.5), \qquad \dot{y}_\eps(0) = (1,1.2),\qquad z_\eps(0) = (0,0), \qquad \dot{z}_\eps(0) = (3,2).
\end{equation*}
We simulate the full solution \(y_\eps=(y_\eps^1, y_\eps^2)\) and homogenised approximations, in
particular
the second-order approximation based on Theorems~\ref{thm:1} and~\ref{thm:2}. Specifically, we
compare the
second-order asymptotic expansion \(y_0^1+ \eps^2 [\bar{y}_2^1]^\eps\) with the full trajectory of
\(y_\eps^1\) for short and long time intervals. Here, the superscript
\(1\) denotes the index of the first
component of \(y_\eps\) and indicates the first slow degree of freedom in the
system. A similar comparison of the
second slow degree of freedom, \(y_\eps^2\) and \(y_0^2+ \eps^2 [\bar{y}_2^2]^\eps\), is analogous.

Figure~\ref{fig:13} displays the trajectory of \(y_\eps^1 - y_0^1 - \eps^2 [\bar{y}_2^1]^\eps\) superimposed
on \(y_\eps^1 - y_0^1\), for a long time interval with final time \(T=\eps^{-2}\), where
\(\eps=0.5^3\). It is
evident that the second-order error \(y_\eps^1 - y_0^1 - \eps^2 [\bar{y}_2^1]^\eps\) is significantly smaller
throughout the entire time interval than the leading-order error \(y_\eps^1 - y_0^1\).
This becomes even
clearer in Figure~\ref{fig:14}. There, we observe that the leading-order error grows
faster than the
second-order error, illustrating the increased importance of the second-order
correction
\([\bar{y}_2^1]^\eps\) with time.

\begin{figure}[ht!]
    \captionsetup[subfigure]{width=0.92\textwidth, format=hang, justification=centering}
    \centering
    \begin{subfigure}{.5\textwidth}
        \centering
        \includegraphics[scale=0.42]{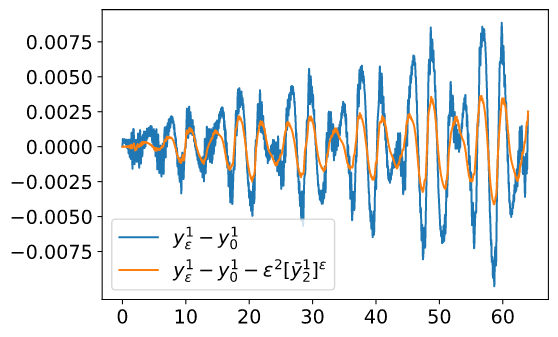}
        \caption{\(y_\eps^1(t) - y_0^1(t)\) and \(y_\eps^1(t) - y_0^1(t) - \eps^2 [\bar{y}_2^1]^\eps(t)\) \\ for
        \(t \in[0,\eps^{-2}]\)}
        \label{fig:13}
    \end{subfigure}%
    \begin{subfigure}{.50\textwidth}
        \centering
        \includegraphics[scale=0.42]{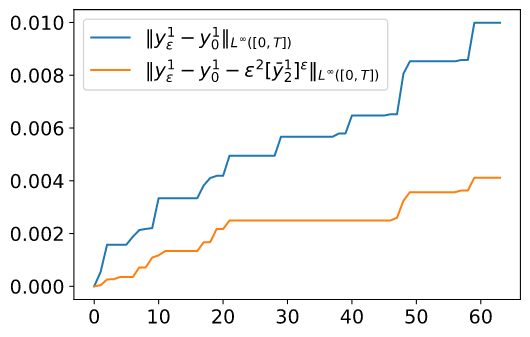}
        \caption{\( \left\Vert y_\eps^1 - y_0^1 \right\Vert_{L^\infty([0,T])}\) and
        \(\left\Vert y_\eps^1 - y_0^1 - \eps^2 [\bar{y}_2^1]^\eps \right\Vert_{L^\infty([0,T])}\) for
        \(T\in [0,\eps^{-2}]\)}
        \label{fig:14}
    \end{subfigure}
    \caption{Comparison of the full dynamics \(y_\eps^1\) as the solution to~\eqref{eq:2}, \(y_0^1\) as the
        solution to the homogenised limit equation~\eqref{eq:62} and \([\bar{y}_2^1]\) as second-order
        approximation, where \(\bar{y}_2^1\) is derived from~\eqref{eq:63}. The parameter choice is \(\eps=0.5^3\).}
    \label{fig:10}
\end{figure}
The reason why an approximation of \(y_\eps^1\) by \(y_0^1\) performs worse
than an approximation by
\(y_0^1 + \eps^2 [\bar{y}_2^1]^\eps\) on long time intervals is that \(y_\eps^1\) is highly oscillatory
at
higher-orders, which is not captured by \(y_0^1\). This difference becomes
evident only to
higher-order. Figure~\ref{fig:15} illustrates the oscillatory behaviour of
\(y_\eps^1\) to second-order. Here,
we superimpose the second-order correction \([\bar{y}_2^1]^\eps = \bar{y}_2^1 + [y_2^1]^\eps\) on top of
\(y_2^{1\eps}= \eps^{-2}(y_\eps^1 - y_0^1)\) to visualise the oscillatory dynamics to higher-order and
illustrate the approximation quality of \([\bar{y}_2^1]^\eps\) for short time intervals.
For this purpose, we
integrate system~\eqref{eq:2},~\eqref{eq:62} and~\eqref{eq:63} for the test model with
\(\eps=0.5^5\), on a
short time interval \(t\in [0,1]\). The trajectories of \(y_2^{1\eps}\) and
\([\bar{y}_2^1]^\eps\) are almost
indistinguishable; the difference becomes visible only at third-order, as shown
in Figure~\ref{fig:16}.

\begin{figure}[ht!]
    \captionsetup[subfigure]{width=0.9\textwidth, justification=centering, format=hang}
    \centering
    \begin{subfigure}{.5\textwidth}
        \centering
        \includegraphics[scale=0.42]{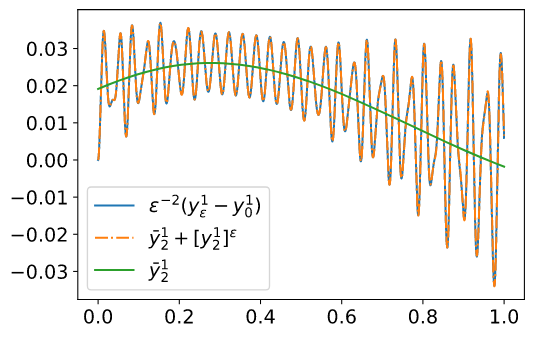}
        \caption{\(\eps^{-2}(y_\eps^1(t) - y_0^1(t))\), \(\bar{y}_2^1(t) + [y_2^1]^\eps(t)\) and
        \(\bar{y}_2^1(t)\) \\ for \(t\in [0,1]\)}
        \label{fig:15}
    \end{subfigure}%
    \begin{subfigure}{.5\textwidth}
        \centering
        \includegraphics[scale=0.42]{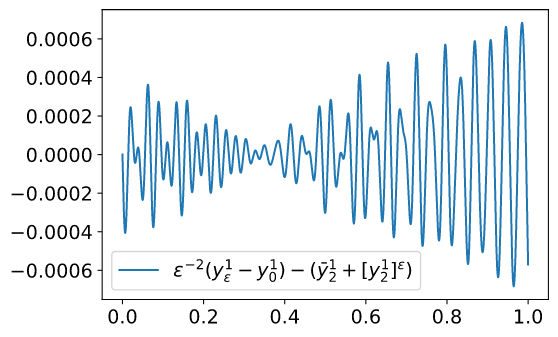}
        \caption{\(\eps^{-2}(y_\eps^1(t) - y_0^1(t)) - (\bar{y}_2^1(t) + [y_2^1]^\eps(t))\) \\for
        \(t\in [0,1]\)}
        \label{fig:16}
    \end{subfigure}
    \caption{Comparison of \(\eps^{-2}(y_\eps^1(t) - y_0^1(t))\) and \(\bar{y}_2^1(t) + [y_2^1]^\eps(t)\) on a
    short time interval \(t\in [0,1]\), with \(y_\eps^1\), \(y_0^1\) and \(\bar{y}_2^1\) as the solutions
    of~\eqref{eq:2},~\eqref{eq:62} and~\eqref{eq:63}, with \(\eps=0.5^5\). The function
    \([y_2^1]^\eps\) is given explicitly in Definition~\ref{def:1}.}
    \label{fig:11}
\end{figure}

Although the error \(y_\eps^1 - y_0^1 - \eps^2[\bar{y}_2^1]^\eps\) looks very accurate on short time
intervals, the accuracy decreases, as Figure~\ref{fig:16} suggest, for long time
intervals. Figure~\ref{fig:12} reveals how this error increases for long time
intervals. Here, we
integrated~\eqref{eq:2},~\eqref{eq:62} and~\eqref{eq:63} for \(\eps=0.5^3\), where \(T=\eps^{-2}\).

\begin{figure}[ht!]
    \centering
    \includegraphics[scale=0.42]{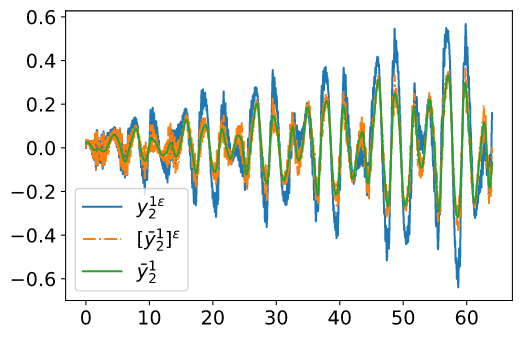}
    \caption{The second-order correction \([\bar{y}_2^1]^\eps(t) = \bar{y}_2^1(t) + [y_2^1]^\eps(t)\) and its
    average motion \(\bar{y}_2^1(t)\) superimposed on \(y_2^{1\eps}(t)=\eps^{-2}(y_\eps^1(t) - y_0^1(t))\) for
    \(t\in [0,\eps^{-2}]\) where \(\eps=0.5^3\).}
    \label{fig:12}
\end{figure}

\subsection{Comparison of the execution time}
\label{sec:2}
As mentioned earlier, the approximation of \(y_\eps^1\) by the homogenisation
limit \(y_0^1\) comes with a
trade-off. The simulation of \(y_0^1\) is faster than that of \(y_\eps^1\)
but introduces an approximation
error of order \(\mathcal{O}(\eps^2)\). This error can be reduced by approximating
\(y_\eps^1\) by
\(y_0^1 +\eps^2[\bar{y}_2^1]^\eps\), i.e.,~the second-order asymptotic expansion derived in Theorem~\ref{thm:2}. The leading-order and second-order errors are discussed in the
previous section. We now
discuss the computational costs of the simulations in this section.

Before comparing the total runtime for deriving \(y_\eps^1\), \(y_0^1\) and
\(y_0^1 + \eps^2[\bar{y}_2^1]^\eps\), we note that in \([\bar{y}_2^1]^\eps= \bar{y}_2^1 + [y_2^1]^\eps\), the
function \(\bar{y}_2^1\) traces the slow, average motion of the second-order
correction term and is given as
the solution to~\eqref{eq:63}, while \([y_2^1]^\eps\) is the explicitly given
rapidly oscillating term of the
second-order correction. Moreover, we point out that the derivation of
\(y_0^1\) and \(\bar{y}_2^1\) can be
carried out in parallel. As such, there is little additional simulational
overhead in computing the
second-order asymptotic expansion \([\bar{y}_2^1]^\eps\).

To analyse the execution time for simulating \(y_\eps^1\), \(y_0^1\) and
\(y_0^1 + \eps^2[\bar{y}_2^1]^\eps\),
we determine the maximal step sizes such that certain convergence properties are
still satisfied. More
precisely, we determine the maximal step size \(dt_{0, y_\eps^1}^{\max}\) to compute
\(y_\eps^1\) and
\(dt_{0, y_0^1}^{\max}\) to compute \(y_0^1\) such that
\begin{equation}
    \label{eq:66}
    \left\Vert y_\eps^1 - y_0^1 \right\Vert_{L^\infty([0,T])} = \mathcal{O}(\eps^2)
\end{equation}
and the maximal step size \(dt_{2, y_\eps^1}^{\max}\) to compute \(y_\eps^1\) and
\(dt_{2, y_0^1,\bar{y}_2^1}^{\max}\) to compute \(y_0^1\) and \(\bar{y}_2^1\) such that
\begin{equation}
    \label{eq:65}
    \left\Vert y_\eps^1 - y_0^1 - \eps^2 [\bar{y}^1_2]^\eps\right\Vert_{L^\infty([0,T])} = \mathcal{O}(\eps^3).
\end{equation}
To determine, for instance, \(dt_{2, y_\eps^1}^{\max}\), we fix \(\eps\) and derive
\(y_0^1\) and
\(\bar{y}_2^1\) with a small but fixed step size \(dt_{2,y_0^1, \bar{y}_2^1}\), and solve
system~\eqref{eq:2}
for increasingly larger \(dt_{2,y_\eps^1}\). This process results in an error plot as
shown in Figure~\ref{fig:4}. The error is constant for small step sizes \(dt_{2,y_\eps^1}\) and
increases after
crossing an \(\eps\)-dependent threshold value. This value expresses the
maximal step size
\(dt_{2, y_\eps^1}^{\max}\) which still ensures that property~\eqref{eq:65} holds. A similar
procedure was
applied to determine the step size \(dt_{0, y_\eps^1}^{\max}\) such that
property~\eqref{eq:66} holds, and
conversely to determine \(dt_{0, y_0^1}^{\max}\) and \(dt_{2, y_0^1,\bar{y}_2^1}^{\max}\).

\begin{figure}[ht!]
    \centering
    \includegraphics[scale=0.42]{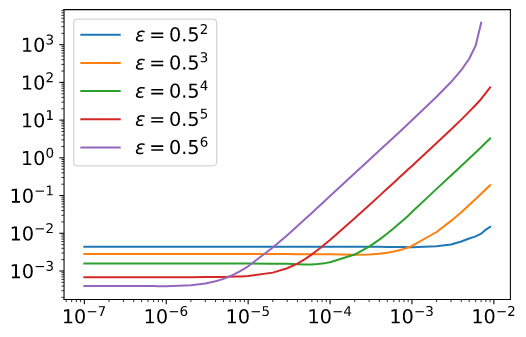}
    \caption{Graphs of \(\left\Vert y_2^{1\eps} - [\bar{y}_2^1]^\eps \right\Vert_{L^\infty([0,T])}\) versus
    step-size \(dt_{2, y_\eps^1}\) for different values of \(\eps\). The start of an upwards slope indicates
    the maximal step-size.}
    \label{fig:4}
\end{figure}

With this procedure, we find for \(\eps=0.5^k\) \((k=2,\ldots, 7)\) the maximal step
size such that the
properties~\eqref{eq:66} and~\eqref{eq:65} are still satisfied. That is, for the
leading-order
error~\eqref{eq:66} we derive \(d_{0,y_\eps^1}^{\max}=\mathcal{O}(\eps^2)\) and
\(d_{0,y_0^1}^{\max}=\mathcal{O}(\eps)\), and for the second-order error~\eqref{eq:65} we obtain
\(dt_{2, y_\eps^1}^{\max}=\mathcal{O}(\eps^3)\) and
\(dt_{2, y_0^1, \bar{y}_2^1}^{\max}=\mathcal{O}(\eps^{3/2})\). The exact maximal step sizes are listed in Tables~\ref{tab:3} and~\ref{tab:4}
in Appendix~\ref{app:2}.

With these maximal step sizes, we can determine the average runtime for
simulating \(y_\eps^1\), \(y_0^1\) and
\(y_0^1 + \eps^2[\bar{y}_2^1]^\eps\) as depicted in Figures~\ref{fig:5} and~\ref{fig:6}. Most significantly, we
see that the derivation of the leading-order asymptotic expansion in Figure~\ref{fig:5} and the second-order
asymptotic expansion in Figure~\ref{fig:6} are up to two orders of magnitude faster
than the simulation of
\(y_\eps^1\) via~\eqref{eq:2}. This directly reflects the differences in the
maximal step sizes as explained
above. Moreover, Figure~\ref{fig:6} shows that the execution time for simulating
\(y_0^1\) as the
leading-order approximation and \(\bar{y}_2^1\) as the averaged second-order
correction are comparable. The
minimal difference can be explained by the evolution
equation~\eqref{eq:63}, which is more
complicated than in~\eqref{eq:62}, resulting in an increase of floating points
operations per time-step.

\begin{figure}[ht!]
    \captionsetup[subfigure]{width=0.89\textwidth, format=hang}
    \centering
    \begin{subfigure}{.5\textwidth}
        \centering
        \includegraphics[scale=0.42]{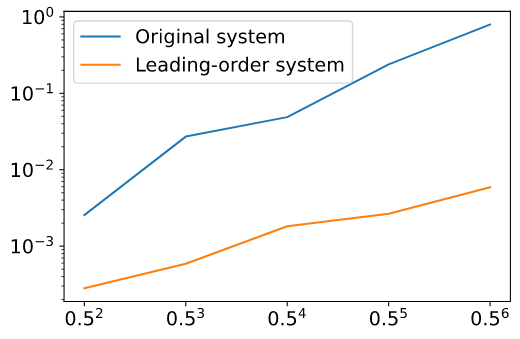}
        \caption{Average runtime in seconds to solve~\eqref{eq:2} and~\eqref{eq:62} for different values of
            \(\eps\).}
        \label{fig:5}
    \end{subfigure}%
    \begin{subfigure}{.5\textwidth}
        \centering
        \includegraphics[scale=0.42]{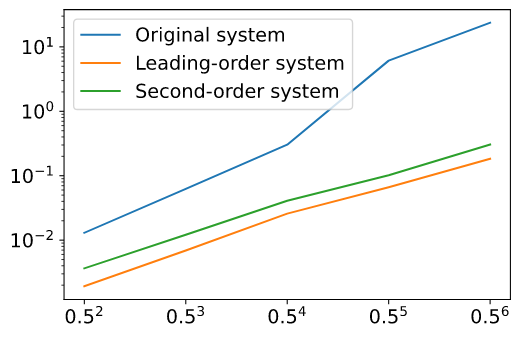}
        \caption{Average runtime in seconds to solve~\eqref{eq:2},~\eqref{eq:62} and~\eqref{eq:63} for different
            values of \(\eps\).}
        \label{fig:6}
    \end{subfigure}
    \caption{Total runtime to simulate \(y_\eps^1\) from~\eqref{eq:2},~\(y_0^1\) from~\eqref{eq:62} and
        \(\bar{y}^1_2\) from~\eqref{eq:63}. The exact computation times are given in Appendix~\ref{app:2}.}
    \label{fig:b}
\end{figure}

\subsection{Details of the implementation}

To compare the runtime of solving system~\eqref{eq:2} for \(y_\eps\)
with an accuracy that describes its
evolution up to second-order with both the leading-order approximation
(Theorem~\ref{thm:1}) and second-order
approximation (Theorem~\ref{thm:2}), one needs a numerical integration scheme that allows to solve each of the
three systems of differential equations~\eqref{eq:2},~\eqref{eq:62} and~\eqref{eq:63}.

We note that system~\eqref{eq:2} and~\eqref{eq:62} are given as two autonomous,
second-order systems of
differential equations. As such, a simple Velocity-Verlet algorithm, which is
frequently used in the numerical
integration of molecular dynamic systems, can be used to integrate these
systems. However,
system~\eqref{eq:63} is non-autonomous. Thus, a numerical integration scheme
from the family of Runge--Kutta
methods could be used to integrate each of the three systems of differential
equations. We notice that
system~\eqref{eq:63} resembles Hamilton's canonical equations. In particular,
the system is separable, which
allows for the implementation of efficient partitioned Runge--Kutta methods.
Furthermore, because of the
Hamiltonian structure of systems~\eqref{eq:2} and~\eqref{eq:62}, it seems natural to apply a
symplectic
partitioned Runge--Kutta method as an integration scheme for solving the three
systems. On that account, the
simulations in this article were derived on the basis of a second-order symplectic
partitioned Runge--Kutta method
which combines the following Lobatto IIIA (Table~\ref{tab:1}) and Lobatto IIIB
(Table~\ref{tab:2}) tableaux
(taken from~\cite[Chapter IV.5]{Hairer1991}). Sun~\cite{Sun93} proved (also
see~\cite{Jay1996}), that this
specific method is symplectic. A detailed description of the implementation can
be found in~\cite[Chapter 8
    and 14]{SanzSerna2018}.

\begin{table}[ht!]
    \centering
    \aboverulesep=0ex
    \belowrulesep=0ex
    \renewcommand{\arraystretch}{1.2}
    \parbox{.30\linewidth}{
        \centering
        \begin{tabular}{@{}c|cc@{}}
            \(0\) & \(0\)   & \(0\)   \\
            \(1\) & \(1/2\) & \(1/2\) \\
            \midrule
                  & \(1/2\) & \(1/2\)
        \end{tabular}
        \caption{Lobatto IIIA}
        \label{tab:1}
    }
    \parbox{.30\linewidth}{
        \centering
        \begin{tabular}{@{}c|cc@{}}
            \(0\) & \(1/2\) & \(0\)   \\
            \(1\) & \(1/2\) & \(0\)   \\
            \midrule
                  & \(1/2\) & \(1/2\)
        \end{tabular}
        \caption{Lobatto IIIB}
        \label{tab:2}
    }
\end{table}

\section{Conclusion}
\label{sec:10}
In this article, we studied a class of fast--slow Hamiltonian systems with energy
functions given by
\begin{equation*}
    E_\eps = \frac{1}{2}\vert \dot{y}_\eps \vert^2 + \frac{1}{2}\vert \dot{z}_\eps \vert^2 +V(y_\eps)+
    \frac{1}{2}\eps^{-2}\sum_{\lambda=1}^r \omega_\lambda^2(y_\eps)(z_\eps^\lambda)^2,
\end{equation*}
where \(y_\eps^j\) \((j=1, \ldots, n)\) are the slow and \(z_\eps^\lambda\) \((\lambda=1,\ldots, r)\)
are the
non-ergodic fast degrees of freedom and \(0<\eps<\eps_0<\infty\) is a parameter
characterising their typical
timescale ratio. A simplified version of one fast and one slow degree of freedom
was already studied
in~\cite{Klar2020}.

In the first part of this article, we introduced a transformation of the fast
degrees of freedom into
action--angle variables \((z_\eps, \dot{z}_\eps)\mapsto (\theta_\eps, \phi_\eps)\), which also required a
transformation of the momenta \(\dot{y}_\eps \mapsto p_\eps\). We derived subsequently the
second-order
asymptotic expansion of the transformed degrees of freedom. Furthermore, we
showed that these expansions can
be decomposed into terms that oscillate rapidly around zero and slow terms that
trace the average motion of
the expansion. While the rapidly oscillating terms are given explicitly, the
slow, average terms are given as
solutions to an inhomogeneous linear system of differential equations.

In the second part of this article, we studied the fast subsystem characterised
by the energy function
\begin{equation*}
    E_\eps^\perp = \frac{1}{2}\vert \dot{z}_\eps \vert^2 + \frac{1}{2}\eps^{-2}\sum_{\lambda=1}^r
    \omega_\lambda^2(y_\eps)(z_\eps^\lambda)^2.
\end{equation*}
Guided by the thermodynamic theory for ergodic Hamiltonian systems described by
Hertz, we regard the dynamics
of the fast degrees of freedom \(z_\eps^\lambda\) \((\lambda=1,\ldots, r)\) as a system that is
slowly
perturbed by the interaction with the slow degrees of freedom \(y_\eps^j\)
\((j=1,\ldots, n)\). Because the
fast subsystem is not ergodic, we followed along the lines
of~\cite{Berdichevsky} and replaced the
time-average in classical statistical mechanics by an ensemble-average and defined
otherwise, following Hertz, the
temperature \(T_\eps\), the entropy \(S_\eps\) and the external force
\(F_\eps\) of the fast subsystem.

Together with the second-order asymptotic expansion derived in the first part of
this article, we expanded
\(E_\eps^\perp\), \(T_\eps\), \(S_\eps\) and \(F_\eps\). After analysing
the leading-order asymptotic
expansion of these terms, we found that they obey an energy relation akin to the
first and second law of
thermodynamics (in the sense of Carath\'eodory)
\begin{equation*}
    d E_0^\perp = \sum_{j=1} ^n F_0^j dy_0^j + T_0 dS_0.
\end{equation*}
In contrast to the case studied in~\cite{Klar2020}, the entropy is not always
constant. Indeed, the entropy is
constant if and only if all weighted frequency ratios
\(\theta_\ast^\lambda \omega_\lambda(y_0)/\omega_\mu(y_0)\) \((\lambda, \mu = 1,\ldots, r)\) are constant. In
this case, the fast subsystem's dynamics is a rigid (quasi-)periodic motion. We
infer that, in the case of a
constant entropy, the fast subsystem can be regarded as an adiabatic
thermodynamic system, while in the case
of a non-constant entropy, it can be interpreted as a non-adiabatic thermodynamic
system.

Remarkably, for the second-order asymptotic expansion we find, for fixed \((y_0, p_0)\), a thermodynamic
energy relation of the form
\begin{equation*}
    d \bar{E}^\perp_2 = \sum_{j=1}^n F_0^j d \bar{y}_2^j + T_0 d \bar{S}_2.
\end{equation*}
With a second-order entropy expression \(\bar{S}_2\) that is not constant, we
can interpret the averaged
second-order asymptotic dynamics as a non-adiabatic thermodynamic process.

Finally, in the third part of this article, we analysed the model problem from a
numerical point of view. In
particular, we compared by means of a specific test model the quality of the
short- and long-term
approximation of \(y_\eps\) by the leading-order asymptotic expansion
\(y_0\) and by the second-order
asymptotic expansion \(y_0 + \eps^2 [\bar{y}_2]^\eps\). Most importantly we found that the time
interval for
which \(y_0+ \eps^2 [\bar{y}_2]^\eps\) ceases to be a viable approximation of \(y_\eps\) is
significantly
longer than for an approximation by \( y_0 \) alone. Moreover, we analysed
in a series of tests how the total
runtime of numerically computing \(y_\eps\), \(y_0\) and \(\bar{y}_2\)
depends on the value of the scale
parameter \(\eps\). We derived experimentally the largest step size so that
certain convergence properties are
still satisfied. In contrast to system~\eqref{eq:62} and~\eqref{eq:63}, which only require
the integration of
slow degrees of freedom and thus allow for choosing a relatively large step
size, the integration of
system~\eqref{eq:2} requires the choice of a relatively small step size to
accurately replicate small-scale
oscillations in the numerical solution. As a consequence, we found that the
runtime for simulating \(y_0\) and
\(\bar{y}_2\), and thus for simulating the second-order asymptotic expansion
\(y_0 + \eps^2 [\bar{y}_2]^\eps\), is up to two orders of magnitude faster than the simulation of
\(y_\eps\)
from the original system, for a similar accuracy.

The analysis of this article is restricted to a simple Hamiltonian. A
significant limitation of the current
analysis is the choice of the interaction potential \(U\)
in~\eqref{eq:U-ass}. The diagonal structure implies
that fast modes interact only indirectly, through slow modes as intermediaries,
via multiplicative
coupling. Such a coupling appears in the Caldeira--Leggett
Hamiltonian~\cite{Caldeira1981a}, with Lagrangian
\begin{equation*}
    \mathscr{L}(y, z, \dot{y}, \dot{z}) = \frac 1 2 M  \dot y^2 - V(y) + \frac 1 2 \sum_{\lambda=1}^r m_\lambda \dot z_\lambda^2 - \frac 1 2 \sum_{\lambda=1}^r m_\lambda c_\lambda z_\lambda^2
    - y \sum_{\lambda=1}^r \omega_\lambda z_\lambda\end{equation*}
with \(c_\lambda >0\). (Note in the framework of this article, the small parameter
would here not be the mass ratio
\(m_\lambda/M\), but the limit of increasing coupling \(\omega_\lambda\).) For direct
practical applications such as chemical
reactions, for example, the evolution of the butane molecule, an extension of the
results presented here to
more complex potentials is required. One of the key insights of this paper is
the existence of thermodynamic
potentials far from equilibrium, albeit in the special situation of diagonal, or
diagonalisable, interaction
potentials \(U\). If this observation holds in greater generality, then
this can lead to a better understanding
and better computational approaches away from equilibrium, such as a chain of
atoms linked to two reservoirs assigning the outer atoms different temperatures.
This is a matter of future investigation.

\subsubsection*{Acknowledgements}
We thank Ben Leimkuhler for stimulating discussions. MK is supported by a scholarship from the EPSRC Centre
for Doctoral Training in Statistical Applied Mathematics at Bath (SAMBa), under the project EP/L015684/1. JZ
gratefully acknowledges funding by a Royal Society Wolfson Research Merit Award. CR acknowledges support from NSF CAREER Award, United States, CMMI-2047506.

\appendix

\section{Hertz' approach to thermodynamics}
\label{app:1}

As mentioned earlier, the authors in~\cite{Klar2020} analyse a simplified
version of the model problem as
presented in Section~\eqref{sec:8} from a thermodynamic point of view. More
precisely, they focus on a system
of one fast and one slow degree of freedom, i.e.,~\(n=r=1\), whose fast
subsystem is by construction
ergodic. That analysis builds on the thermodynamic theory described by Hertz as
presented
in~\cite{Berdichevsky}. Because of the similarity of the two models, we will
focus on the differences in the
derivation of the temperature, entropy and external force as given
in~\eqref{eq:32} and refer the interested
reader for a detailed discussion to~\cite{Klar2020}.

\subsection{Introduction to thermodynamics for non-ergodic systems}

To illustrate the difference in the derivation of the thermodynamic quantities
in~\cite{Klar2020} and here, we
recall how the temperature is derived for the ergodic system studied
in~\cite{Klar2020} and explain why the
same approach fails for non-ergodic systems as studied in this article.

Let us start by analysing the dynamics of a generalised position \(z_\eps\in \R^r\)
and momentum
\(\zeta_\eps\in \R^r\) governed by a Hamiltonian of the form
\begin{equation*}
    H_\eps^\perp(z_\eps, \zeta_\eps;y_\eps) \coloneqq \sum_{\lambda=1}^r H_\eps^\lambda(z_\eps, \zeta_\eps; y_\eps),
    \qquad \text{where} \qquad H_\eps^\lambda(z_\eps, \zeta_\eps; y_\eps) \coloneqq \frac{1}{2} (\zeta_\eps^\lambda)^2 +
    \frac{1}{2}\omega_\lambda^2(y_\eps) (z_\eps^\lambda)^2, \qquad \lambda=1,\ldots, r,\end{equation*}
and \(y_\eps(t)=y(\eps t)\in \R^n\) are slow external parameters with \(\dot{y}_\eps=
\mathcal{O}(\eps)\). This setting
of a Hamiltonian system which is slowly perturbed by an external parameter
is fundamental in the thermodynamic formulation derived by Hertz. For
\(\eps=0\), the unperturbed Hamiltonian
is given by
\begin{equation*}
    H_0^\perp(z_0, \zeta_0;y_0) = \sum_{\lambda=1}^r H_0^\lambda(z_0, \zeta_0; y_0), \qquad \text{where}
    \qquad H_0^\lambda(z_0, \zeta_0; y_0) = \frac{1}{2} (\zeta_0^\lambda)^2 + \frac{1}{2}\omega_\lambda^2(y_0)
    ( z_0^\lambda)^2, \qquad \lambda=1,\ldots, r,
\end{equation*}
and \(y_0(t)=y(0)\equiv y_\ast\). With initial values of the form \(z_0^\lambda(0) = 0\) and
\(\zeta_0^\lambda(0) = \sqrt{2E_\ast^\lambda}\) the solutions to the corresponding Hamilton's equations are
then given for \(\lambda = 1,\ldots, r\) by
\begin{equation}
    \label{eq:41}
    z_0^\lambda(t) = \sqrt{\frac{2 E^\lambda_\ast}{\omega^2_\lambda(y_\ast)}}\sin\left(\omega_\lambda(y_\ast)t \right),
    \qquad \zeta^\lambda_0(t) = \sqrt{2E^\lambda_\ast}\cos\left( \omega_\lambda(y_\ast)t \right).
\end{equation}
Moreover, we define the constant total energy
\begin{equation*}
    E_\ast^\perp \coloneqq \sum_{\lambda=1}^r E_\ast^\lambda, \qquad \text{where} \qquad E_\ast^\lambda
    \coloneqq \frac{1}{2}( \zeta_0^\lambda)^2 + \frac{1}{2}\omega_\lambda^2(y_\ast)( z_0^\lambda)^2,
    \qquad \lambda=1,\ldots, r.
\end{equation*}
If \(r=1\), the trajectory of \((z_0, \zeta_0)\) covers the entire energy
surface
\(\{(z_0, \zeta_0)\in \R^2\colon H_0^\perp(z_0, \zeta_0;y_0)=E_\ast^\perp\}\). Hence, the system is
ergodic. In this case, which corresponds to the model studied
in~\cite{Klar2020}, the temperature in thermal
equilibrium is defined via the time average, indicated by angle brackets
\(\left\langle \cdot \right\rangle \), of twice the kinetic energy.  More precisely, we obtain
\begin{equation*}
    T_\lambda(E_\ast^\lambda, y_\ast)\coloneqq \left\langle \zeta_0^\lambda \frac{\partial H_0^\lambda}{\partial
        \zeta_0^\lambda} \right\rangle = \lim_{\theta\to\infty} \frac{1}{\theta}
    \int_0^\theta 2E_\ast^\lambda \cos^2(\omega_\lambda(y_\ast)t)\ud t  = E_\ast^\lambda,
\end{equation*}
which is unique in the case \(r=1\).

However, if \(r>1\), the energies \(E_\ast^\lambda\) \((\lambda=1,\ldots, r)\) form
distinct integrals of
motion. This implies that the system is non-ergodic. A na\"\i ve application of
the definition of temperature
above results in distinct temperature expressions that are unsuitable to
describe the thermodynamic state of
the whole system, because their values are in general path-dependent,
i.e.,~\(T_\lambda = E_\ast^\lambda\neq E_\ast^\mu = T_\mu\) for \(\lambda\neq \mu\) \((\lambda,\mu=1,\ldots,r)\). Therefore, we define as
in~\cite{Berdichevsky} the temperature for non-ergodic systems via the
ensemble-average. This gives a unique measure for the thermodynamic state of the
whole system.

\subsubsection{The Birkhoff--Khinchin theorem for non-ergodic systems}

A suitable expression for the temperature, which provides a unique measure for
the whole system, can be derived
if, in addition to averaging with respect to time, one averages with respect to
all uniformly distributed
initial values on the energy surface, making the temperature path-independent.
This ensemble average allows us
to define a temperature expression as a measure of the average kinetic motion of the whole
system.  We
follow~\cite{Berdichevsky} for the definition of the ensemble average and its
application to Hamiltonian
systems. Let \(x=x(t,x_0)\) be the parametric form of the trajectory in
phase-space starting at the point
\(x_0\). Then, the average value of some function \(\phi\) with
respect to any phase trajectory \(x(t)\),
i.e.,
\begin{equation*}
    \left\langle \phi(x) \right\rangle = \lim_{\theta\to\infty}\frac{1}{\theta}\int_0^\theta \phi(x(t,x_0))\ud t,
\end{equation*}
depends, in general, on \(x_0\). An ensemble of systems is given by
varying initial data \(x_0\), independent
and identically distributed over the phase region \(E\leq H(x)\leq E+\Delta E\). The probability
density of
\(x_0\) in this region is constant and is equal to \(\left( \Gamma(E+\Delta E)-\Gamma(E) \right)^{-1}\).

The ensemble average, \(E.A.\left\langle \phi \right\rangle \), of the function \(\phi\) is defined
by
\begin{equation*}
    E.A.\left\langle \phi \right\rangle \coloneqq\lim_{\Delta E\to 0}\frac{1}{\Gamma(E+\Delta E)
        -\Gamma(E)}\int_{E\leq H(x_0)\leq E+\Delta E}  \lim_{\theta\to \infty}\frac{1}{\theta}\int_0^\theta
    \phi(x(t,x_0))\ud t \ud x_0.
\end{equation*}
Suppose that the order of calculation of the integral over \(x_0\) and
\(\lim_{\theta\to \infty}\) can be
changed. Then
\begin{equation}
    \label{eq:40}
    E.A.\left\langle \phi \right\rangle =\lim_{\Delta E\to 0}\frac{1}{\Gamma(E+\Delta E)-\Gamma(E)}
    \lim_{\theta\to \infty}\frac{1}{\theta}\int_0^\theta \int_{E\leq H(x_0)\leq E+\Delta E}  \phi(x(t,x_0))\ud x_0\ud t.
\end{equation}
The region \(E\leq H(x_0)\leq E+\Delta E\) is invariant under the action of the phase flow
\(x(t,x_0)\). Hence, in calculating the integral over \(x_0\)
in~\eqref{eq:40}, one can make a change of the
variables \(x_0\mapsto x\). Since the determinant of this transformation is
\(1\) by Liouville's theorem, we can write
\begin{equation*}
    \int_{E\leq H(x_0)\leq E+\Delta E} \phi(x(t,x_0))\ud x_0 = \int_{E\leq H(x)\leq E+\Delta E}\phi(x)\ud x.
\end{equation*}
Thus, the integral does not depend on time. For small \(\Delta E\), this
integral is given by
\begin{equation*}
    \int_{E\leq H(x)\leq E+\Delta E} \phi(x)\ud x \approx \Delta E\int \phi(x)\frac{\ud \sigma}{\left\vert
        \nabla H \right\vert}.
\end{equation*}
Therefore, we arrive at an ``analogous'' version of the Birkhoff--Khinchin
theorem: for any Hamiltonian system
\begin{equation}
    \label{eq:50}
    E.A.\left\langle \phi \right\rangle =\frac{\int_\Sigma \phi(x)\frac{\ud \sigma}{\left\vert \nabla H
            \right\vert}}{\int_\Sigma \frac{\ud \sigma}{\left\vert \nabla H \right\vert}}.
\end{equation}
This version of the Birkhoff--Khinchin theorem reflects the ``average'' (with
respect to initial data)
behaviour of non-ergodic Hamiltonian systems and is thus used to define the
temperature in non-ergodic
systems.

\subsection{Derivation of thermodynamic relations in non-ergodic systems}

As we saw in the previous section, the ensemble average of a function can be
derived from the equality
\begin{equation*}
    E.A.\left\langle \phi \right\rangle =\frac{\int_\Sigma \phi(x)\frac{\ud \sigma}{\left\vert
            \nabla H \right\vert}}{\int_\Sigma \frac{\ud \sigma}{\left\vert \nabla H \right\vert}},
\end{equation*}
where
\(\Sigma= \{ (z_0, \zeta_0) \in \mathbb{R}^{2r}\colon H^\perp_0(z_0, \zeta_0; y_\ast)= E^\perp_\ast \}\), \(\ud \sigma\) is a surface element on the energy surface and
\begin{equation*}
    \vert \nabla H^\perp_0 \vert =  \left[ \sum_{\lambda=1}^r\left( \frac{\partial H^\perp_0}{\partial
            \zeta^\lambda_0} \right)^2+ \left( \frac{\partial H^\perp_0}{\partial z^\lambda_0} \right)^2\right]^{1/2}.
\end{equation*}
The \emph{temperature} for non-ergodic Hamiltonian systems is defined via the
ensemble average by
\begin{equation}
    \label{eq:45}
    T(E^\perp_\ast, y_\ast)\coloneqq E.A.\left\langle \zeta^\lambda_0 \frac{\partial H^\perp_0(z_0,
        \zeta_0; y_\ast)}{\partial \zeta^\lambda_0} \right\rangle
    = \frac{\displaystyle \int_{\Sigma} \zeta^\lambda_0 \frac{\partial H^\perp_0}{\partial \zeta^\lambda_0}\dfrac{\ud
            \sigma}{\vert \nabla H^\perp_0 \vert }}
    {\displaystyle \int_{\Sigma} \frac{\ud \sigma}{\vert \nabla H^\perp_0 \vert}}.
\end{equation}
The numerator can be evaluated by noting that \(\partial H^\perp_0 / \partial \zeta^\lambda_0\) is the
\(\lambda\)th component of the vector \(\nabla H^\perp_0\) and hence
\begin{equation*}
    n^\lambda_\zeta \coloneqq \frac{\partial H^\perp_0/\partial \zeta^\lambda_0}{\vert \nabla H^\perp_0 \vert }
\end{equation*}
is the \(\lambda\)th component of the outer unit vector
\(n = \nabla H^\perp_0 / \vert \nabla H^\perp_0 \vert \) on the energy surface. Therefore, we can write the
numerator in the form
\begin{equation}
    \label{eq:43}
    \int_{\Sigma} \zeta^\lambda_0 \frac{\partial H^\perp_0}{\partial \zeta^\lambda_0}\dfrac{\ud \sigma}{\vert \nabla H^\perp_0 \vert }
    = \int_{\Sigma}\zeta^\lambda_0 n^\lambda_\zeta \ud \sigma = \int_{H^\perp_0(z_0, \zeta_0;y_\ast)\leq E^\perp_\ast} \ud^n (z_0, \zeta_0)
    \eqqcolon \Gamma(E^\perp_\ast, y_\ast) ,
\end{equation}
which follows from Gauss' theorem, where \(\Gamma(E^\perp_\ast, y_\ast)\) is the phase-space volume
enclosed
by the trajectories of~\eqref{eq:41}. To derive the denominator
in~\eqref{eq:45}, we calculate the derivative
of \(\Gamma(E^\perp_\ast, y_\ast)\) with respect to \(E^\perp_\ast\) and find
\begin{equation*}
    \Gamma(E^\perp_\ast + \Delta E^\perp_\ast, y_\ast)-\Gamma(E^\perp_\ast, y_\ast)
    = \int_{E^\perp_\ast \leq H^\perp_0(z_0, \zeta_0; y_\ast) \leq E^\perp_\ast + \Delta E^\perp_\ast}\ud^n (z_0, \zeta_0)
    \approx \int_{H^\perp_0(z_0, \zeta_0;y_\ast)=E^\perp_\ast} \Delta n \ud \sigma,
\end{equation*}
where \(\Delta n\) is the distance between the energy surface \(H^\perp_0 (z_0 + n_z \Delta n, \zeta_0 + n_\zeta \Delta n; y_\ast)=E^\perp_\ast + \Delta E^\perp_\ast\) and
\(H^\perp_0(z_0, \zeta_0;y_\ast) = E^\perp_\ast\). A
Taylor expansion gives \(\Delta n = \Delta E^\perp_\ast / \vert \nabla H^\perp_0 \vert \) and hence
\begin{equation}
    \label{eq:44}
    \frac{\partial \Gamma(E^\perp_\ast, y_\ast)}{\partial E^\perp_\ast}
    = \int_{H^\perp_0(z_0, \zeta_0;y_\ast)= E^\perp_\ast}\frac{\ud  \sigma}{\vert \nabla H^\perp_0 \vert }.
\end{equation}
Combining Equations~\eqref{eq:45}--\eqref{eq:44}, the temperature \(T\) can
thus be expressed in terms of the
phase-space volume \(\Gamma(E^\perp_\ast,y_\ast)\):
\begin{equation}
    \label{eq:46}
    T(E^\perp_\ast, y_\ast) = \frac{\Gamma(E^\perp_\ast,y_\ast)}{\partial \Gamma(E^\perp_\ast,y_\ast)/\partial E^\perp_\ast}.
\end{equation}
Similar to~\cite{Klar2020} we integrate~\eqref{eq:46} with respect to
\(E^\perp_\ast\) and obtain for the
entropy
\begin{equation*}
    S(E^\perp_\ast, y_\ast) = \log \left( \Gamma(E^\perp_\ast, y_\ast) \right) + f(y_\ast),
\end{equation*}
where \(f(y_\ast)\) is a constant of integration with respect to \(E_\ast^\perp\).
To find the dependence of
\(S\) on \(y_\ast\) we follow again the derivation presented
in~\cite{Klar2020}. Using~\eqref{eq:50}, we
calculate for \(j=1,\ldots, n\) the external force
\begin{equation}
    \label{eq:48}
    F_j(E^\perp_\ast, y_\ast) =E.A. \left\langle \frac{\partial H^\perp_0(z_0, \zeta_0; y_\ast)}{\partial y^j_\ast} \right\rangle
    = \frac{\displaystyle \int_{\Sigma} \frac{\partial H^\perp_0}{\partial y^j_\ast}\dfrac{\ud \sigma}{\vert \nabla H^\perp_0 \vert }}
    {\displaystyle \int_{\Sigma} \frac{\ud \sigma}{\vert \nabla H^\perp_0 \vert }}.
\end{equation}

For the numerator, we calculate the derivative of \(\Gamma(E^\perp_\ast, y_\ast)\) with respect to
\(y^j_\ast\). Similarly to before, we have
\begin{IEEEeqnarray*}{rCl}
    \Gamma(E^\perp_\ast, y_\ast + \Delta y_\ast)-\Gamma(E^\perp_\ast, y_\ast)
    &=& \int_{H^\perp_0(z_0, \zeta_0;y_\ast + \Delta y_\ast)\leq E^\perp_\ast}\ud^n (z_0, \zeta_0)
    - \int_{H^\perp_0(z_0, \zeta_0;y_\ast)\leq E^\perp_\ast}\ud^n (z_0, \zeta_0) \\
    &\approx&\int_{H^\perp_0(z_0, \zeta_0;y_\ast)= E^\perp_\ast} \Delta n \ud \sigma,
\end{IEEEeqnarray*}
where \(\Delta n\) indicates the distance between the energy surface
\(H^\perp_0 (z_0 + n_z \Delta n, \zeta_0 + n_\zeta \Delta n; y^j_\ast + \Delta y^j_\ast)= E^\perp_\ast\) and \(H^\perp_0(z_0, \zeta_0;y_\ast)= E^\perp_\ast\). A Taylor expansion gives
\begin{equation*}
    \Delta n = -\frac{1}{\vert \nabla H^\perp_0 \vert }\frac{\partial H^\perp_0}{\partial y^j_\ast} \Delta y^j_\ast
\end{equation*}
and we obtain
\begin{equation}
    \label{eq:49}
    \frac{\partial \Gamma(E^\perp_\ast, y_\ast)}{\partial y^j_\ast}
    = -\int_{H^\perp_0(z_0, \zeta_0;y_\ast)= E^\perp_\ast} \frac{\partial H^\perp_0}{\partial y^j_\ast}
    \frac{\ud \sigma}{\vert \nabla H^\perp_0 \vert }.
\end{equation}
Combining Equations~\eqref{eq:44},~\eqref{eq:48} and~\eqref{eq:49} we obtain
\begin{equation}
    \label{eq:54}
    F_j(E^\perp_\ast, y_\ast) = E.A.\left\langle \frac{\partial H^\perp_0}{\partial y^j_\ast} \right\rangle
    = -\frac{\partial \Gamma(E^\perp_\ast, y_\ast)/\partial y^j_\ast}{\partial \Gamma(E^\perp_\ast, y_\ast)/\partial E^\perp_\ast}.
\end{equation}
We thus find
\begin{equation}
    \label{eq:53}
    S(E^\perp_\ast, y_\ast) = \log \left( \Gamma(E^\perp_\ast, y_\ast) \right) + C.
\end{equation}
The constant \(C\) is chosen such that the entropy is dimensionless.
This is the key result of Hertz'
thermodynamic formulation: the explicit derivation of the entropy of a
Hamiltonian system under the influence
of a slowly varying parameter is (up to a constant) the logarithm of the
phase-space volume.

\subsection{Application to the model problem}

The analysis of the previous section reveals that thermodynamic properties of
Hamiltonian systems are intrinsically connected to the
phase-space volume.

In general, the set \(\{x\in \R^{d}: x^T\Sigma^{-1}x\leq R^2\}\), where
\(\Sigma =\mathrm{diag} (a_1^2, a_2^2,\ldots, a_d^2)\) with \(a_1, \ldots, a_d\in \R\), describes a hyperellipsoid in \(\R^{d}\). Its \(d\)-dimensional volume is given by
\begin{equation}
    \label{eq:52}
    \Gamma=\Gamma_d \vert \Sigma\vert^{1/2}R^d,
\end{equation}
where \(\Gamma_d\) is the volume of the \(d\)-dimensional hypersphere.

To calculate the phase-space volume for the model problem as presented in
Section~\ref{sec:8} note, that the
set \(\{ (z_0, \zeta_0) \in \mathbb{R}^{2r} \colon H^\perp_0(z_0, \zeta_0; y_\ast)=E^\perp_\ast \}\) with
\begin{equation}
    \label{eq:51}
    E^\perp_\ast = \sum_{\lambda=1}^r E_\ast^\lambda = \sum_{\lambda=1}^r \frac{1}{2}(\zeta_0^\lambda)^2
    + \frac{1}{2} \omega_\lambda^2(y_\ast)(z_0^\lambda)^2,
\end{equation}
describes a hyperellipsoid in \(\R^{2r}\). Equation~\eqref{eq:51} can be written
in the form \(E^\perp_\ast = x^T\Sigma^{-1}x\) with \(x = (z_0^1,z_0^2,\ldots,z_0^r,\zeta_0^1,\zeta_0^2, \ldots,\zeta_0^r)\) and
\begin{equation*}
    \Sigma = \mathrm{diag}\left( 2, 2,\ldots, 2,\frac{2}{\omega^2_1(y_\ast)},\frac{2}{\omega^2_2(y_\ast)},
    \ldots , \frac{2}{\omega^2_r(y_\ast)} \right).
\end{equation*}
Therefore, with \(d=2r\) and \(R^2=E_\ast^\perp\), the volume of the
hyperellipsoid~\eqref{eq:51} is according
to~\eqref{eq:52} given by
\begin{equation*}
    \Gamma(E_\ast^\perp, y_\ast)= \Gamma_{2r}\frac{\left( 2E_\ast^\perp \right)^r}{\prod_{\lambda=1}^r\omega_\lambda(y_\ast)}.
\end{equation*}
We reason by analogy that the \(\eps\)-dependent phase-space volume, characterised by
the energy of the fast subsystem
\begin{equation*}
    E_\eps^\perp = \sum_{\lambda=1}^r E_\eps^\lambda = \sum_{\lambda=1}^r \frac{1}{2}(\zeta_\eps^\lambda)^2
    + \frac{1}{2} {\eps}^{-2}\omega_\lambda^2(y_\eps)(z_\eps^\lambda)^2,
\end{equation*}
is given by
\begin{equation}
    \label{eq:55}
    \Gamma_\eps(E_\eps^\perp, y_\eps) = \eps^r \Gamma_{2r}
    \frac{\left( 2E_\eps^\perp \right)^r}{\prod_{\lambda=1}^r\omega_\lambda(y_\eps)}.
\end{equation}
We therefore define, provided that \(\Gamma_\eps(E^\perp_\eps, y_\eps) \neq 0\), the \emph{temperature},
\emph{normalised entropy} and \emph{external force}, in analogy
to~\eqref{eq:46},~\eqref{eq:54} and~\eqref{eq:53}, as
\begin{equation*}
    T_\eps(E_\eps^\perp, y_\eps) \coloneqq\frac{\Gamma_\eps(E^\perp_\eps,y_\eps)}{\partial
        \Gamma_\eps(E^\perp_\eps,y_\eps)/\partial E^\perp_\eps},\quad
    S_\eps(E_\eps^\perp, y_\eps) \coloneqq  \log \left( \Gamma_\eps(E^\perp_\eps, y_\eps) \right) +C_\eps,\quad
    F_\eps^j(E_\eps^\perp, y_\eps) \coloneqq  -\frac{\partial \Gamma_\eps(E^\perp_\eps,
        y_\eps)/\partial y^j_\eps}{\partial \Gamma_\eps(E^\perp_\eps, y_\eps)/\partial E^\perp_\eps},
\end{equation*}
which become with~\eqref{eq:55} and \(C_\eps=-\log((2\eps)^r \Gamma_{2r})\) for \(j=1,\ldots, n\)
\begin{equation*}
    T_\eps =\frac{1}{r}\sum_{\lambda=1}^r \theta_\eps^\lambda \omega_\lambda(y_\eps),\qquad
    S_\eps = \sum_{\lambda=1}^r \log \left( \frac{E_\eps^\perp}{\omega_\lambda(y_\eps)} \right),\qquad
    F_\eps^j =  T_\eps \sum_{\lambda = 1}^r D_j L_\eps^\lambda.
\end{equation*}

\section{Computation times for numerical simulations}
\label{app:2}

For completeness, we present in this section the total computation times corresponding to the maximal step sizes used in the simulations presented in this
article. Tables~\ref{tab:3} and~\ref{tab:4} illustrate in column \(y_\eps^1\)
the total runtime for simulations of
system~\eqref{eq:2} with respect to distinct values of \(\eps\) and a
corresponding maximal step size as
discussed in  Section~\ref{sec:2}. Similarly, the columns \(y_0^1\) and
\(\bar{y}_2^1\) indicate the total runtime for
simulating systems~\eqref{eq:62} and~\eqref{eq:63}. We recall that the maximal step size as
discussed in
Section~\ref{sec:2} is given for the leading-order approximation under the theoretical
global
error~\eqref{eq:66} by \(dt_{0, y_\eps^1}^{\max} = \mathcal{O}(\eps^2)\) and
\(dt_{0, y_0^1}^{\max} = \mathcal{O}(\eps)\), and for the second-order approximation under the theoretical
global error~\eqref{eq:65} by \(dt_{2, y_\eps^1}^{\max} = \mathcal{O}(\eps^3)\) and
\(dt_{2, y_0^1, \bar{y}_2^1}^{\max} = \mathcal{O}(\eps^{3/2})\). Note that we always chose identical step sizes for the derivation
of \(y_0^1\) and \(\bar{y}_2^1\).

The source code for the numerical integration scheme was written in Python
version~3.8.5. The simulations of
the systems~\eqref{eq:2},~\eqref{eq:62} and~\eqref{eq:63} as presented
in Tables~\ref{tab:3} and~\ref{tab:4} were performed on a single core Intel{\textregistered}
Core{\texttrademark}
i5-8250U CPU.

\begin{table}[ht!]
    \centering
    \begin{tabular}{@{}ccccc@{}}
        \toprule
                  & \multicolumn{4}{c}{Computation times (s) and maximal step-sizes}                                                                                                          \\ \cmidrule(r){2-5}
        \(\eps\)  & \phantom{00}\(y_\eps^1\)                                         & \phantom{00}\(dt_{0, y_\eps^1}^{\max}\) & \phantom{00}\(y_0^1\) & \phantom{00}\(dt_{0, y_0^1}^{\max}\) \\ \midrule
        \(0.5^2\) & \phantom{00}0.0026                                               & \phantom{00}\(1\times 10^{-2}\)         & \phantom{00}0.00028   & \phantom{00}\(6\times 10^{-2}\)      \\
        \(0.5^3\) & \phantom{00}0.0271                                               & \phantom{00}\(1\times 10^{-3}\)         & \phantom{00}0.00059   & \phantom{00}\(3\times 10^{-2}\)      \\
        \(0.5^4\) & \phantom{00}0.0488                                               & \phantom{00}\(5\times 10^{-4}\)         & \phantom{00}0.00182   & \phantom{00}\(1\times 10^{-2}\)      \\
        \(0.5^5\) & \phantom{00}0.2392                                               & \phantom{00}\(1\times 10^{-4}\)         & \phantom{00}0.00266   & \phantom{00}\(7\times 10^{-3}\)      \\
        \(0.5^6\) & \phantom{00}0.7945                                               & \phantom{00}\(3\times 10^{-5}\)         & \phantom{00}0.00590   & \phantom{00}\(3\times 10^{-3}\)      \\
        \bottomrule
    \end{tabular}
    \caption{Computation times in seconds for \(y_\eps^1\) and its leading-order asymptotic expansion \(y_0^1\) for
        maximally viable step-sizes that satisfy the theoretical global error~\eqref{eq:66}.}
    \label{tab:3}
\end{table}

\begin{table}[ht!]
    \centering
    \begin{tabular}{@{}cccccc@{}}
        \toprule
                  & \multicolumn{5}{l}{Computation times (s) and maximal step-sizes}                                                                                                    \\ \cmidrule(r){2-6}
        \(\eps\)  & \(y_\eps^1\)                                                     & \(dt_{2, y_\eps^1}^{\max}\) & \(y_0^1\) & \(\bar{y}_2^1\) & \(dt_{2, y_0^1,\bar{y}_2^1}^{\max}\) \\ \midrule
        \(0.5^2\) & \phantom{0}0.013                                                 & \(2\times 10^{-3}\)         & 0.0019    & 0.0037          & \(1\times 10^{-2}\)                  \\
        \(0.5^3\) & \phantom{0}0.062                                                 & \(4\times 10^{-4}\)         & 0.0069    & 0.0121          & \(3\times 10^{-3}\)                  \\
        \(0.5^4\) & \phantom{0}0.303                                                 & \(8\times 10^{-5}\)         & 0.0258    & 0.0408          & \(8\times 10^{-4}\)                  \\
        \(0.5^5\) & \phantom{0}6.122                                                 & \(4\times 10^{-6}\)         & 0.0663    & 0.1017          & \(3\times 10^{-4}\)                  \\
        \(0.5^6\) & 23.670                                                           & \(1\times 10^{-6}\)         & 0.1838    & 0.3041          & \(1\times 10^{-4}\)                  \\
        \bottomrule
    \end{tabular}
    \caption{Computation times in seconds for \(y_\eps^1\), its leading-order asymptotic expansion \(y_0^1\) and
        averaged second-order correction \(\bar{y}_2^1\) for maximally viable step-sizes that satisfy the
        theoretical global error~\eqref{eq:65}. The step-sizes for deriving \(y_0^1\) and \(\bar{y}_2^1\) are
        chosen the same.}
    \label{tab:4}
\end{table}

\printbibliography

\end{document}